\documentclass[a4paper]{scrartcl}

\usepackage[utf8]{inputenc}
\usepackage{latexsym}
\usepackage{amsmath}
\usepackage{amsthm}
\usepackage{amssymb}
\usepackage{mathtools}
\usepackage{graphicx}
\usepackage{ifthen}
\usepackage{calc}
\usepackage{stmaryrd}
\usepackage{enumitem}
\usepackage[dvipsnames]{xcolor}
\usepackage[colorlinks,linkcolor=RedViolet,urlcolor=NavyBlue,citecolor=ForestGreen]{hyperref}
\usepackage{alphalph}
\usepackage{BOONDOX-cal}

\usepackage{chemfig}

\usepackage[backend=bibtex,style=numeric,giveninits=true]{biblatex}
\addbibresource{sim.bib}

\renewbibmacro*{doi+eprint+url}{\iftoggle{bbx:url}
  {\iffieldundef{doi}{\usebibmacro{url+urldate}}{}}
  {}\newunit\newblock
  \iftoggle{bbx:eprint}
  {\usebibmacro{eprint}}
  {}\newunit\newblock
  \iftoggle{bbx:doi}
  {\printfield{doi}}
  {}}

\allowdisplaybreaks

\makeatletter

\newsavebox{\tempbox}
\renewcommand{\@makecaption}[2]{
  \vspace{10pt}
  \sbox{\tempbox}{\textbf{#1.} #2}
  \ifthenelse{\lengthtest{\wd\tempbox > \linewidth}}{
    \textbf{#1.} #2\par
  }{
    \begin{center}
      \textbf{#1.} #2
    \end{center}
  }
}

\makeatother

\numberwithin{equation}{section}

\numberwithin{figure}{section}

\newtheoremstyle{mythm}{}{}{\itshape}{}{\bfseries}{.}{.5em}{\thmname{#1}~\thmnumber{#2}\ifthenelse{\equal{\thmnote{#3}}{}}{}{~(\thmnote{#3})}}

\newtheoremstyle{mydefn}{}{}{\upshape}{}{\bfseries}{.}{.5em}{\thmname{#1}~\thmnumber{#2}\ifthenelse{\equal{\thmnote{#3}}{}}{}{~(\thmnote{#3})}}

\newtheoremstyle{myremark}{}{}{\upshape}{}{\itshape}{.}{.5em}{\thmname{#1}~\thmnumber{#2}\ifthenelse{\equal{\thmnote{#3}}{}}{}{~(\thmnote{#3})}}

\theoremstyle{mythm}
\newtheorem{theorem}{Theorem}[section]
\newtheorem{lemma}[theorem]{Lemma}
\newtheorem{proposition}[theorem]{Proposition}
\newtheorem{corollary}[theorem]{Corollary}
\newtheorem{fact}[theorem]{Fact}

\theoremstyle{mydefn}

\newtheorem{example}[theorem]{Example}

\theoremstyle{myremark}

\theoremstyle{mythm}

\newcommand{\uend}{\hfill$\lrcorner$}
\newcommand{\uende}{\eqno\lrcorner}

\newcounter{claimcounter}

\newlist{caselist}{description}{10}
\setlist[caselist]{font=\itshape\mdseries}

\setenumerate[1]{label=(\arabic*)}
\newlist{eroman}{enumerate}{2}
\setlist[eroman,1]{label=(\roman*)}
\setlist[eroman,2]{label=(\alph*)}
\newlist{ealph}{enumerate}{1}
\setlist[ealph]{label=(\Alph*)}

\newcounter{nlistcounter}

\usepackage{color}
\definecolor{blau}{RGB}{0,84,159}
\definecolor{hellblau}{RGB}{142,168,229}
\definecolor{petrol}{RGB}{0,97,101}
\definecolor{tuerkis}{RGB}{0,152,161}
\definecolor{gruen}{RGB}{87,171,39}
\definecolor{maigruen}{RGB}{189,205,0}
\definecolor{gelb}{RGB}{255,237,0}
\definecolor{orange}{RGB}{255,128,0}
\definecolor{magenta}{RGB}{227,0,102}
\definecolor{rot}{RGB}{204,7,30}
\definecolor{bordeaux}{RGB}{161,16,53}
\definecolor{violett}{RGB}{97,33,88}
\definecolor{lila}{RGB}{122,111,172}
\definecolor{grey}{gray}{0.7}
\definecolor{mittelblau}{RGB}{0,128,255}

\newcommand{\bigmid}{\mathrel{\big|}}
\newcommand{\Bigmid}{\mathrel{\Big|}}
\newcommand{\ceil}[1]{\left\lceil#1\right\rceil}
\newcommand{\floor}[1]{\left\lfloor#1\right\rfloor}
\newcommand{\angles}[1]{\left\langle#1\right\rangle}
\renewcommand{\tilde}{\widetilde}
\renewcommand{\hat}{\widehat}

\renewcommand{\vec}[1]{\boldsymbol{#1}}
\newcommand{\symdiff}{\mathbin{\triangle}}

\DeclareMathOperator{\rg}{rg}

\newcommand{\Fraisse}{Fra\"{\i}ss{\'e}}

\renewcommand{\phi}{\varphi}
\renewcommand{\epsilon}{\varepsilon}

\newcommand{\Nat}{{\mathbb N}}
\newcommand{\PNat}{{\mathbb N}_{>0}}
\newcommand{\Real}{{\mathbb R}}
\newcommand{\PReal}{\Real_{> 0}}
\newcommand{\NNReal}{\Real_{\ge 0}}

\newcommand{\Rat}{{\mathbb Q}}

\newcommand{\LL}{\textsf{\upshape L}}
\newcommand{\LC}{\textsf{\upshape C}}
\newcommand{\FO}{\textsf{\upshape FO}}

\newcommand{\CC}{{\mathcal C}}

\newcommand{\CF}{{\mathcal F}}
\newcommand{\CG}{{\mathcal G}}

\newcommand{\CT}{{\mathcal T}}

\newcommand{\CX}{{\mathcal X}}

\setlength{\marginparwidth}{2cm}
\usepackage[textwidth=2cm]{todonotes}

\usepackage{algorithmic}

\algsetup{linenodelimiter=.}

\usepackage{float}
\newfloat{algorithm}{ht}{alg}
\floatname{algorithm}{Algorithm}

\newcommand{\Ex}{\textup{E}}

\DeclareMathOperator{\Inj}{Inj}
\DeclareMathOperator{\Surj}{Surj}
\DeclareMathOperator{\Bij}{Bij}
\DeclareMathOperator{\Sym}{Sym}
\DeclareMathOperator{\Perm}{Perm}
\DeclareMathOperator{\DS}{DS}
\DeclareMathOperator{\Corr}{Corr}
\DeclareMathOperator{\Cp}{Cp}
\DeclareMathOperator{\dist}{dist}

\newcommand{\ged}{\delta_{\textup{ed}}}

\newcommand{\wl}[1]{\vec{wl}(#1)}

\newcommand{\VWL}{\mathbb V_{\textup{WL}}}
\newcommand{\KWL}{K_{\textup{WL}}}
\newcommand{\anglesWL}[1]{\angles{#1}_{\textup{WL}}}
\newcommand{\deltaWL}{\delta_{\textup{WL}}}

\newcommand{\VF}{\mathbb V_{\CF}}
\newcommand{\anglesF}[1]{\angles{#1}_{\CF}}
\newcommand{\deltaF}{\delta_{\CF}}
\newcommand{\dsamp}{\delta_{\textup{samp}}}

\DeclareMathOperator{\hd}{hd}
\DeclareMathOperator{\ed}{ed}
\DeclareMathOperator{\sd}{sd}
\DeclareMathOperator{\emb}{emb}
\DeclareMathOperator{\semb}{semb}

\begin{document}
\title{Some Thoughts on Graph Similarity\thanks{This is an expanded
    version of \cite{Grohe24}}}
\author{Martin Grohe\thanks{Funded by the European Union (ERC, SymSim,
101054974). Views and opinions expressed are however those of the
author(s) only and do not necessarily reflect those of the European
Union or the European Research Council. Neither the European Union
nor the granting authority can be held responsible for them.}}
\maketitle

\begin{abstract}
  We give an overview of different approaches to measuring the
  similarity of, or the distance between, two graphs, highlighting
  connections between these approaches. We also discuss the complexity
  of computing the distances.  
\end{abstract}

\section{Introduction}
Graphs, or networks, are basic models ubiquitous in science and
engineering.
They are used to describe a diverse range of objects and
processes, including, for example, chemical
compounds, social interaction, molecular interaction, and computational processes. To
understand and classify graph models, we need to compare graphs. Since
data and models cannot always guaranteed to be exact, it
is essential to understand what makes two graphs similar or
dissimilar. There are many different approaches to similarity, for
example, based on edit distance, common subgraphs, spectral
similarity, behavioural equivalence, or as a limit case isomorphism.
All of these are relevant in some
situations because different applications
have different demands to similarities.

Applications of graph
similarity occur in many different areas, among
them combinatorial optimisation, computational biology, computer vision, 
database systems, data mining, formal verification, and machine learning.
Traditional applications of isomorphism and similarity can be
described as \emph{graph matching} applications: real-world entities,
ranging from chemical molecules to human faces, are modelled by
graphs, and the goal is to match, or align them. Examples of this type
of application appear in chemical information systems, computer
vision, and database schema matching (see, e.g.,
\cite{confogsanven04,emmdehshi16}). Many of these applications do not
require exact isomorphisms. Instead, a relaxation to similarity is
often sufficient and even desirable for dealing with imprecise and
uncertain data. A second important form of application exploits
similarity within a single graph, or, in the limit case
symmetries of the graph, to
design more efficient algorithms for hard algorithmic problems. We may
use symmetries to prune search trees in backtracking algorithms (e.g.\ in SAT solving \cite{sak09}) or to reduce the size of
instances of hard algorithmic problems (e.g.\ in
optimisation~\cite{bodherjos13,GroheKMS14} or model
checking~\cite{claemejhasis98}).

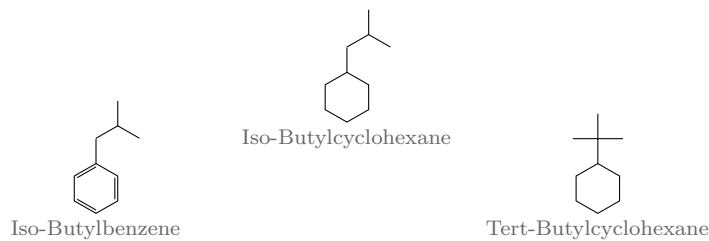
\begin{figure}
  \centering
  \begin{tikzpicture}[scale=1.1]

    \begin{scope}
      \draw
      (90:3mm)--(150:3mm)--(210:3mm)--(270:3mm)--(330:3mm)--(30:3mm)--cycle;
      \draw (90:2.6mm)--(150:2.6mm) (210:2.6mm)--(270:2.6mm)
      (330:2.6mm)--(30:2.6mm);
      \draw (90:3mm) -- (0,6mm);
      \draw[xshift=2.6mm,yshift=7.55mm] (0,0) -- (90:3mm) (0,0) --
      (210:3mm) (0,0) -- (330:3mm);
      \path (0,-0.5) node {\color{black!60}\scriptsize Iso-Butylbenzene}; 
    \end{scope}
    
    \begin{scope}[xshift=60mm]
      \draw 
      (90:3mm)--(150:3mm)--(210:3mm)--(270:3mm)--(330:3mm)--(30:3mm)--cycle; 
      \draw (90:3mm) -- (0,9mm) (-0.3,0.6) -- (0.3,0.6); 
      
      \path (0,-0.5) node {\color{black!60}\scriptsize Tert-Butylcyclohexane}; 
    \end{scope}
    
    \begin{scope}[xshift=30mm,yshift=11mm]
      \draw
      (90:3mm)--(150:3mm)--(210:3mm)--(270:3mm)--(330:3mm)--(30:3mm)--cycle;
      \draw (90:3mm) -- (0,6mm);
      \draw[xshift=2.6mm,yshift=7.55mm] (0,0) -- (90:3mm) (0,0) --
      (210:3mm) (0,0) -- (330:3mm);
      
      \path (0,-0.5) node {\color{black!60}\scriptsize Iso-Butylcyclohexane}; 
      
    \end{scope}
  \end{tikzpicture}
  \caption{Which of these three molecular graphs are most similar?}
  \label{fig:fuel}
\end{figure}

It is not clear what makes two graphs similar. Consider the three
molecular graphs in Figure~\ref{fig:fuel}. They all look somewhat
similar. Usually, we will study these graphs, or the molecules they represent,
in a specific application scenario, say, the design of synthetic
fuels. It turns out that the two bottom molecules are more similar
with respect to their relevant chemical properties. Can we design a
graph similarity measure in a way that it puts these two graphs closer
together than either of them to the third? We will not answer this
question here (in practice, we might try to learn such a similarity
measure from data), but the example illustrates that similarity may
very much depend on specific applications, and there is certainly not
a unique similarity measure suitable for all, or most,
applications.

The purpose of this paper is to sketch a theory of graph
similarity. We will study different ``principled'' approaches to graph
similarity, highlighting connections between these approaches. Central
to my thinking about similarity are two different views on what makes
two graphs similar. Under the \emph{operational view}, two graphs are
similar if one can easily be transformed into the other. Under the
\emph{declarative view}, two graphs are similar if they have similar
properties. An example of an operational similarity is \emph{edit
  distance}, measuring how many edges must be added and deleted from
one graph to obtain a graph isomorphic to the other. \emph{Graph
  kernels} used in machine learning provide examples for the
declarative view. In a nutshell, graph kernels collect numeric
features, for example, the number of triangles, and provide an inner
product on the feature space. The operational view is algorithmic in
nature, whereas the declarative view is rooted in logic and
semantics.  Ultimately, it will be a goal of our theory to establish
connections between the approaches, viewing them as dual in some sense.

Both views on similarity are also important from a practical
perspective. We need to have algorithms measuring similarity, and
often we not only need to measure how similar two graphs are but we
actually want to transform one to the other (think of an application
like database repairs). But we also need to have a semantical
understanding of what we are doing, that is, we want to explain
which features make two graphs similar.

\section{Preliminaries}

$\Nat$ denotes the set of natural numbers (nonnegative integers),
$\PNat$ the set of positive integers, $\Real$ the set of reals, and
$\NNReal$ the set of nonnegative reals. For every $n\in\Nat$, we let
$[n]\coloneqq\{1,\ldots,n\}$.

We
denote the power set of a set $V$ by $2^V$ and the set of all $k$-element
subsets of $V$ by $\binom{V}{k}$. For sets $V,W$, we denote the set of
all mappings from $V$ to $W$ by $W^V$, the set of all injective
mappings from $V$ to $W$ by $\Inj(V,W)$, the set of all surjective
mappings from $V$ to $W$ by $\Surj(V,W)$ and the set of all bijective
mappings from $V$ to $W$ by $\Bij(V,W)$. Moreover, we denote the set
of all
permutations of $V$ by $\Sym(V)$ (so $\Sym(V)=\Bij(V,V)$).

We assume an understanding of basic probability. We will use
Hoeffding's well-known concentration inequality in the following form
(see \cite[Theorem~4.12]{MitzenmacherU17}).

  \begin{fact}[Hoeffding's Inequality]\label{fact:hoeffding}
    Let $a,b\in\Real$ with $a<b$. Let $X_1,\ldots,X_n$ be a sequence of independent identically distributed
    random
    variables with $\rg(X_i)\subseteq[a,b]$, and let $X{\coloneqq}\sum_{i=1}^nX_i$. Then
    for all $\epsilon\in\PReal$:
    \[
      \Pr\Big(\big|X-\Ex(X)\big|\ge \epsilon n\Big)\le
      2\exp\left(-\frac{2\epsilon^2}{(b-a)^2}n\right).
    \]
  \end{fact}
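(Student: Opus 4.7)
The plan is to apply the classical Chernoff-style exponential moment method. First I would reduce to the centred case by setting $Y_i \coloneqq X_i - \Ex(X_i)$; these are still i.i.d.\ and have mean zero, and each lies in an interval of length at most $b-a$. Writing $S \coloneqq \sum_{i=1}^n Y_i$, it suffices to prove the one-sided tail bound $\Pr(S \ge \epsilon n) \le \exp(-2\epsilon^2 n/(b-a)^2)$: the symmetric argument applied to $-Y_i$ handles the lower tail, and a union bound then produces the extra factor of~$2$.

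Second, for any $t\in\PReal$, Markov's inequality applied to $e^{tS}$ together with independence yields
\[
  \Pr(S \ge \epsilon n) \;\le\; e^{-t\epsilon n}\,\Ex(e^{tS}) \;=\; e^{-t\epsilon n}\,\prod_{i=1}^{n}\Ex(e^{tY_i}).
\]
So the whole estimate hinges on controlling the moment generating function $\Ex(e^{tY})$ of a single mean-zero, bounded variable $Y$.

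The main obstacle is therefore Hoeffding's lemma: if $\Ex(Y)=0$ and $\rg(Y)\subseteq[a',b']$, then $\Ex(e^{tY}) \le \exp\!\big(t^2(b'-a')^2/8\big)$. I would prove it by exploiting the convexity of $y\mapsto e^{ty}$ to obtain the chord inequality $e^{ty}\le \frac{b'-y}{b'-a'}e^{ta'}+\frac{y-a'}{b'-a'}e^{tb'}$ on $[a',b']$; taking expectations and using $\Ex(Y)=0$ reduces the claim to showing that $\phi(u)\coloneqq -pu+\ln(1-p+pe^u)$, with $p\coloneqq -a'/(b'-a')\in[0,1]$ and $u\coloneqq t(b'-a')$, satisfies $\phi(u)\le u^2/8$. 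A direct computation gives $\phi(0)=\phi'(0)=0$, and $\phi''(u)$ equals the variance of a Bernoulli-type random variable with parameter in $[0,1]$, hence is bounded by $1/4$; integrating twice yields the claimed bound.

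Plugging this in, and noting that $b'-a'\le b-a$, we obtain
\[
  \Pr(S \ge \epsilon n)\;\le\; \exp\!\left(-t\epsilon n + \frac{n\,t^2 (b-a)^2}{8}\right).
\]
The right-hand side is minimised at $t = 4\epsilon/(b-a)^2$, producing exactly the exponent $-2\epsilon^2 n/(b-a)^2$. Symmetry in $\pm Y_i$ together with the union bound then supplies the factor~$2$, completing the proof. The only genuinely nontrivial ingredient is Hoeffding's lemma; all other steps are Markov's inequality and routine bookkeeping.
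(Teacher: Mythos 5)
Your proof is correct: the centring step, the Chernoff bound via Markov's inequality on $e^{tS}$, Hoeffding's lemma proved through convexity and the bound $\phi''(u)\le 1/4$, and the optimisation $t=4\epsilon/(b-a)^2$ all check out, and the union bound correctly accounts for the factor $2$. Note that the paper does not prove this statement at all — it is quoted as a known fact with a reference to Mitzenmacher and Upfal — and your argument is essentially the standard proof given in such textbook sources, so there is nothing in the paper to diverge from.
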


\subsection{Vectors and Matrices}
It will be convenient for us to index vectors and matrices by arbitrary
sets. Formally, a vector $\vec a\in X^I$ over some set $X$ is just a
mapping from $I$ to $X$, and its entries are $\vec a(i)$ for $i\in I$.
A matrix $A\in X^{I\times J}$ over 
$X$ is a
mapping from $I\times J$ to $X$. The vectors $A(i,\cdot)\in X^J$ are
the \emph{rows} of the matrix and the vectors $A(\cdot,j)\in X^I$ the
\emph{columns}. A \emph{permutation matrix} is a
matrix over $\{0,1\}$ that has exactly one $1$-entry in
every row and in every column. A permutation matrix
$P\in\{0,1\}^{I\times J}$ corresponds to a bijective mapping $\pi:I\to
J$ defined by letting $\pi(i)$ be the unique $j$ such that $P(i,j)=1$.

For a matrix $A\in \Real^{I\times I'}$ and bijections $\pi:I\to
J$, $\pi':I'\to J'$, we let
$A^{\pi,\pi'}\in X^{J\times J'}$ be the matrix with entries
$A^{\pi,\pi'}(j,j')\coloneqq A(\pi^{-1}(j),(\pi')^{-1}(j'))$. Then if $\pi,\pi'$ are the
permutations corresponding to the permutation matrices
$P\in \{0,1\}^{I\times J}$, $P'\in\{0,1\}^{I'\times J'}$, we have
\[
  A^{\pi,\pi'}=P^\top AP'.
\]
For $A\in\Real^{I\times I}$ and $\pi\in\Bij(I,J)$ we let $A^{\pi}\coloneqq A^{\pi,\pi}$.

\subsection{Graphs}
We use a standard graph notation based on \cite{Diestel16}. Graphs are
always finite and, unless explicitly stated otherwise, undirected and
simple. We denote the vertex set of graph $G$ by $V_G$, or just $V$ if
the graph is clear from the context, and similarly, we denote the edge
set by $E_G$ or just $E$. The
\emph{order} $|G|$ of a graph $G$ is the number of vertices,
that is, $|G|\coloneqq|V_G|$.  We denote
edges of a graph by $vw$ instead of $\{v,w\}$, stipulating
that $vw$ and $wv$ denote the same edge. The set of \emph{neighbours}
of a vertex $v$ is $N_G(v)\coloneqq\{w\mid vw\in E_G\}$, and the
\emph{degree} of $v$ is $\deg_G(v)\coloneqq|N_G(v)|$.
We denote the class of all graphs by
$\CG$. Moreover, for every $n\in\Nat$, we denote the class of all
graphs $G$ with vertex set $V_G=[n]$ by $\CG_n$, and we let $\CG_{\le n}\coloneqq\bigcup_{i=1}^n\CG_i$. 

We write
$G\cong H$ to denote that two graphs $G,H$ are isomorphic,
that is, there is a bijective mapping $\pi: V_G\to V_H$ such that
$vv'\in E_G\iff \pi(v)\pi(v')\in E_H$ for all $v,v'\in V_G$. The
\emph{adjacency matrix} of a graph $G=(V,E)$ is the matrix
$A_G\in\{0,1\}^{V\times V}$ with entries $A_G(v,v')=1$ if $vv'\in E$ and
$A_G(v,v')=0$ otherwise.
Observe that graphs
$G,H$ are isomorphic if and only if there is a bijection
$\pi:V_G\to V_H$ such that $A_G^\pi=A_H$.

\subsection{Metrics}
A \emph{pseudo-metric} on a set $X$ is a mapping $\delta:X\times
X\to\PReal$ such that
\begin{eroman}
\item $\delta(x,x)=0$ for all $x\in X$;
\item $\delta(x,y)=\delta(y,x)$ for all $x,y\in X$;
\item $\delta(x,y)\le\delta(x,z)+\delta(z,y)$ for all $x,y,z\in X$.
\end{eroman}
A pseudo-metric $\delta$ is a \emph{metric} if $\delta(x,y)>0$ for all
$x\neq y$.

We will mainly be interested in pseudo-metrics on graphs. One way to
obtain such metrics is by applying metrics for matrices to the adjacency
matrices of graphs. The most common matrix metrics are
derived from matrix norms. A \emph{pseudo-norm} on a real vector space
$\mathbb V$ is a mapping $\|\cdot\|:\mathbb V\to\PReal$ such that
$\|\vec 0\|=0$, $\|a\vec a\|=|a|\cdot\|\vec a\|$ and
$\|\vec a+\vec b\|\le\|\vec a\|+\|\vec b\|$ for all $\vec a,\vec
b\in\mathbb V$ and $a\in\Real$. A pseudo-norm $\|\cdot\|$ is a
\emph{norm} if $\|\vec a\|>0$ for all $\vec a\neq\vec0$.
From a (pseudo-)norm $\|\cdot\|$ on $\mathbb V$
we directly obtain a (pseudo-)metric $\delta$ on $\mathbb V$ by
letting
\[
  \delta(\vec a,\vec b)\coloneqq \|\vec a-\vec b\|.
\]
The most important norms on
a finite-dimensional real vector space $\Real^I$ are the $\ell_p$ norms
defined by $\|\vec a\|_p\coloneq\left(\sum_{i\in I} |\vec
  a(i)|^p\right)^{1/p}$, for $1\le p\le\infty$. In the limit $p=\infty$, this yields $\|\vec a\|_\infty\coloneq\max_{i\in I} |\vec
  a(i)|$. We call a matrix norm $\|\cdot\|$ \emph{permutation invariant} if for
all matrices $A\in\Real^{I\times J}$ and all permutations
$\pi\in\Sym(I),\rho\in\Sym(J)$ it holds that
$\|A^{\pi,\rho}\|=\|A\|$. All matrix norms we
will consider in this article are permutation invariant. 

The most important matrix norms are defined as
follows, for a matrix $A\in\Real^{I\times J}$.
\begin{itemize}
\item \emph{Entrywise $p$-norm.} 
  \[
    \|A\|_{(p)}\coloneqq\left(\sum_{i\in I,j\in J}
      |A(i,j)|^p\right)^{1/p}
  \]
  for $1\le p<\infty$, and $\|A\|_{(\infty)}\coloneqq \max_{i\in I,j\in J}
      |A(i,j)|$ for $p=\infty$.
  The best-known of these norms is the
  \emph{Frobenius norm} $\|A\|_F\coloneqq\|A\|_{(2)}$.
\item \emph{Operator $p$-norm.}
  \[
    \|A\|_p\coloneqq\sup_{\vec a\in\Real^J\setminus\{\vec 0\}}\frac{\|A\vec
      a\|_p}{\|\vec a\|_p}.
  \]
  The best-known of these norms is the
  \emph{spectral norm} $\|A\|_2$.
\item \emph{Cut norm.}
  \[
    \|A\|_\square\coloneqq\max_{S\subseteq I,T\subseteq
      J}\left|\sum_{i\in S,j\in T}A(i,j)\right|.
  \]
\end{itemize}
Note that
\begin{align}
    \label{eq:1}
    \|A\|_{(1)}=\sum_{i\in I,j\in J} |A(i,j)|&=\max_{S\subseteq
                                               I,T\subseteq
                                               J}\sum_{i\in S,j\in
                                               T}|A(i,j)|\\
  \notag
  &\ge
  \max_{S\subseteq I,T\subseteq J}\bigg|\sum_{i\in S,j\in
      T}A(i,j)\bigg|=\|A\|_\square.
\end{align}
It is also worth noting that
\[
  \|A\|_\square\le\sup_{\vec a\in\Real^J\setminus\{\vec
    0\}}\frac{\|A\vec a\|_1}{\|\vec a\|_\infty}\le 4\|A\|_\square,
\]
as observed by Frieze and Kannan~\cite{FriezeK99}.

\subsection{Similarity}

As opposed to ``distance'', formalised by the notion of a metric, there is no universally agreed-upon definition
of ``similarity''. Intuitively, similarity means proximity with respect to some
metric, that is, some kind of inverse of a metric. A standard way of
turning a (pseudo-)metric $\delta$ on a set $X$ into a \emph{similarity
  measure} $\sigma$ on $X$ is by letting
\begin{equation}\label{eq:sim}
  \sigma(x,y)\coloneqq\exp\big(-c\cdot\delta(x,y)\big)
\end{equation}
for some constant $c>0$. Observe that $0<\sigma(x,y)\le1$, with $\sigma(x,x)=1$
(and $\sigma(x,y)<1$ for all $x\neq y$ if $\delta$ is a metric and not just a
pseudo-metric) and $\sigma(x,y)\to0$ as $\delta(x,y)\to\infty$. The
parameter $c$ controls how fast this convergence is.

For a normalised metric $\delta$ with range in $[0,1]$, it may be
easiest to define a similarity measure $\sigma$ by simply letting
$\sigma(x,y)\coloneqq 1-\delta(x,y)$.

For the rest of this paper, we work with metrics even though
intuitively, we are interested in similarity. We keep in mind that
\eqref{eq:sim} gives us a generic way of turning a metric into a
similarity measure.

\subsection{Graph Metrics}

A \emph{graph metric} is a pseudo-metric
$\delta:\CG\times\CG\to\NNReal$ on the class $\CG$ of all
graphs that is \emph{isomorphism invariant}, which means that
$\delta(G,H)=\delta(G',H')$ whenever $G\cong G',H\cong H'$. Note that a
graph metric $\delta$ always satisfies $\delta(G,H)=0$ if $G$ and $H$
are isomorphic. Also note that we do
not require $\delta(G,H)\neq 0$ if $G\not\cong H$. So graph metrics
are always pseudo-metrics even if we factor out
isomorphism.\footnote{We could make a distinction between \emph{graph
    metrics} $\delta$ satisfying $\delta(G,H)=0\Leftrightarrow G\cong
  H$ and \emph{graph pseudo-metrics} only satisfying $\delta(G,H)=0\Leftarrow G\cong
  H$. However, this distinction is not important in this article.}

Note that, trivially, we can define a graph metric $\delta_{\cong}$ by
\begin{equation}
  \label{eq:2}
    \delta_{\cong}(G,H)\coloneqq
  \begin{cases}
    0&\text{if }G\cong H,\\
    1&\text{otherwise}.
  \end{cases}
\end{equation}
We call $\delta_{\cong}$ the \emph{isomorphism distance}.

\section{Operational Distances}
Recall that ``operational distances'' on graphs are based on the
intuition that two graphs should be similar, or close with respect to
the metric, if they can easily be transformed into each
other. \emph{Graph edit distance}, to be discussed in
Section~\ref{sec:ged}, is the prototypical operational distance.

It will be convenient to restrict the operational metrics to graphs
of the same order first. It is not entirely obvious how to generalise
the definitions to graphs of different orders; we will discuss this in
Section~\ref{sec:sizes}. \emph{Throughout this section, we will
  consider graphs $G=(V,E_G)$ and
  $H=(W,E_H)$. In Sections \ref{sec:ged} and \ref{sec:matrix}, these
  graphs will have the same order $n\coloneqq|V|=|W|$. After that,
  they will have orders $m\coloneqq|V|,n\coloneqq|W|$.}

\subsection{Graph Edit Distance}
\label{sec:ged}
We define the \emph{edit distance} $\ged(G,H)$ between $G$ and $H$ to
be the minimum number of edges to be added or removed from $G$ to
obtain a graph isomorphic to $H$, that is:
\begin{equation}
  \label{eq:3}
    \ged(G,H)\coloneqq\min\Big\{ |D|\Bigmid D\subseteq{\textstyle\binom{V}{2}}\text{
    such that }\big(V,E_G\symdiff D\big)\cong H\Big\}. 
\end{equation}
Here $\symdiff$ denotes symmetric difference. It is easy to verify
that $\ged$ is indeed a graph metric.

We can express the graph edit distance in terms of common
subgraphs. We call a graph $F$ a
\emph{common subgraph} of $G$ and $H$ if there is a subgraph
$F'\subseteq G$ and a subgraph $F''\subseteq H$ such that $F\cong
F'\cong F''$. We call $F$ a \emph{maximum common subgraph} if $|E_F|$
is maximum. Observe that if $F$ is a maximum common subgraph of $G$ and $H$ we have
\[
  \ged(G,H)=|E_G|+|E_H|-2|E_F|.
\]
A third way of expressing graph edit distance is in terms of the
adjacency matrices of the graphs: it is easy to see that 
\begin{equation}
  \label{eq:4}
  \ged(G,H)=\frac{1}{2}\min_{\pi\in\Bij(V,W)}\big\|A_G^\pi-A_H\big\|_{(1)}=\frac{1}{2}\min_{\pi\in\Bij(V,W)}\big\|A_G^\pi-A_H\big\|_{F}^2 .
\end{equation}
The second equality is based on the simple observation that for all
matrices over $\{-1,0,1\}$ the entrywise $1$-norm coincides with
the squared Frobenius norm. We need the factor $\frac{1}{2}$ because
every edge appears twice in the matrix, once in every direction.

\subsection{Graph Distances based on Matrix Norms}
\label{sec:matrix}
The characterisation \eqref{eq:4} of $\ged$ in terms of the entrywise 1-norm or
the Frobenius norm prompts the question of what happens if we take other
matrix norms to define graph metrics. Such graph metrics have been
studied in \cite{GervensG22}. For every permutation-invariant
matrix norm $\|\cdot\|$ we can define a graph metric
$\delta_{\|\cdot\|}$ by
\[
  \delta_{\|\cdot\|}(G,H)=\min_{\pi\in\Bij(V,W)}\big\|A_G^\pi-A_H\big\|.
\]
Note that we need the matrix norm to be permutation invariant, because
otherwise $\delta_{\|\cdot\|}$ might not be symmetric. 

We think of the bijection $\pi:V\to W$ as an \emph{alignment} between
the two graphs. It transforms the graph $G$ into a graph that is as
similar as possible to $H$ with respect to the metric induced by the
matrix norm $\|\cdot\|$. From this perspective it is clear why we
consider such graph metrics based on matrix norms as ``operational''.

We denote
$\delta_{\|\cdot\|}$ for $\|\cdot\|$ being $\|\cdot\|_{(p)}$,
$\|\cdot\|_{p}$, $\|\cdot\|_{\square}$ by $\delta_{(p)}$, $\delta_p$,
$\delta_\square$, respectively.

We have already observed in \eqref{eq:4} that $\delta_{(1)}=2\ged$,
that is, $\delta_{(1)}$ is essentially the
graph edit distance $\ged$. In the following, we prefer to work with
$\delta_{(1)}$ instead of $\ged$, but (sloppily) refer to
$\delta_{(1)}$ as ``graph edit distance'' to convey the intuition.
Also note that $\delta_{(\infty)}$ is the
isomorphism distance $\delta_{\cong}$ introduced in \eqref{eq:2}. It
was shown in \cite{GervensG22} that the graph metrics $\delta_1$,
$\delta_\infty$ coincide and have an
interesting interpretation as a ``local edit distance''. To make this
precise, for a set 
$D\subseteq\binom{V}{2}$ we define $\deg(D)$ to be the maximum degree of the graph
$(V,D)$.

\begin{proposition}[Gervens and Grohe~\cite{GervensG22}]
  \label{prop:local-ed}
  \begin{equation}
    \label{eq:5}
    \delta_1(G,H)=\delta_\infty(G,H)=\min\Big\{\deg(D)\Bigmid D\subseteq{\textstyle\binom{V}{2}}\text{
    such that }\big(V,E_G\symdiff D\big)\cong H\Big\}. 
\end{equation}
\end{proposition}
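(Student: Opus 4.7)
The plan is to show that both $\delta_1(G,H)$ and $\delta_\infty(G,H)$ equal the combinatorial minimum on the right-hand side of \eqref{eq:5}, by reading the matrix $A_G^\pi-A_H$ as a signed adjacency matrix of an ``edit graph''. The key observations are that this matrix is symmetric with entries in $\{-1,0,1\}$, and that the operator $1$- and $\infty$-norms of such a matrix both encode its maximum absolute row sum, which in turn is a vertex degree.

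First I would dispose of the equality $\delta_1=\delta_\infty$. For every $\pi\in\Bij(V,W)$, the matrix $M_\pi\coloneqq A_G^\pi-A_H$ is symmetric with zero diagonal and entries in $\{-1,0,1\}$. Since $\|M\|_1=\|M^\top\|_\infty$ holds for every matrix $M$, symmetry of $M_\pi$ yields $\|M_\pi\|_1=\|M_\pi\|_\infty$, and minimising over $\pi$ gives $\delta_1(G,H)=\delta_\infty(G,H)$. Next I would interpret $\|M_\pi\|_\infty$ combinatorially. Writing $G^\pi$ for the image of $G$ on vertex set $W$ under $\pi$, the entry $M_\pi(w,w')$ is nonzero exactly when $\{w,w'\}$ lies in precisely one of $E(G^\pi)$ and $E_H$. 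Hence the entrywise absolute value $|M_\pi|$ is the adjacency matrix of the edit graph $(W,E(G^\pi)\symdiff E_H)$, and the maximum absolute row sum formula for $\|\cdot\|_\infty$ gives
\[
  \|M_\pi\|_\infty=\max_{w\in W}\sum_{w'\in W}\bigl|M_\pi(w,w')\bigr|=\deg\bigl(E(G^\pi)\symdiff E_H\bigr).
\]

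Finally I would transport edit sets from $W$ back to $V$ along $\pi$. Given $\pi$, define $D_\pi\coloneqq\{vv'\mid \pi(v)\pi(v')\in E(G^\pi)\symdiff E_H\}\subseteq\binom{V}{2}$. Unfolding the definitions shows that $vv'\in E_G\symdiff D_\pi$ iff $\pi(v)\pi(v')\in E_H$, so $\pi$ is an isomorphism from $(V,E_G\symdiff D_\pi)$ to $H$, and since $\pi$ preserves vertex degrees, $\deg(D_\pi)=\deg(E(G^\pi)\symdiff E_H)$. Conversely, any admissible pair $(D,\sigma)$ consisting of a set $D\subseteq\binom{V}{2}$ and an isomorphism $\sigma:(V,E_G\symdiff D)\to H$ satisfies $D=D_\sigma$, so both minimisations range over the same family of values. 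Combining the three steps yields \eqref{eq:5}. I do not expect any real obstacle: the argument is essentially a change of viewpoint between the operator-norm and the degree-of-edit-graph perspectives, and the only slightly delicate point is the bookkeeping when moving edge sets between $V$ and $W$ along $\pi$, which is immediate from bijectivity.
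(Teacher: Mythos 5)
Your proposal is correct and follows essentially the same route as the paper: the equality $\delta_1=\delta_\infty$ comes from symmetry of $A_G^\pi-A_H$ together with the max-column/row-sum characterisations of the operator $1$- and $\infty$-norms, and the combinatorial identity is obtained by reading $A_G^\pi-A_H$ as the signed adjacency matrix of an edit graph whose maximum degree equals $\|A_G^\pi-A_H\|_\infty$. The only (cosmetic) difference is that you package the two inequalities of the paper's proof into a single exact correspondence $\pi\mapsto D_\pi$ between alignments and admissible edit sets.
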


This explains why we think of $\delta_1$ (and, equivalently,
$\delta_\infty$) as a ``local'' edit distance: compare
the right-hand side of \eqref{eq:5} with \eqref{eq:3}, the definition
of graph edit distance. Instead of the total number of edges in $D$
here we count the maximum number of edges incident with a single vertex.

To prove Proposition~\ref{prop:local-ed}, we use the following well-known
fact.

\begin{fact}\label{fact:operator}
  For every matrix $B\in\Real^{I\times J}$ we have
\begin{align}
  \label{eq:10}
  \|B\|_1&=\max_{j\in J}\sum_{i\in I}|B(i,j)|,\\
  \label{eq:11}
  \|B\|_\infty&=\max_{i\in I}\sum_{j\in J}|B(i,j)|.
\end{align}
\end{fact}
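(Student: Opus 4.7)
The plan is to verify each of the two identities by showing an upper bound and a lower bound on the operator norm, exploiting the definition $\|B\|_p = \sup_{\vec a\neq \vec 0} \|B\vec a\|_p/\|\vec a\|_p$ together with suitably chosen test vectors.

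For \eqref{eq:10}, I would first establish the upper bound $\|B\|_1 \le \max_j \sum_i |B(i,j)|$ by applying the triangle inequality to an arbitrary $\vec a \in \Real^J$:
\[
  \|B\vec a\|_1 = \sum_{i\in I}\Big|\sum_{j\in J} B(i,j)\vec a(j)\Big| \le \sum_{j\in J}|\vec a(j)|\sum_{i\in I}|B(i,j)| \le \Big(\max_{j\in J}\sum_{i\in I}|B(i,j)|\Big)\|\vec a\|_1,
\]
where the key step is swapping the order of summation after applying $|\cdot|$ inside. For the matching lower bound, I would plug in the standard basis vector $\vec a = \vec e_{j^*}$ for a column index $j^*$ achieving $\max_{j\in J}\sum_{i\in I}|B(i,j)|$; this gives $\|\vec a\|_1 = 1$ and $\|B\vec a\|_1 = \sum_{i\in I}|B(i,j^*)|$, so that the supremum is attained.

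For \eqref{eq:11}, the argument is dual, with the roles of the two indices swapped. The upper bound comes from
\[
  \|B\vec a\|_\infty = \max_{i\in I}\Big|\sum_{j\in J}B(i,j)\vec a(j)\Big| \le \max_{i\in I}\sum_{j\in J}|B(i,j)|\cdot\|\vec a\|_\infty.
\]
The lower bound requires a slightly more careful choice of test vector: fix a row index $i^*$ attaining $\max_{i\in I}\sum_{j\in J}|B(i,j)|$ and define $\vec a(j) \coloneqq \operatorname{sign}(B(i^*,j))$, with the convention $\operatorname{sign}(0)=0$. Then $\|\vec a\|_\infty \le 1$ and the $i^*$-th coordinate of $B\vec a$ equals $\sum_{j\in J}|B(i^*,j)|$, realising the claimed value (the case where the maximal row is identically zero is trivial, as the norm is then $0$).

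The only real subtlety is the sign-vector construction in the $\ell_\infty$ case, which is what forces the test vector to depend on the entries of $B$ rather than being a single basis vector; otherwise both identities reduce to pairing the triangle inequality with an extremal input. No other obstacles are expected; the argument is purely linear-algebraic and uses nothing beyond the definitions of the $\ell_1$ and $\ell_\infty$ norms already recalled in the preliminaries.
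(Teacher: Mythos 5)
Your proof is correct. The paper states Fact~\ref{fact:operator} without proof, citing it as well known, so there is no argument in the paper to compare against; what you give is the standard textbook derivation — triangle inequality plus interchange of summation for the upper bounds, and extremal test vectors (a standard basis vector $\vec e_{j^*}$ for $\|\cdot\|_1$, a sign vector for $\|\cdot\|_\infty$) for the lower bounds — and you correctly dispose of the degenerate case where the maximal row is zero, so the argument is complete.
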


\begin{proof}[Proof of Proposition~\ref{prop:local-ed}]
Since the matrix $A^\pi_G-A_H$ is symmetric for all $\pi\in\Bij(V,W)$,
\eqref{eq:10} and \eqref{eq:11} imply
$\delta_1(G,H)=\delta_\infty(G,H)$. Thus to prove \eqref{eq:5}, it
suffices to prove that $\delta_\infty(G,H)$ equals
\[
  M\coloneqq \min\Big\{\deg(D)\Bigmid D\subseteq{\textstyle\binom{V}{2}}\text{
    such that }\big(V,E_G\symdiff D\big)\cong H\Big\}.
\]
To prove $\delta_\infty(G,H)\le M$, let
$D\subseteq\binom{V}{2}$ such that $\big(V,E_G\symdiff D\big)\cong
H$, and let $\pi:V\to W$ be an isomorphism from $\big(V,E_G\symdiff
D\big)$ to $H$. Let $B\coloneqq A_G^\pi-A_H$. Then $B(w,w')=1$ if $\pi^{-1}(w)\pi^{-1}(w')\in D\cap
E_G$ and $B(w,w')=-1$ if $\pi^{-1}(w)\pi^{-1}(w')\in D\cap
\pi^{-1}(E_H)$ and $B(w,w')=0$ otherwise. Let $v^*\in V$ such that
\[
  \sum_{w\in W}|B(\pi(v^*),w)|= \max_{v\in V}\sum_{w\in W}|B(\pi(v),w)|=
  \max_{v\in V}\deg_{(V,D)}(v)=\deg(D).
\]
Then
\[
  \delta_\infty(G,H)=\min_{\pi\in\Bij(V,W)}\|A_G^{\pi}-A_H\|_\infty\le\|B\|_\infty=
  \sum_{w\in W}|B(\pi(v^*),w)|=\deg(D).
\]
This proves $\delta_\infty(G,H)\le M$.

To prove the converse inequality $\delta_\infty(G,H)\ge M$, let
$\pi\in\Bij(V,W)$ such that
$\delta_\infty(G,H)=\|A^\pi_G-A_H\|_\infty$. Let $D$ be the symmetric difference of
$E_G$ and $\pi^{-1}(E_H)$. We have
\[
  \delta_\infty(G,H)=\|B\|_\infty=\max_{v\in V}\sum_{w\in W}|B(\pi(v),w)|=
  \max_{v\in V}\deg_{(V,D)}(v)=\deg(D).
\]
Since $\pi$ is an isomorphism $(V,E_G\symdiff D)$ to $H$, we have $\deg(D)\ge M$. This proves
the desired inequality and hence completes the proof of \eqref{eq:5}.
\end{proof}

Another very interesting graph metric is
$\delta_\square$.\footnote{This metric is closely related to the
  \emph{cut metric} introduced by Lovász (see \cite{Lovasz12}), but it
  is not the same. We will introduce the actual cut metric in
  Section~\ref{sec:sizes} and denote it by
  $\delta^\odot_\square$. Lovász denotes a normalised version of our $\delta_\square$ by
  $\hat\delta_\square$ (we also use this notation, see
  Section~\ref{sec:normalisation}), and and he denotes $\delta^{\odot}_\square$ by $\delta_{\square}$.}
To explain this metric in purely graph-theoretic terms, for
$X,X'\subseteq V$ we let $e_G(X,X')$ be the number of edges of graph
$G$ with one
endvertex in $X$ and the other endvertex in $X'$, counting edges with
both endvertices in $X\cap X'$ twice. Note
that $e_G(X,X')=\sum_{v\in X,v'\in X'}A_G(v,w)$. A straightforward
calculation shows that \[
  \delta_\square(G,H)=\min_{\pi\in\Bij(G,H)}\max_{X,X'\subseteq
    V}\Big|e_G(X,X')-e_H\big(\pi(X),\pi(X')\big)\Big|.
\]
It follows from \eqref{eq:1} that $\delta_\square$ is bounded from
above by $\delta_{(1)}$: for all graphs $G,H$ of the same order
we have
\begin{equation}
  \delta_\square(G,H)\le \delta_{(1)}(G,H).
\end{equation}
However, $\delta_\square$ may be significantly smaller than
$\delta_{(1)}$. Let us argue that in an important way, $\delta_\square$
captures our intuition of graph similarity much better than
$\delta_{(1)}$. If we consider two graphs chosen uniformly at random from
the class of all $n$-vertex graphs, then for large $n$, we expect them
to look quite similar with high probability---they might differ in
many details, but they will be similar for most relevant global graph
properties. Yet with high probability their edit distance will be
large, of order $\Omega(n^2)$. It turns out that their cut distance is
much lower. Let us sketch a proof that for every $\epsilon>0$ with
probability at least $1-\epsilon$, for two random graphs
$G,H\sim\CG(n,\frac{1}{2})$ we have
$\delta_\square(G,H)=O(n^{3/2})$. Here $\CG(n,\frac{1}{2})$ denotes
the probability distribution of graphs with vertex set $[n]$ where the
edges are drawn independently with probability $\frac{1}{2}$, which is
precisely the uniform distribution. So here our graphs $G,H$ have the
same vertex set $V=W=[n]$. We shall prove that with high probability
it holds that $\|A_G-A_H\|_\square=O(n^{3/2})$. Let us fix sets $X,X'\subseteq[n]$ of
cardinalities $|X|=k,|X'|=k'$. For simplicity, let us assume that $X$
and $X'$ are disjoint and that $k,k'=\Omega(n)$. For
each pair $v\in X,v'\in X'$, we let $Z_{vv'}$ be the random variable that
takes value $1$ if $vv'\in E_G\setminus E_H$, value $-1$ if
$vv'\in E_H\setminus E_G$, and value $0$ otherwise. Then the $Z_{vv'}$
are independent with $\Pr(Z_{vv'}=1)=\Pr(Z_{vv'}=-1)=\frac{1}{4}$ and
$\Pr(Z_{vv'}=0)=\frac{1}{2}$. Moreover, $e_G(X,X')-e_H(X,X')=Z\coloneqq\sum_{v\in
  X,v'\in X'}Z_{vv'}$. The expected value of $Z$ is $0$, and by Hoeffding's Inequality for
all $r\in\PReal$ we have
\[
  \Pr(|Z|\ge rkk')\le e^{-r^2 kk'}.
\]
Then for every constant $c>0$, using that $k,k'=\Omega(n)$ we
can find constants $d,d'>0$ such that with $r=d'n^{-1/2}$ we have
$
  \Pr(|Z|\ge dn^{3/2})\le 2e^{-cn}.
$
So by applying a Union Bound, we bring the probability that for
all $X,X'$ it holds that $|e_G(X,X')-e_H(X,X')|< dn^{3/2}$ arbitrarily close
to $1$.

The difference between $\delta_{(1)}(G,H)=\Omega(n^2)$ and
$\delta_\square(G,H)=O(n^{3/2})$ may not seem that
significant. However, if we normalise the metrics
by multiplying them by $1/n^2$, then this difference becomes more
striking. We will dicuss normalisation in
Section~\ref{sec:normalisation}.

However, before we do this, let me point out that instead of the adjacency matrix, we can also use other matrices to
define graph metrics, for example, the \emph{Laplacian matrix} $L_G$
of a graph or the \emph{distance matrix} $D_G$. Let us only consider
one interesting example based on the distance matrix. The entry $D_G(v,v')$
in the distance matrix is the length of the shortest path from $v$ to
$v'$. Essentially, the distance matrix is the natural representation of
a graph as a finite metric space. Then for $\pi\in\Bij(V,W)$, the
value $\big|D_G(v,v')-D_H\big(\pi(v),\pi(v')\big)\big|$ indicates how
much the distance between $v$ and $v'$ is \emph{distorted} by $\pi$. We
define the \emph{distortion distance} by
\[
  \delta_{\textup{dist}}(G,H)\coloneqq\min_{\pi\in\Bij(V,W)}\|D_G^\pi-D_H\|_{(\infty)}.
\]
The distortion distance is closely related to the
\emph{Gromov-Hausdorff distance} of the two metric spaces represented
by $G$ and $H$ (see Section~\ref{sec:relax}). Compared to the edit
distance or cut distance, the
distortion distance can be more meaningful for sparse graphs because it
is possible to obtain sparse approximations (so-called sparsifiers) of dense graphs 
with respect to the distortion distance (see, for example, \cite{Chew89,FungHHP19,SpielmanT11}).

\subsection{Normalisation}
\label{sec:normalisation}

If we want to compare graphs of different sizes, or just compare
distances across graph sizes, we need to normalise the metrics.

\begin{example}
  Let us try to understand how the edit distance between the complete graph
  $K_{2n}$ and the complete bipartite graph $K_{n,n}$ evolves as $n$ goes
  to infinity. 
  It is easy to see that
  \[
    \ged(K_{2n},K_{n,n})=2\binom{n}{2}=n(n-1),
  \]
  or equivalently, $\delta_{(1)}(K_{2n},K_{n,n})=2n(n-1)$. Thus
  $\lim_{n\to\infty}\ged(K_{2n},K_{n,n})=\infty$, which does not
  convey much insight.

  However, if we scale distances relative to the maximum number of
  possible edges, we obtain the more interesting
  \[
    \frac{\ged(K_{2n},K_{n,n})}{\binom{2n}{2}}=\frac{2
      n(n-1)}{2n(2n-1)}=\frac{n-1}{2n-1}\quad\xrightarrow[n\to\infty]{}\quad\frac{1}{2}.
  \]
 Note that if we consider $\delta_{(1)}$ and
  normalise by $2n(2n-1)$, or even by the simpler
  $(2n)^2$, we get the same limit $\frac{1}{2}$.\uend
\end{example}

The question is how to determine
the right normalisation factor. As a rule of thumb, for many metrics
it seems natural to choose a normalisation factor about the
distance between the complete graph $K_n$ and the empty graph on $n$
vertices. At the same time, we want to keep the normalisation simple,
preferring, for example, $n^2$ over $n(n-1)$.

A natural normalisation factor for edit distance $\delta_{(1)}$ and cut
distance $\delta_\square$ is $n^2$. We define the following normalised
versions of the metrics: for all graphs $G,H$ of order $n$ we let
\begin{align*}
  \widehat\delta_{(1)}(G,H)&\coloneqq\frac{\delta_{(1)}(G,H)}{n^2},\\
  \widehat\delta_{\square}(G,H)&\coloneqq\frac{\delta_{\square}(G,H)}{n^2}.\\
  \intertext{For local edit distance
                                 $\delta_1$, a normalisation factor of
  $n$ seems more natural, and we let}
  \widehat\delta_{1}(G,H)&\coloneqq\frac{\delta_{1}(G,H)}{n}.
\end{align*}
We could continue with other metrics, but these three are the main
metrics we shall study in the rest of this section, so we leave it there.

Let me point out that normalisation can be problematic. In particular,
if we want to study sparse graphs with only $O(n)$ edges, a
normalisation factor $n^2$ for edit distance and cut distance is
unsuitable, because all sparse graphs will end up being very close
to the empty graph. The theory presented here is mainly meaningful for
dense graphs. There have been various attempts to address this issue
(e.g.~\cite{BackhausS20,bolrio11,BorgsCCZ19}), but they are
complicated and beyond the scope of this article.

\subsection{Comparing Graphs of Different Sizes}
\label{sec:sizes}

So far, all metrics that we defined are on graphs of the same
order. So, actually every ``graph metric'' $\delta$ we defined so
far is a family $(\delta_n)_{n\in\Nat}$ where $\delta_n$ is
a (pseudo-)metric on  $\CG_n$. In this section, we want to extend our
metrics in such a way that they also allow it to compare graphs of
distinct orders.

When thinking of graph edit distance, the most obvious way to compare
graphs of distinct orders would be to add another edit operation:
delete or add an isolated vertex. Let us call the version of edit
distance where we allow this operation by $\ged^\oplus$.

We define the following \emph{padding} operation on graphs. For a graph $G$ and a nonnegative
integer $k\in\Nat$, by $G^{\oplus k}$ we denote the graph obtained from $G$
by adding $k$ isolated vertices.\footnote{To be definite, say we add
  the first $k$ nonnegative integers $i\in\Nat$ not in the vertex set of $G$.}
Then it is easy to see that for all graphs $G,H$ we have
\[
  \ged^\oplus (G,H)=
  \begin{cases}
    k+\ged(G^{\oplus k},H)&\text{if }|H|-|G|=k\ge 0,\\
    k+\ged(G,H^{\oplus k})&\text{if }|G|-|H|=k>0.
  \end{cases}
\]
As a variant, we can put different weights $\alpha,\beta$ on the two edit
operations. Then we obtain terms $\alpha k+\beta\ged(\cdots)$. Since
we may often want to ignore isolated vertices, it might be most
natural to let $\alpha=0$ and $\beta=1$.

Similarly, for each of our graph metrics $\delta$ based on matrix
norms we can define a version $\delta^\oplus $ where we pad the smaller of
the two graphs by isolated vertices. These definitions are reasonable
because all the matrix norms we considered are invariant under padding
a matrix by $0$-rows and $0$-columns.

However, this ``additive'' approach to comparing graphs of distinct orders
is often not what we want. In particular, it does not allow us
to approximate a large graph (with many edges) by a small graph,
because the impact of a single edge on the distance is absolute and not
scaled relative to the size.

Lovász~\cite{Lovasz12} proposes a ``multiplicative'' approach based
on the following \emph{blow-up} operation instead of the additive padding
operation. For a graph $G$ and a
nonnegative integer $k\in\Nat$, by $G^{\odot k}$ we denote the graph
obtained from $G$ by replacing each vertex $v$ by $k$ vertices, say,
$(v,1),\ldots,(v,k)$ and replacing each edge $vw$ of $G$ by a
complete bipartite graph between the $(v,i)$ and $(w,j)$, that is,
adding all edges $(v,i)(w,j)$.

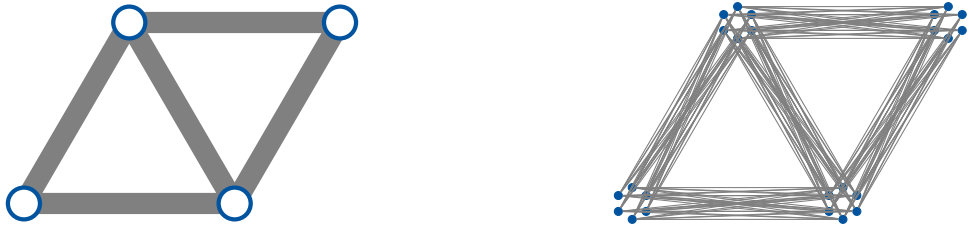
\begin{figure}
  \centering
  \begin{tikzpicture}
    
    \begin{scope}[
      every edge/.style={draw,black!50,line width=8pt},
      ]
      \node (v1) at (90:1.6cm) {}; 
      \node (v2) at (210:1.6cm) {}; 
      \node (v3) at (330:1.6cm) {}; 
      \node (v4) at (30:3.2cm) {};
      
      \draw (v1) edge (v2) edge (v3) edge (v4) (v3) edge (v2) edge
      (v4);

      \draw[blau,ultra thick,fill=white] (90:1.6cm) circle[radius=6pt]
      (210:1.6cm) circle[radius=6pt]
      (330:1.6cm) circle[radius=6pt]
      (30:3.2cm) circle[radius=6pt]
      ;
    \end{scope}

    \begin{scope}[xshift=8cm,
      vertex/.style={blau,circle,draw, minimum size=3pt,inner
        sep=0pt,fill=blau},
      every edge/.style = {draw,black!50},
     ]
     \foreach \i in {0,1,...,5}
     \path (90:1.6cm) +(60*\i+30:6pt) node[vertex] (v1\i) {};
     \foreach \i in {0,1,...,5}
     \path (210:1.6cm) +(60*\i+30:6pt) node[vertex] (v2\i) {};
     \foreach \i in {0,1,...,5}
     \path (330:1.6cm) +(60*\i+30:6pt) node[vertex] (v3\i) {};
     \foreach \i in {0,1,...,5}
     \path (30:3.2cm) +(60*\i+30:6pt) node[vertex] (v4\i) {};

     \foreach \v/\w in {1/2,1/3,1/4,3/2,3/4}
     \foreach \i in {0,1,...,5}
     \foreach \j in {0,1,...,5}
     \draw (v\v\i) edge (v\w\j);
     
    \end{scope}
    
  \end{tikzpicture}
  \caption{A graph $G$ and its blow-up $G^{\odot 6}$}
  \label{fig:blowup}
\end{figure}

The intuitive idea is that if we look from a sufficient distance then we
cannot distinguish the individual copies $(v,i)$ of a vertex $v$, and
the graphs $G$ and $G^{\odot k}$ will look similar (see Figure~\ref{fig:blowup}). We will define their
distance to be $0$. Moreover, to compare two graphs of different
sizes, we blow them up to a common size. Let
$\widehat\delta$ be a suitably normalised graph metric defined on
graphs of the same order, for example, $\widehat\delta_{(1)}, \widehat\delta_{1}, \widehat\delta_{\square}$, and let $G$, $H$
be graphs of order $m,n$, respectively. Then we let
\begin{equation}
  \label{eq:34}
    \delta^\odot(G,H)\coloneqq\lim_{{\ell}\to\infty}\widehat\delta(G^{\odot {\ell}n},H^{\odot {\ell}m}).
\end{equation}
To make sure that $\delta^\odot$ is well-defined, we need to prove that the
limit exists.

\begin{theorem}\label{thm:limit}
  $\delta_{(1)}^\odot, \delta^\odot_1,\delta^\odot_\square$ are well defined, that is,
  the limit in \eqref{eq:34} exists for $\widehat\delta\in\{\widehat\delta_{(1)}, \widehat\delta_1, \widehat\delta_\square\}$.
\end{theorem}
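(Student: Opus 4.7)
The plan is to analyze the sequence $f(\ell)\coloneqq\widehat\delta(G^{\odot \ell n},H^{\odot \ell m})$, noting that both blow-ups have the same number of vertices $N_\ell\coloneqq\ell mn$, so each $f(\ell)$ is defined and lies in a bounded interval. I will combine two properties of $f$: monotonicity along multiples and an asymptotic comparison between neighbouring values, which together force convergence to $f^*\coloneqq\inf_\ell f(\ell)$. Throughout, let $a=2$ for $\widehat\delta_{(1)}$ and $\widehat\delta_\square$, and $a=1$ for $\widehat\delta_1$; the normalisation is $N_\ell^a$ in each case.

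The first, easier step is to show $f(q\ell)\le f(\ell)$ for every $q\ge 1$. Given any bijection $\pi$ realising $f(\ell)$, I lift it to a bijection $\pi^q$ at level $q\ell$ by cloning each vertex $q$ times and acting in parallel. Since the adjacency matrix of a blow-up by $q$ is the Kronecker product with the all-ones matrix $J_q$, the $\pi^q$-difference matrix is $B\otimes J_q$, where $B$ is the $\pi$-difference matrix at level $\ell$. A direct computation gives $\|B\otimes J_q\|_{(1)}=q^2\|B\|_{(1)}$, $\|B\otimes J_q\|_\square=q^2\|B\|_\square$ (using that the cut-norm maximum can be chosen to take entire blocks inside or outside the sets), and $\|B\otimes J_q\|_\infty=q\|B\|_\infty$. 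The normalisation factors $N_{q\ell}^a=q^aN_\ell^a$ absorb exactly these powers of $q$, and minimising over $\pi$ yields $f(q\ell)\le f(\ell)$.

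The hard step is to control $f$ on non-multiples. Fix $\epsilon>0$ and pick $\ell_0$ with $f(\ell_0)<f^*+\epsilon/2$. For arbitrary $\ell$ set $\ell'\coloneqq\lceil \ell/\ell_0\rceil\ell_0=\ell+s$ with $0\le s<\ell_0$. I claim
\[
  f(\ell)\;\le\;f(\ell')\cdot\left(\frac{\ell'}{\ell}\right)^a.
\]
Granting this, monotonicity along multiples gives $f(\ell')\le f(\ell_0)<f^*+\epsilon/2$, and boundedness of $f$ together with $(1+s/\ell)^a=1+O(\ell_0/\ell)$ yields $f(\ell)\le f^*+\epsilon/2+O(\ell_0/\ell)\le f^*+\epsilon$ for all sufficiently large $\ell$, proving $f(\ell)\to f^*$. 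To prove the claim, take an optimal bijection $\pi$ for $f(\ell')$, and for each $v\in V_G$ select a subset $D_v$ of $sn$ among the $\ell'n$ copies of $v$ in $G^{\odot\ell'n}$ such that, simultaneously, $\pi(\bigcup_v D_v)$ contains exactly $sm$ of the $\ell'm$ copies of each $w\in V_H$. This amounts to a balanced integer transportation problem with capacities $c_{v,w}\coloneqq|\{\text{copies of }v\text{ sent by }\pi\text{ to copies of }w\}|$, row marginals $sn$ and column marginals $sm$; the fractional assignment $d_{v,w}\coloneqq (s/\ell')c_{v,w}$ satisfies all constraints, and total unimodularity of the transportation polytope supplies an integer solution. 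Discarding the selected copies (and their $\pi$-images) leaves induced subgraphs isomorphic to $G^{\odot\ell n}$ and $H^{\odot\ell m}$, and the restriction of $\pi$ is a bijection between them. The associated difference matrix is a principal submatrix of the level-$\ell'$ difference matrix, and each of $\|\cdot\|_{(1)},\|\cdot\|_\square,\|\cdot\|_\infty$ is monotone under passage to principal submatrices. Renormalising by $N_\ell^a$ instead of $N_{\ell'}^a$ produces the factor $(\ell'/\ell)^a$ and completes the proof of the claim.

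The main obstacle is precisely the deletion construction: one must ensure that the deletions on the $G$-side and the $H$-side are compatible with $\pi$ while respecting the required marginals, and this is exactly what the transportation feasibility argument supplies. Everything else (the Kronecker identities, the submatrix bounds for the three norms, and the normalisation accounting) is routine.
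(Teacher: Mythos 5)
Your proof is correct, but your key step differs from the paper's. Both arguments share the first ingredient: monotonicity along multiples, $f(q\ell)\le f(\ell)$, which is exactly the paper's Lemma~\ref{lem:limit}, proved there by the same scaling identities you use ($\|B\otimes J_q\|_{(1)}=q^2\|B\|_{(1)}$, $\|B\otimes J_q\|_1=\|B\otimes J_q\|_\infty=q\|B\|_{\infty}$, $\|B\otimes J_q\|_\square=q^2\|B\|_\square$); note that the cut-norm identity is the one place where your write-up is thin --- the nontrivial direction needs the observation that the bilinear form depends only on the block counts and is affine in each, so maximisers can be taken block-respecting, which is precisely what the bulk of the paper's lemma establishes. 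Where you genuinely diverge is in controlling $f$ off the multiples. The paper rounds \emph{down} to $jk=\floor{\ell/k}k$ and extends a level-$jk$ bijection \emph{upwards} block by block, bounding the increments $d_{\ell+1}-d_\ell$ of the unnormalised distance; you round \emph{up} to $\ell'=\ceil{\ell/\ell_0}\ell_0$ and restrict an optimal level-$\ell'$ bijection \emph{downwards}, using the integral transportation (total unimodularity) argument to choose a balanced set of copies to delete, which yields the multiplicative bound $f(\ell)\le f(\ell')(\ell'/\ell)^a$. Your route buys two things: it treats all three norms uniformly (all that is needed is monotonicity under principal submatrices plus the blow-up scaling), and it avoids the cross terms between old and new blocks that the upward-extension argument must face --- indeed, the paper's stated increment bound $d_{\ell+1}-d_\ell\le d_1$ ignores those mixed blocks and cannot hold as stated whenever the limit is positive (then $d_\ell\sim c\ell^2$ forces increments of order $\ell$); the correct bound there is $d_{\ell+1}\le d_\ell+O(\ell)$ with the constant depending on $mn$, which still suffices for the paper's $\epsilon$-bookkeeping. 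The price you pay is the extra combinatorial machinery (integrality of the capacitated transportation polytope), whereas the paper's upward extension, once repaired as above, needs nothing beyond a trivial count of the new entries.
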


The following lemma is the crucial step for the proof.

  \begin{lemma}\label{lem:limit}
  Let $G,H$ be graphs of the same order and ${\ell}\ge 1$. Then
  \begin{align}
    \label{eq:12}
    \delta_{(1)}(G,H)&\ge \frac{\delta_{(1)}(G^{\odot {\ell}},H^{\odot {\ell}})}{{\ell}^2}\\
    \label{eq:13}
    \delta_{1}(G,H)&\ge \frac{\delta_{1}(G^{\odot {\ell}},H^{\odot {\ell}})}{{\ell}}\\
    \label{eq:14}
    \delta_{\square}(G,H)&\ge
                           \frac{\delta_{\square}(G^{\odot {\ell}},H^{\odot {\ell}})}{{\ell}^2}.
  \end{align}
\end{lemma}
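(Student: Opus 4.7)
The plan is to take an optimal alignment $\pi \in \Bij(V_G, V_H)$ witnessing the relevant distance $\delta_{\|\cdot\|}(G,H)$, lift it to a canonical alignment between the blow-ups, and then use Kronecker-product identities for the matrix norms to compare $\|(A_G^\pi - A_H) \otimes J_\ell\|$ with $\|A_G^\pi - A_H\|$.

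First I would observe that, since the blow-up operation inserts no edges within a cloud $\{(v,i) : i \in [\ell]\}$ (as $G$ has no self-loops) and otherwise repeats each adjacency $\ell^2$ times, we have $A_{G^{\odot \ell}} = A_G \otimes J_\ell$, where $J_\ell$ denotes the $\ell \times \ell$ all-ones matrix. Given $\pi \in \Bij(V_G, V_H)$, the map $\pi^{\odot \ell}\colon (v,i) \mapsto (\pi(v), i)$ lies in $\Bij(V_{G^{\odot \ell}}, V_{H^{\odot \ell}})$, and a direct calculation yields
\[
  A_{G^{\odot \ell}}^{\pi^{\odot \ell}} - A_{H^{\odot \ell}} = (A_G^\pi - A_H) \otimes J_\ell.
\]
Writing $B \coloneqq A_G^\pi - A_H$, the three inequalities reduce to showing $\|B \otimes J_\ell\|_{(1)} \le \ell^2 \|B\|_{(1)}$, $\|B \otimes J_\ell\|_1 \le \ell \|B\|_1$, and $\|B \otimes J_\ell\|_\square \le \ell^2 \|B\|_\square$, and then choosing $\pi$ optimal for the respective metric.

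The first two are immediate. By definition $\|B \otimes J_\ell\|_{(1)} = \sum_{v,w,i,j} |B(v,w)| = \ell^2 \|B\|_{(1)}$, giving \eqref{eq:12}. For the operator $1$-norm, Fact~\ref{fact:operator} says $\|\cdot\|_1$ is the maximum column sum; the column of $B \otimes J_\ell$ indexed by $(w,j)$ has absolute-value sum $\ell \sum_v |B(v,w)|$, whose maximum over $(w,j)$ is $\ell \|B\|_1$, giving \eqref{eq:13}.

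The cut norm requires a short extra argument because there is no product identity for $\|\cdot\|_\square$; this is the only genuinely non-routine step. For arbitrary $S, T \subseteq V_G \times [\ell]$, set $s_v \coloneqq |\{i : (v,i) \in S\}|$ and $t_w \coloneqq |\{j : (w,j) \in T\}|$; then
\[
  \sum_{(v,i) \in S,\, (w,j) \in T} (B \otimes J_\ell)\big((v,i),(w,j)\big) = \sum_{v,w} s_v t_w B(v,w).
\]
Setting $x_v \coloneqq s_v/\ell \in [0,1]$ and $y_w \coloneqq t_w/\ell \in [0,1]$, it suffices to bound $|\sum_{v,w} x_v y_w B(v,w)| \le \|B\|_\square$ for all $x, y \in [0,1]^{V_G}$. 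This is a standard bilinear extremal argument: for fixed $y$ the expression is linear in $x$ on the hypercube $[0,1]^{V_G}$ and so attains its extremes at vertices $\{0,1\}^{V_G}$, and symmetrically for $y$; the extremal choices correspond to indicator vectors of sets $S', T' \subseteq V_G$, which is exactly the definition of $\|B\|_\square$. Choosing $\pi$ optimal for $\delta_\square(G,H)$ and applying $\pi^{\odot\ell}$ then yields \eqref{eq:14}, completing the proof.
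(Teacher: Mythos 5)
Your proof is correct, and its overall skeleton coincides with the paper's: identify $A_{G^{\odot\ell}}$ with $A_G\otimes J_\ell$, lift a bijection $\pi$ to $\pi^{\odot\ell}$, and reduce each inequality to a scaling bound for the corresponding norm under tensoring with $J_\ell$ (the entrywise and operator-norm cases are handled identically in the paper, modulo the fact that you only state the inequality $\le$, which is all that is needed since $\pi^{\odot\ell}$ is just one competitor in the minimum defining $\delta(G^{\odot\ell},H^{\odot\ell})$). Where you genuinely diverge is the cut-norm step. The paper proves the exact identity $\|B^{\odot\ell}\|_\square=\ell^2\|B\|_\square$ by a combinatorial exchange argument: given optimal $Y,Y'$, it shows that any cloud $\{i\}\times[\ell]$ met only partially by $Y$ can be either completed or removed without decreasing the cut value (depending on the sign of the partial contribution), so one may assume $Y=X\times[\ell]$, $Y'=X'\times[\ell]$. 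You instead pass to the fractional quantities $x_v=s_v/\ell$, $y_w=t_w/\ell\in[0,1]$ and invoke the standard fact that the bilinear form $\sum_{v,w}x_vy_wB(v,w)$ attains its maximum absolute value over $[0,1]^{V}\times[0,1]^{V}$ at vertices of the hypercubes, i.e.\ at indicator vectors, which is exactly $\|B\|_\square$. The two arguments are close in spirit (the paper's cloud-by-cloud adjustment is in effect a rounding of your fractional vector to a vertex), but yours is shorter and appeals to a well-known extremal property of the cut norm, at the price of proving only the inequality $\|B\otimes J_\ell\|_\square\le\ell^2\|B\|_\square$ rather than the equality, which the paper gets for free from its trivial direction and which can be of independent interest.
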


\begin{proof}
  For a matrix $A\in\Real^{I\times I}$, let
  $A^{\odot {\ell}}\in\Real^{(I\times[{\ell}])\times(I\times [{\ell}])}$ be the matrix
  defined
  \[
    A^{\odot {\ell}}\big((i,i'),(j,j')\big)\coloneqq A(i,j).
  \]
  That is, 
  $A^{\odot {\ell}}$ is the tensor product of $A$ with the ${\ell}\times
  {\ell}$-matrix $J_{\ell}$ with all entries $1$. Note that if $A$ is the adjacency matrix of a graph $G$ then
  $A^{\odot {\ell}}$ is the adjacency matrix of $G^{\odot {\ell}}$.

  Without loss of generality we may assume that our graphs $G,H$ have
  the same vertex set $V$. For a permutation $\pi\in\Sym(V)$, we let
  $\pi^{\odot {\ell}}\in\Sym(V\times[{\ell}])$ be defined by
  $\pi^{\odot {\ell}}\big((v,j)\big)\coloneqq \big(\pi(v),j\big)$.

  For every matrix $B\in\Real^{I\times I}$ we obviously have
  $\|B^{\odot {\ell}}\|_{(1)}={\ell}^2\|B\|_{(1)}$. Thus
  \begin{align*}
    \delta_{(1)}(G,H)&=\min_{\pi\in\Sym(V)}\|A_G^\pi-A_H\|_{(1)}\\
    &=\frac{1}{{\ell}^2}\min_{\pi\in\Sym(V)}\|(A_G^{\odot
      {\ell}})^{\pi^{\odot {\ell}}}-A_H^{\odot {\ell}}\|_{(1)}\\
    &\ge \frac{1}{{\ell}^2}\min_{\pi'\in\Sym(V\times[{\ell}])}\|(A_G^{\odot
      {\ell}})^{\pi'}-A_H^{\odot
      {\ell}}\|_{(1)}=\frac{1}{{\ell}^2}\delta_{(1)}(G^{\odot {\ell}},H^{\odot {\ell}}).
  \end{align*}
  This proves \eqref{eq:12}.

  To prove \eqref{eq:13}, we note that for every matrix
  $B\in\Real^{I\times I}$ we have
  \begin{align*}
    \|B^{\odot\ell}\|_{1}&=\max_{(i',j')\in
      I\times[\ell]}\sum_{(i,j)\in
                              I\times[\ell]}B^{\odot\ell}\big((i,j),(i',j')\big)\\
    &=\max_{(i',j')\in
      I\times[\ell]}\sum_{(i,j)\in
      I\times[\ell]}B(i,i')\\
    &=\ell\max_{i'\in I}\sum_{i\in
      I}B(i,i')=\ell\|B\|_1,
  \end{align*}
  where the first and last inequalities hold by Fact~\ref{fact:operator}. Then
  \begin{align*}
    \delta_1(G,H)&=\min_{\pi\in\Sym(V)}\|A_G^\pi-A_H\|_{1}\\
    &=\frac{1}{\ell}\min_{\pi\in\Sym(V)}\|(A_G^{\odot
      {\ell}})^{\pi^{\odot {\ell}}}-A_H^{\odot {\ell}}\|_{1}\\
    &\ge \frac{1}{\ell}\min_{\pi'\in\Sym(V\times[{\ell}])}\|(A_G^{\odot
      {\ell}})^{\pi'}-A_H^{\odot
      {\ell}}\|_{1}=\frac{1}{\ell}\delta_1(G^{\odot {\ell}},H^{\odot {\ell}}).
  \end{align*}
  This proves \eqref{eq:13}.

  To establish \eqref{eq:14}, we shall prove that $\|B^{\odot
    {\ell}}\|_{\square}={\ell}^2\|B\|_{\square}$ for every matrix
  $B\in\Real^{I\times I}$.
  One inequality is trivial:
  \begin{align*}
    \|B^{\odot
    {\ell}}\|_{\square}&=\max_{Y,Y'\subseteq I\times [{\ell}]}\left|\sum_{(i,j)\in
      Y,(i',j')\in Y'}B^{\odot {\ell}}\big((i,j),(i',j')\big)\right|\\
  &\ge \max_{X,X'\subseteq I}\left|\sum_{(i,j)\in
      X\times[{\ell}],(i',j')\in X'\times[{\ell}]}B^{\odot
    {\ell}}\big((i,j),(i',j')\big)\right|\\
    &={\ell}^2\max_{X,X'\subseteq
      I}\left|\sum_{i\in X,i'\in X'}B(i,i')\right|={\ell}^2\|B\|_\square.
  \end{align*}
  To prove the converse inequality, let $Y,Y'\subseteq I\times [{\ell}]$
  such that
  \[
    \|B^{\odot
      {\ell}}\|_{\square}=\left|\sum_{(i,j)\in
        Y,(i',j')\in Y'}B^{\odot {\ell}}\big((i,j),(i',j')\big)\right|.
  \]
  Suppose that there is no $X\subseteq I$ such that
  $Y=X\times[{\ell}]$. Then there are an $i\in I$ and $\emptyset\subset
  J\subset[{\ell}]$ such that $Y\cap \big(\{i\}\times[{\ell}]\big)=\{i\}\times J$. We
  write
  \begin{multline*}
    \sum_{(i,j)\in
      Y,(i',j')\in Y'}B^{\odot {\ell}}\big((i,j),(i',j')\big)
    =\\\underbrace{\sum_{j\in J,(i',j')\in Y'}B^{\odot
        {\ell}}\big((i,j),(i',j')\big)}_{\coloneqq a}+
    \underbrace{\sum_{\substack{(i'',j'')\in Y\text{ with }i''\neq i,\\
          (i',j')\in Y'}}B^{\odot
        {\ell}}\big((i,j),(i',j')\big)}_{\coloneqq b}.
  \end{multline*}
  Then $\|B^{\odot
      {\ell}}\|_{\square}=a+b$.
  Since
  for all $j\in[{\ell}]$ and $(i',j')\in Y'$ we have $B^{\odot
    {\ell}}\big((i,j),(i',j')\big)=B(i,i')$, we have
  \[
    a=|J|\cdot\underbrace{\sum_{(i',j')\in Y'}B(i,i')}_{\coloneqq a'}.
  \]
  If $a$ and $b$ have the same sign, then with $Z\coloneqq
  Y\cup\big(\{i\}\times[{\ell}]\big)$ we have
  \begin{multline*}
    \left|\sum_{(i,j)\in
        Z,(i',j')\in Y'}B^{\odot
        {\ell}}\big((i,j),(i',j')\big)\right|\\
    =\left|\sum_{(i,j)\in
        Y,(i',j')\in Y'}B^{\odot
        {\ell}}\big((i,j),(i',j')\big)\right|+({\ell}-|J|)|a'|\\\ge \left|\sum_{(i,j)\in
        Y,(i',j')\in Y'}B^{\odot
        {\ell}}\big((i,j),(i',j')\big)\right|.
  \end{multline*}
  If $a$ and $b$ have different signs, then with $Z\coloneqq
  Y\setminus\big(\{i\}\times[{\ell}]\big)$ we have
  \begin{multline*}
    \left|\sum_{(i,j)\in
        Z,(i',j')\in Y'}B^{\odot
        {\ell}}\big((i,j),(i',j')\big)\right|\\
    =\left|\sum_{(i,j)\in
        Y,(i',j')\in Y'}B^{\odot
        {\ell}}\big((i,j),(i',j')\big)\right|+|J|\cdot|a'|\\\ge \left|\sum_{(i,j)\in
        Y,(i',j')\in Y'}B^{\odot
        {\ell}}\big((i,j),(i',j')\big)\right|.
  \end{multline*}
  In both cases, we can replace $Y$ by $Z$, and possibly repeating
  the argument we may assume that $Y=X\times[{\ell}]$ for some $X\subseteq
  I$. Similarly, we may assume that $Y'=X'\times[{\ell}]$ for some
  $X'\subseteq I$.
  Then
    \begin{align*}
    \|B^{\odot
    {\ell}}\|_{\square}&=\left|\sum_{(i,j)\in
     Y,(i',j')\in Y'}B^{\odot
                         {\ell}}\big((i,j),(i',j')\big)\right|\\
      &=\left|\sum_{(i,j)\in
      X\times[{\ell}],(i',j')\in X'\times[{\ell}]}B^{\odot {\ell}}\big((i,j),(i',j')\big)\right|\\
  &={\ell}^2\left|\sum_{i\in
      X,i'\in X'}B(i,i')\right|\\
    &\le {\ell}^2\max_{X,X'\subseteq
      I}\left|\sum_{i\in X,i'\in X'}B(i,i')\right|={\ell}^2\|B\|_\square.
  \end{align*}
This proves $\|B^{\odot
    {\ell}}\|_{\square}={\ell}^2\|B\|_{\square}$, and \eqref{eq:14} follows. 
\end{proof}

\begin{proof}[Proof of Theorem~\ref{thm:limit}]
  We give the proof for $\delta_{(1)}^\odot$. The proofs for
  $\delta_1^\odot$ and $\delta_\square^\odot$ are similar.

  Let 
  $G_1\coloneqq G^{\odot n}$, $H_1\coloneqq H^{\odot m}$. Note that
  $G_1,H_1$ are graphs of the same order $n_1\coloneqq mn$. Let
  $V_1\coloneqq V_G\times[n]$ and $W_1\coloneqq V_H\times[m]$ be the
  vertex sets of $G_1,H_1$, respectively. Note that for all $\ell\ge 1$ the graph $G^{\odot\ell n}$ with vertex
  set $V_G\times[\ell n]$ and the graph
  $G_\ell\coloneqq G_1^{\odot\ell}$ with vertex set
  $V_1\times[\ell]=V_G\times[n]\times[\ell]$ are isomorphic.
  For $\ell\ge 1$, let $A_\ell$ be the adjacency matrix of
  $G_\ell$. Let $H_\ell\coloneqq H_1^{\odot\ell}$ and observe that
  $H_\ell\cong H^{\odot\ell m}$. Moreover, let $B_\ell$ be
  its adjacency matrix of $H_\ell$, and let $d_\ell\coloneqq \delta_{(1)}(G_\ell,H_\ell)$. We
  need to prove that the limit
  $
    \lim_{\ell\to\infty}\frac{d_\ell}{\ell^2}
  $
  exists.

  It follows from Lemma~\ref{lem:limit} that for all $j,k\in\PNat$
  it holds that
  \begin{equation}
    \label{eq:36}
    \frac{d_k}{k^2}\ge\frac{d_{jk}}{(jk)^2}. 
  \end{equation}
  Furthermore, for all $\ell\in\Nat$ it holds that
  \begin{equation}
    \label{eq:37}
   d_{\ell+1}-d_\ell\le d_1.
  \end{equation}
  To see this, let $\pi\in\Bij(V_1\times[\ell],W_1\times[\ell])$ such that
  $d_\ell=\big\|A_\ell^\pi-B_\ell\big\|_{(1)}$, and $\pi'\in\Bij(V_1,W_1)$ such that
  $d_1=\big\|A_1^{\pi'}-B_1\big\|_{(1)}$. We define
  $\pi''\in\Bij(V_1\times[\ell+1],W_1\times[\ell+1])$ by
  \[
    \pi''(v,i)\coloneqq
    \begin{cases}
      \pi(v,i)&\text{if }i\in[\ell],\\
      \big(\pi'(v),\ell+1\big)&\text{if }i=\ell+1.
    \end{cases}
  \]
  Then
  \[
    d_{\ell+1}\le\big\|A_{\ell+1}^{\pi''}-B_{\ell+1}\big\|_{(1)}\le \big\|A_\ell^\pi-B_\ell\big\|_{(1)}+\big\|A_1^{\pi'}-B_1\big\|_{(1)}=d_\ell+d_1,
  \]
  which implies \eqref{eq:37}.

  It follows from \eqref{eq:36} (with $k=1$) that
  $\big(d_\ell/\ell^2\big)_{\ell\ge 1}$ is a bounded sequence of
  non-negative reals. Hence
  \[
    d\coloneqq\liminf_{\ell\to\infty}\frac{d_\ell}{\ell^2}
  \]
  exists. We shall prove that
  \begin{equation}
    \label{eq:39}
    \lim_{\ell\to\infty}\frac{d_\ell}{\ell^2}=d.
  \end{equation}
  Let $\epsilon>0$. By the definition of $d$ there is a $k\in\Nat$ such that
  \begin{equation}
    \label{eq:41}
    \frac{d_k}{k^2}\le d+\frac{\epsilon}{2}
  \end{equation}
  and
  \begin{equation}
    \label{eq:42}
     \frac{d_\ell}{\ell^2}\ge d-\epsilon
  \end{equation}
  for all $\ell\ge k$. We fix such a $k$ and choose $j_0\in\PNat$ such
  that for $\ell_0\coloneqq j_0k$ we have
  \[
    j_0k\ge\sqrt{\frac{2d_1k}{\epsilon}}.
  \]
  Then 
  $\frac{d_1}{(j_0k)^2}\le\frac{\epsilon}{2k}$. Hence, by
  \eqref{eq:37}, for all $i\ge j_0k$
  \begin{equation}
    \label{eq:40}
    \frac{d_{i+1}}{(i+1)^2}-\frac{d_{i}}{i^2}\le\frac{d_{i+1}-d_i}{i^2}\le
    \frac{d_1}{(j_0k)^2}\le\frac{\epsilon}{2k}.
  \end{equation}
  Let $\ell_0\coloneqq j_0k$. We shall prove that for all $\ell\ge
  \ell_0$ we have
  \[
    \left|\frac{d_\ell}{\ell^2}-d\right|\le\epsilon.
  \]
  Note that $\frac{d_\ell}{\ell^2}\ge d-\epsilon$ by
  \eqref{eq:42}. Thus we only need to prove
  \begin{equation}
    \label{eq:43}
    \frac{d_\ell}{\ell^2}\le d+\epsilon.
  \end{equation}
  Let $j\coloneqq \floor{\frac{\ell}{k}}$. Then $j\ge
  j_0$. It follows from \eqref{eq:40} that 
  \begin{equation*}
    \label{eq:44}
        \frac{d_{\ell}}{\ell^2}-\frac{d_{jk}}{(jk)^2}\le(\ell-jk) \frac{\epsilon}{2k}\le\frac{\epsilon}{2}.
  \end{equation*}
  By \eqref{eq:36}, we have
  \[
    \frac{d_{jk}}{(jk)^2}\le \frac{d_k}{k^2}.
  \]
  Overall,
  \[
    \frac{d_{\ell}}{\ell^2}\le
    \frac{d_{jk}}{(jk)^2}+\frac{\epsilon}{2}\le
    \frac{d_k}{k^2}+\frac{\epsilon}{2}\le d+\epsilon.
  \]
  This proves \eqref{eq:43}.
\end{proof}

Let $\delta\in\{\delta_{(1)}, \delta_1, \delta_\square\}$. By Lemma~\ref{lem:limit},
for graphs $G,H$ of the same order $n$ it holds that
\[
  \delta^\odot(G,H)\le\widehat\delta(G^{\odot {\ell}},H^{\odot {\ell}})\le\widehat\delta(G,H).
\]
The following example illustrates that the inequalities may be strict
for $\delta_{(1)}$. Similar examples are known for $\delta_1$ (due to Timo
Gervens, private communication) and $\delta_\square$
\cite[Section 8.1, Exercise~8.8]{Lovasz12}.

\begin{example}[\cite{GoldreichKNR08}]\label{exa:limit}
  This example is attributed to Arie Matsliah in
  \cite{GoldreichKNR08}.
  
  Consider the two 4-vertex graphs graphs $G=([4],\{12,13,23\}\big)$
  and $H=\big([4],\{12,34\})$ (see Figure~\ref{fig:limit}(a)). It is not hard to see that
  $\ged(G,H)=3$, because no matter how we choose a bijection $\pi$
  between the vertex sets, two of the edges of the triangle in $G$
  will be mapped to non-edges in $H$, and one of the two edges of $H$
  will be the image of a non-edge in $G$. Hence $\delta_{(1)}(G,H)=6$.
  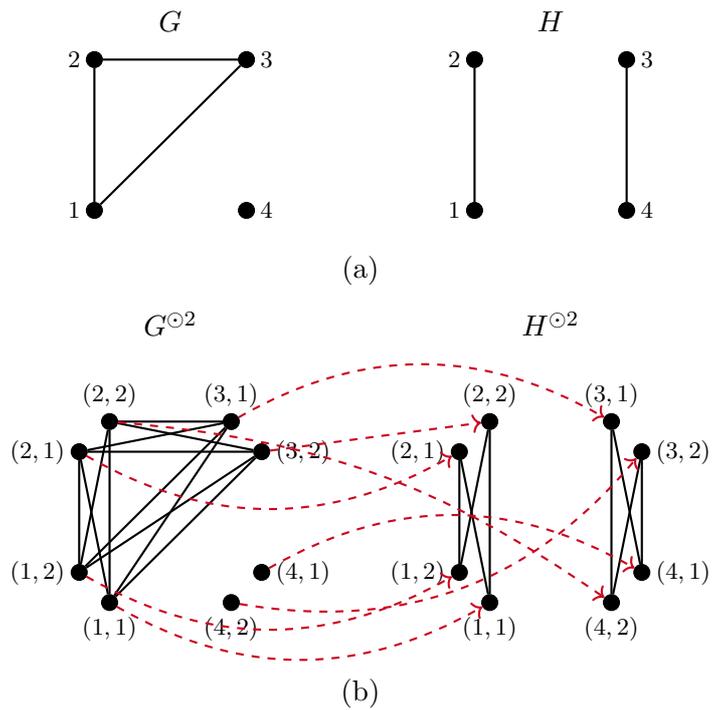
\begin{figure}\centering
  \begin{tikzpicture}[
    vertex/.style={circle,draw,fill,inner sep=0pt,minimum height=6pt},
    ]
    \footnotesize
    \begin{scope}
      \node[vertex] (v1) at (0,0) {};
      \node[vertex] (v2) at (0,2) {};
      \node[vertex] (v3) at (2,2) {};
      \node[vertex] (v4) at (2,0) {};

      \path (v1) node[left=2pt] {$1$} (v2) node[left=2pt] {$2$} (v3)
      node[right=2pt] {$3$} (v4)  node[right=2pt] {$4$};

      \draw[thick] (v1) edge (v2) edge (v3) (v2) edge (v3);

      \node[vertex] (w1) at (5,0) {};
      \node[vertex] (w2) at (5,2) {};
      \node[vertex] (w3) at (7,2) {};
      \node[vertex] (w4) at (7,0) {};

      \path (w1) node[left=2pt] {$1$} (w2) node[left=2pt] {$2$} (w3)
      node[right=2pt] {$3$} (w4)  node[right=2pt] {$4$};

      \draw[thick] (w1) edge (w2) (w3) edge (w4);

      \path (1,2.5) node {\normalsize$G$} (6,2.5) node
      {\normalsize$H$} (3.5,-0.8) node {\normalsize (a)};
     \end{scope}

     \begin{scope}[yshift=-5cm]
       \node[vertex] (v11) at (0.2,-0.2) {}; 
       \node[vertex] (v12) at (-0.2,0.2) {}; 
       \node[vertex] (v21) at (-0.2,1.8) {}; 
       \node[vertex] (v22) at (0.2,2.2) {}; 
       \node[vertex] (v31) at (1.8,2.2) {}; 
       \node[vertex] (v32) at (2.2,1.8) {}; 
       \node[vertex] (v41) at (2.2,0.2) {}; 
       \node[vertex] (v42) at (1.8,-0.2) {}; 

       \path (v11) node[below=2pt] {$(1,1)$} (v12) node[left=2pt]
       {$(1,2)$}
       (v21) node[left=2pt] {$(2,1)$} (v22) node[above=2pt] {$(2,2)$}
       (v31) node[above=2pt] {$(3,1)$} (v32) node[right=2pt] {$(3,2)$}
       (v41) node[right=2pt] {$(4,1)$} (v42) node[below=2pt] {$(4,2)$}
       ;

      \draw[thick] (v11) edge (v21) edge (v22) edge (v31) edge (v32)
      (v12) edge (v21) edge (v22) edge (v31) edge (v32)  (v21) edge
      (v31) edge (v32) (v22) edge (v31) edge (v32);

      \node[vertex] (w11) at (5.2,-0.2) {}; 
      \node[vertex] (w12) at (4.8,0.2) {}; 
      \node[vertex] (w21) at (4.8,1.8) {}; 
      \node[vertex] (w22) at (5.2,2.2) {}; 
       \node[vertex] (w31) at (6.8,2.2) {}; 
       \node[vertex] (w32) at (7.2,1.8) {}; 
       \node[vertex] (w41) at (7.2,0.2) {}; 
       \node[vertex] (w42) at (6.8,-0.2) {}; 

       \path (w11) node[below=2pt] {$(1,1)$} (w12) node[left=2pt]
       {$(1,2)$}
       (w21) node[left=2pt] {$(2,1)$} (w22) node[above=2pt] {$(2,2)$}
       (w31) node[above=2pt] {$(3,1)$} (w32) node[right=2pt] {$(3,2)$}
       (w41) node[right=2pt] {$(4,1)$} (w42) node[below=2pt] {$(4,2)$}
       ;

       \draw[thick] (w11) edge (w21) edge (w22) (w12) edge (w21) edge
       (w22) (w31) edge (w41) edge (w42) (w32) edge (w41) edge (w42);

       \draw[thick,dashed,rot,->] (v11) edge[bend right] (w11)
       (v12) edge[bend right] (w12)
       (v21) edge[bend right] (w21)
       (v22) edge[bend left=15] (w42)
       (v31) edge[bend left] (w31)
       (v32) edge (w22)
       (v41) edge[bend left] (w41)
      (v42) edge[bend right] (w32)
       ;

      \path (1,3.5) node {\normalsize$G^{\odot2}$} (6,3.5) node {\normalsize$H^{\odot2}$}  (3.5,-1.4) node {\normalsize (b)};
     \end{scope}
 
  \end{tikzpicture}
  \caption{The graphs of Example~\ref{exa:limit}: (a) shows the graphs
  $G,H$, (b) shows their blow-ups $G^{\odot2},H^{\odot2}$ and the
  bijection $\pi$ between them}\label{fig:limit}
\end{figure}
  Now consider the blown-up graphs $G^{\odot2}$ and $H^{\odot2}$. Let
  $\pi:[4]\times[2]\to [4]\times[2]$ be the following bijection
  between their vertex sets:
  \[
    \begin{array}{c|c|c|c|c|c|c|c|c}
      (v,i)   &(1,1)&(1,2)&(2,1)&(2,2)&(3,1)&(3,2)&(4,1)&(4,2)\\
      \hline
      \pi(v,i)&(1,1)&(1,2)&(2,1)&(4,2)&(3,1)&(2,2)&(4,1)&(3,2)
    \end{array}
  \]
  (see Figure~\ref{fig:limit}(b)).
  A straightforward calculation shows that
  $\|A_{G^{\odot2}}^\pi-A_{H^{\odot2}}\|_{(1)}=20$.
  Indeed, we have
  \[
   A_{G^{\odot2}}^\pi=
      \begin{pmatrix}
        0&0&1&1&1&0&0&1\\
        0&0&1&1&1&0&0&1\\
        1&1&0&1&1&0&0&0\\
        1&1&1&0&0&0&0&1\\
        1&1&1&0&0&0&0&1\\
        0&0&0&0&0&0&0&0\\
        0&0&0&0&0&0&0&0\\
        1&1&0&1&1&0&0&0\\
      \end{pmatrix}
   \quad\text{and}\quad
   A_{H^{\odot2}}=
      \begin{pmatrix}
        0&0&1&1&0&0&0&0\\
        0&0&1&1&0&0&0&0\\
        1&1&0&0&0&0&0&0\\
        1&1&0&0&0&0&0&0\\
        0&0&0&0&0&0&1&1\\
        0&0&0&0&0&0&1&1\\
        0&0&0&0&1&1&0&0\\
        0&0&0&0&1&1&0&0
      \end{pmatrix}.
     \]
     Comparing these two matrices, we find the following numbers of
     mismatches per column:
     \[
       \begin{array}{c|c|c|c|c|c|c|c|c}
         \text{col}&1&2&3&4&5&6&7&8\\
         \hline
         \text{\# mismatches}&2&2&2&2&4&2&2&4
       \end{array}.
     \]
     Hence overall, we have 20 mismatches, which means that
     $\|A_{G^{\odot2}}^\pi-A_{H^{\odot2}}\|_{(1)}=20$.
     
  Thus
    $\delta_{(1)}(G^{\odot2},H^{\odot2})\le20$ and therefore, by
    Lemma~\ref{lem:limit},
    $\frac{\delta_{(1)}(G^{\odot2\ell},H^{\odot2\ell})}{\ell^2}\le 20$ for
    every $\ell\ge 1$. Hence 
    \[
      \delta_{(1)}^{\odot}(G,H)\le\widehat\delta_{(1)}(G^{\odot2},H^{\odot2})\le\frac{20}{64}<\frac{6}{16}=\widehat\delta_{(1)}(G,H).\uende
    \]
\end{example}

So it may happen that $\delta^\odot(G,H)<\widehat\delta(G,H)$. However, at least
for $\delta_{(1)}$ and $\delta_{\square}$ we also have a lower bound
for $\delta^\odot(G,H)$ in terms of $\widehat\delta(G,H)$, as the
following theorem shows.

\begin{theorem}[Borgs et al.~\cite{BorgsCLSV08},
  Pikhurko~\cite{Pikhurko10}]
  \label{thm:dot}
  For all graphs  $G,H$ of the same order we have
\begin{align}
  \label{eq:180}
  \widehat\delta_{(1)}(G,H)&\le
    3\delta_{(1)}^\odot(G,H),\\
  \label{eq:18}
  \widehat\delta_\square(G,H)&\le
                                    32\delta_\square^\odot(G,H)^{1/67}.
\end{align}
\end{theorem}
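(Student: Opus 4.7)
The overarching strategy is to extract, from a near-optimal bijection between the blow-ups, a good bijection between the original vertex sets. Fix $\epsilon > 0$ and, using \eqref{eq:34}, pick $\ell$ large and a bijection $\pi : V_G \times [\ell] \to V_H \times [\ell]$ with
\[
  \|A_{G^{\odot\ell}}^\pi - A_{H^{\odot\ell}}\| \le (\ell n)^2\big(\delta^\odot(G,H) + \epsilon\big)
\]
in the relevant norm. Record the \emph{overlay} $K \in \Nat^{V_G \times V_H}$ defined by $k_{vw} \coloneqq |\pi(\{v\} \times [\ell]) \cap (\{w\} \times [\ell])|$, so that $X \coloneqq K/\ell$ is a doubly stochastic matrix.

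For inequality \eqref{eq:180}, the first observation is that the blow-up cost factors exactly through $K$:
\[
  \|A_{G^{\odot\ell}}^\pi - A_{H^{\odot\ell}}\|_{(1)} = \sum_{v_1,v_2,w_1,w_2} k_{v_1 w_1}\, k_{v_2 w_2}\, |A_G(v_1,v_2) - A_H(w_1,w_2)|,
\]
since the entrywise $1$-norm sums over all pairs of blow-up vertices and the adjacency of blow-up vertices depends only on their underlying vertices in $G$ and $H$. Dividing by $\ell^2$ turns the blow-up cost into a bilinear form in $X$; decomposing $X = \sum_i \lambda_i P_i$ via Birkhoff--von Neumann, with $P_i$ being permutation matrices corresponding to bijections $\pi_i : V_G \to V_H$, expresses this as $\sum_{i,j} \lambda_i \lambda_j c_{ij}$, where $c_{ij} \coloneqq \sum_{v_1,v_2} |A_G(v_1,v_2) - A_H(\pi_i(v_1), \pi_j(v_2))|$. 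The goal is then to exhibit a single $\pi_i$ whose integer cost $c_{ii}$ is at most a small constant times this fractional cost. This follows from a triangle-inequality/averaging argument that bounds $c_{ii}$ by a linear combination of the mixed costs $c_{ij}, c_{ji}, c_{jj}$, averages against $\lambda_j$, and selects a minimising $i$; a careful bookkeeping of the constants delivers the factor $3$, and letting $\epsilon \to 0$ yields \eqref{eq:180}.

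The cut-norm inequality \eqref{eq:18} is considerably more delicate and is the main obstacle. The Birkhoff--von Neumann reduction fails because the cut norm does not interact usefully with convex combinations of permutation matrices: the fractional bilinear cost no longer controls any integer counterpart through a straightforward averaging. The remedy, due to Borgs et al.\ \cite{BorgsCLSV08} and sharpened by Pikhurko \cite{Pikhurko10}, is analytic. Represent each graph by its step-function graphon $W_G, W_H \in [0,1]^{[0,1]^2}$, use that $\delta^\odot_\square(G,H)$ coincides with the graphon cut metric $\delta_\square(W_G, W_H)$, and then show that $\widehat\delta_\square(G,H)$ is bounded by a fixed polynomial of $\delta_\square(W_G, W_H)$. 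The technical core combines three ingredients: a weak regularity decomposition of Frieze--Kannan type to approximate both graphons by low-complexity partitions; a rounding step that converts the near-optimal measure-preserving coupling into a finite bijection $V_G \to V_H$, one partition class at a time; and a Hoeffding-type sampling estimate (in the spirit of the random-graph computation preceding Section~\ref{sec:normalisation}) with a union bound over cut sets to control the cut-norm error. The exponent $1/67$ emerges as the product of three polynomial losses---one from the regularity approximation, one from the sampling/union bound, and one from the rounding---and tightening it would require substantially new ideas.
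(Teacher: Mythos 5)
Your proof of \eqref{eq:180} follows the paper's own route in substance: the paper first establishes $\delta_{(1)}^\odot=\delta_{(1)}^{\textup{OT}}$ (Theorem~\ref{thm:dotot}) and then bounds $\frac{1}{n^2}\delta_{(1)}(G,H)\le 3\delta_{(1)}^{\textup{OT}}(G,H)$, whereas you inline the first step by reading off the overlay matrix $K/\ell$ from a near-optimal bijection between the blow-ups; your exact factorisation of the entrywise $1$-cost through $K$ is correct (copies of a vertex are non-adjacent and $A_G(v,v)=0$, so no diagonal correction is needed), and your Birkhoff--von Neumann decomposition is the paper's Lemma~\ref{lem:birkhoff} applied to the integer matrix $K$. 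The only step that needs repair is the averaging that is supposed to deliver the constant $3$: the degenerate instance $c_{ii}\le c_{ij}+c_{jj}+c_{ji}$, averaged against $\lambda_j$ and minimised over $i$, only gives $\delta_{(1)}(G,H)\le 2\sum_{i,j}\lambda_i\lambda_j c_{ij}+\sum_j\lambda_j c_{jj}$, and the diagonal sum $\sum_j\lambda_j c_{jj}$ is \emph{not} controlled by the fractional cost (it carries weight $\lambda_j$, not $\lambda_j^2$), so the factor $3$ does not follow as you describe it. What does work is to average over genuine triples, using $c_{ij}+c_{jk}+c_{ki}\ge c_{ii}\ge\delta_{(1)}(G,H)$ for all (not necessarily distinct) $i,j,k$ --- this is exactly the paper's Lemma~\ref{lem:3to1}, which you invoke implicitly and which itself needs the symmetry of $A_G,A_H$ and the permutation- and transposition-invariance of $\|\cdot\|_{(1)}$:
\[
  3\sum_{i,j}\lambda_i\lambda_j c_{ij}
  \;=\;\sum_{i,j,k}\lambda_i\lambda_j\lambda_k\big(c_{ij}+c_{jk}+c_{ki}\big)
  \;\ge\;\delta_{(1)}(G,H)\sum_{i,j,k}\lambda_i\lambda_j\lambda_k
  \;=\;\delta_{(1)}(G,H).
\]
Combined with $\sum_{i,j}\lambda_i\lambda_j c_{ij}\le n^2\big(\delta_{(1)}^\odot(G,H)+\epsilon\big)$ and $\epsilon\to0$, this yields \eqref{eq:180}; it is the weighted form of the paper's count over $3$-element subsets of $[m]$ (the paper's weights are uniform because its Birkhoff variant produces exactly $m$ permutations). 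As for \eqref{eq:18}, you defer to Borgs et al.~\cite{BorgsCLSV08}, exactly as the paper does, so there is nothing to compare there; your graphon/regularity sketch is a reasonable description of the cited argument but is not a proof, and the paper does not supply one either.
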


We will prove assertion \eqref{eq:180} for edit distance in
Section~\ref{sec:got}. For a proof of \eqref{eq:18}, we refer the
reader to \cite{BorgsCLSV08}.

\subsection{Relaxations}
\label{sec:relax}
To compute any of the graph distances discussed so far, at some point
we need to minimise over a set of permutations (or bijections), which
is a notoriously hard computational problem (see
Section~\ref{sec:complexity}). In this section, we address this problem by replacing permutations with simpler
forms of alignment between the graphs.

Let $\|\cdot\|$ be a permutation-invariant matrix norm, and let
$G=(V,E_G)$ and $H=(W,E_H)$ be graphs of the same order $n$. Then
\[
  \delta_{\|\cdot\|}(G,H)=\min_{P\in\Perm(V,W)}\|P^\top A_GP-A_H\|=\min_P\|P^{-1}A_GP-A_H\|=\min_P\|A_GP-PA_H\|,
\]
where $\Perm(V,W)$ denotes the set of all $V\times W$ permutation matrices. In the
second equality we use that permutation matrices $P$ are orthogonal,
that is $P^\top=P^{-1}$, and in the third equality we use that
$\|\cdot\|$ is permutation invariant. The last expression
$\min_P\|A_GP-PA_H\|$ has the advantage of having a linear term inside
the norm. Now we look at the relaxation
\begin{equation}
  \label{eq:6}
    \min_{Q\in\DS(V,W)}\|A_GQ-QA_H\|
\end{equation}
where $\DS(V,W)$ denotes the set of all doubly stochastic
$V\times W$ matrices, that is, all matrices $Q\in\NNReal^{V\times W}$ with row and
column sums $1$. Note that the integer doubly stochastic matrices are
precisely the permutation matrices. By Birkhoff's Theorem, the
doubly stochastic matrices are precisely the convex combinations of
permutation matrices. In particular, $\DS(V,W)$ is a convex subset of
$\Real^{V\times W}$, and \eqref{eq:6} describes a convex optimisation
problem, which can be solved efficiently by gradient-descent methods
(provided that we can efficiently compute $\|A\|$ for a matrix $A$ and
that it is differentiable).
To see that the function $Q\mapsto \|A_GQ-QA_H\|$, for $Q\in\DS(V,W)$,
is indeed convex, let $\alpha_1,\alpha_2\in\NNReal$ with
$\alpha_1+\alpha_2=1$ and $Q_1,Q_2\in\DS(V,W)$. Then 
\begin{align*}
  &\big\|A_G(\alpha_1 Q_1+\alpha_2 Q_2)-(\alpha_1 Q_1+\alpha_2 Q_2)A_H\big\|\\
  =\;&\big\|\alpha_1 (A_GQ_1-Q_1A_H)+\alpha_2(A_GQ_2-Q_2A_H)\big\|\\
  \le\;&\alpha_1\|A_GQ_1-Q_1A_H\|+\alpha_2\|A_GQ_2-Q_2A_H\|
\end{align*}
by the properties of a norm.

We can use \eqref{eq:6} to define a graph metric
on graphs of the same order. We generalise the definition to graphs $G,H$ of
different orders $m\coloneqq|V|,n\coloneqq|W|$ by replacing doubly stochastic
$V\times W$ matrices by nonnegative $V\times W$ matrices with row sums
$1/m$ and column sums $1/n$. Let $\DS^*(V,W)$ denote the (convex) set
of all such matrices. Then we let
\begin{equation}
  \label{eq:7}
    \delta_{\|\cdot\|}^*(G,H)\coloneqq \min_{Q\in\DS^*(V,W)}\left\|\frac{1}{m}A_GQ-\frac{1}{n}QA_H\right\|. 
\end{equation}
Here the minimum exists by a simple compactness argument (the same
holds for similar definitions that follow).
We call
$\delta^*_{\|\cdot\|}$ the \emph{fractional relaxation} of
$\delta_{\|\cdot\|}$. We denote the fractional relaxations of
$\delta_{(p)},\delta_p,\delta_{\square}$ by
$\delta^{*}_{(p)},\delta^*_{p},\delta^*_{\square}$, respectively. We
can also define a fractional relaxation $\delta^*_{\textup{dist}}$ of 
$\delta_{\textup{dist}}$. The reader may wonder about the
normalisation factors $\frac{1}{m}$ and $\frac{1}{n}$ in
\eqref{eq:7}. They are chosen this way to factor out the different
graph sizes. We may have to adapt them depending on the matrix norm we
use. Observe that for graphs of the same order $m=n$ we have
$\delta_{\|\cdot\|}^*(G,H)=\frac{1}{n^2}\min_{Q\in\DS(V,W)}\|A_GQ-QA_H\|$,
  which brings back the normalisation factor $1/n^2$ that we use for
  many norms, most notably the entrywise $\ell_1$-norm and the cut norm.

Besides the convex relaxation leading to the metrics $\delta^*$ there
is a second form of relaxation that is particularly relevant for the
distortion distance $\delta_{\textup{dist}}$. A \emph{correspondence}
between sets $V$ and $W$ is a relation $R\subseteq V\times W$ such
that for all $v\in V$ there is a $w\in W$ such that $(v,w)\in R$ and
for all $w\in W$ there is a $v\in V$ such that $(v,w)\in R$. We denote
the set of all correspondences between $V$ and $W$ by
$\Corr(V,W)$. Then we define the \emph{Gromov-Hausdorff distance}
between graphs $G=(V,E_G), H=(W,E_H)$ by
\[
  \delta_{\textup{GH}}(G,H)\coloneqq\min_{R\in\Corr(V,W)}\max_{(v,w),(v',w')\in
    R}\big| D_G(v,v')-D_H(w,w')\big|.
\]
Observe that for sets $V,W$ of the same order, the graph of a
bijection in $\Bij(V,W)$ is in $\Corr(V,W)$. For graphs $G,H$ of the
same order, the distortion distance is defined in the same way as the Gromov-Hausdorff
distance, except that we only minimise over correspondences that are graphs of
bijections. Originally, the Gromov-Hausdorff distance is defined on arbitrary
metric spaces in a different way. For this, let $U$ be a set
and let $d$ be a metric on $U$. For simplicity, let us assume here that $U$ is
finite.\footnote{More generally, we could take $(U,d)$ to be a compact metric space.} For a subset $V\subseteq U$ and an element $w\in U$ we let
$d(V,w)\coloneqq d(w,V)\coloneqq \min_{v\in V}d(v,w)$. The \emph{Hausdorff distance} between subsets $V,W\subseteq U$
is defined to be $\max\big\{\max_{v\in V}d(v,W),\; \max_{w\in W}d(V,w)\big\}$.
Then it can be shown that for our graphs $G,H$, $\frac{1}{2}\delta_{\textup{GH}}(G,H)$ is the minimum
Hausdorff distance between their vertex sets $V$ and $W$  with respect to a
metric $d$ on the disjoint union of $V$ and $W$ such that the
restriction of $d$ to $V$ is the shortest path metric of $G$ and the
restriction of $d$ to $W$ is the shortest path metric of $H$ (see
\cite{Memoli11}).

\subsection{Graph Optimal Transport}
\label{sec:got}
The best known example of an \emph{optimal transport distance} is the
\emph{Wasserstein distance}, a.k.a. \emph{Earth movers distance}
measuring the distance between two probability spaces (or more generally
measured spaces) jointly embedded into a metric space. Let us briefly
explain this in the simple case of finite spaces. Suppose we have
probability measures $p_V,p_W$ on finite sets $V,W$. A \emph{coupling}
between $p_V$ and $p_W$ is a probability measure $q$ on $V\times W$ whose
marginal distributions on $V,W$ coincide with $p_V,p_W$,
respectively. That is, for all $v\in V$ it holds that
$\sum_{w\in W}q(v,w)=p_V(v)$ and for all $w\in W$ it holds that
$\sum_{v\in V}q(v,w)=p_W(w)$. In the following, we view such a
coupling as an $V\times W$ matrix with row sums given by $p_V$ and
column sums given by $p_W$. By $\Cp(p_V,p_W)$ we denote the set of
all such matrices. Note that if $p_V$ and $p_W$ are the
uniform distributions on $V,W$, respectively, then
$\Cp(p_V,p_W)=\DS^*(V,W)$. Now suppose that we have a metric space $(U,d)$
with $V,W\subseteq U$. Then the \emph{Wasserstein distance} between
$p_V$ and $p_W$ (with respect to $d$) is
$\min_{Q\in\Cp(p_V,p_W)}\sum_{v\in V,w\in
  W}d(v,w)Q(v,w)$. Intuitively, this measures the amount of
probability mass per distance unit that we have to move to transform $p_V$
to $p_W$.

Let us combine the Wasserstein distance with our previous graph
metrics. For this, suppose we have graphs $G=(V,E_G)$ and
$H=(W,E_H)$ of orders $m\coloneqq|V|,n\coloneqq|W|$ and probability
measures $p_G$ on $V$ and $p_H$ on $W$. These probability measures may
be given with the graphs by explicit weight functions on the vertices,
or they may be derived from the graphs by assigning some kind of
``importance'' to the vertices.  For example, in \cite{MareticGCF19},
Maretic et al.\ propose to take a normal distribution with the pseudo-inverse
of the Laplacian matrix of a graph as the covariance matrix to define
a probability distribution on the vertex set of a graph.  But even the
simplest case of just taking the uniform distribution on the vertices is
interesting. In fact, it is the most interesting case for us
here.

Once we have probability distributions on the vertex sets of our
graphs, the idea is to align the two graphs via a  coupling rather
than a bijection, doubly stochastic
matrix, or correspondence, as in our previous definitions. If we apply this idea to the distortion distance or the
Gromov-Hausdorff distance, we get the following distance known as the
\emph{Gromov-Wasserstein} distance (see~\cite{Memoli11}):
\[
  \delta_{\textup{dist}}^{\textup{OT}}\big((G,p_G),(H,p_H)\big)
  \coloneqq \min_{Q\in\Cp(p_G,p_H)}\sum_{v,v'\in V,w,w'\in
    W}\hspace{-3mm}\big|D_G(v,v')-D_H(w,w')\big|Q(v,w)Q(v',w').
\]
Similarly, we can define optimal transport versions of other graph
metrics, for example:
\begin{align*}
  \delta_{(1)}^{\textup{OT}}\big((G,p_G),(H,p_H)\big)
  &\coloneqq \min_{Q\in\Cp(p_G,p_H)}\sum_{v,v'\in V,w,w'\in
    W}\hspace{-3mm}\big|A_G(v,v')-A_H(w,w')\big|Q(v,w)Q(v',w'),\\
  \delta_{\square}^{\textup{OT}}\big((G,p_G),(H,p_H)\big)
  &\coloneqq \min_{Q\in\Cp(p_G,p_H)}\left|\sum_{v,v'\in V,w,w'\in
    W}\hspace{-3mm}\big(A_G(v,v')-A_H(w,w')\big)Q(v,w)Q(v',w')\right|.
\end{align*}
(It is not so clear how to define an optimal transport version of
metrics such as $\delta_1$ that are based on operator norms.)
If $p_G,p_H$ are the uniform distributions, we omit
them in the notation and just write
$\delta_{\cdots}^{\textup{OT}}(G,H)$. Note that then we minimise over
$\DS^*(V,W)$.

Since for both the fractional and the optimal transport metrics we
minimise over matrices in $\DS^*(V,W)$, the reader may wonder if the
two types of metrics are related.

\begin{proposition}
  For all graphs $G,H$ it holds that
  \begin{align}
  \label{eq:8}
  \delta_{(1)}^*(G,H)&\le\delta_{(1)}^{\textup{OT}}(G,H),\\
  \label{eq:8a}
  \delta_{\square}^*(G,H)&\le\delta_{\square}^{\textup{OT}}(G,H).
\end{align}
\end{proposition}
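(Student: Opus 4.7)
The plan is to derive a single entrywise identity for the matrix $M_Q\coloneqq\frac{1}{m}A_GQ-\frac{1}{n}QA_H$ that exposes it as a bilinear form in the coupling $Q$, and then obtain both inequalities by applying the triangle inequality. For any $Q\in\DS^*(V,W)=\Cp(p_G,p_H)$, the marginal conditions read $\frac{1}{m}=\sum_{w'\in W}Q(v,w')$ and $\frac{1}{n}=\sum_{v'\in V}Q(v',w)$. Substituting these into $\frac{1}{m}(A_GQ)(v,w)=\frac{1}{m}\sum_{v'}A_G(v,v')Q(v',w)$ and $\frac{1}{n}(QA_H)(v,w)=\frac{1}{n}\sum_{w'}Q(v,w')A_H(w',w)$ yields the key identity
\[
M_Q(v,w)\;=\;\sum_{v',w'}\big(A_G(v,v')-A_H(w',w)\big)\,Q(v,w')\,Q(v',w).
\]

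For \eqref{eq:8}, I would take absolute values entrywise, apply the triangle inequality, and sum over $v\in V$, $w\in W$, obtaining
\[
\|M_Q\|_{(1)}\;\le\;\sum_{v,v',w,w'}\big|A_G(v,v')-A_H(w',w)\big|\,Q(v,w')\,Q(v',w).
\]
A dummy-index swap $w\leftrightarrow w'$ in the quadruple sum (a pure relabelling of summation variables) then rewrites the right-hand side as $\sum_{v,v',w,w'}|A_G(v,v')-A_H(w,w')|\,Q(v,w)\,Q(v',w')$, which is exactly the quantity whose minimum over $Q$ defines $\delta_{(1)}^{\textup{OT}}(G,H)$. Minimising over $Q\in\DS^*(V,W)$ on both sides then gives \eqref{eq:8}.

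For \eqref{eq:8a}, the same identity is the starting point, but now summed only over $v\in S$, $w\in T$ for each pair $S\subseteq V$, $T\subseteq W$ before the triangle inequality is applied. The analogous relabelling $w\leftrightarrow w'$ rewrites the bound in the form required by the definition of $\delta_\square^{\textup{OT}}(G,H)$; taking the maximum over $S,T$ and then the minimum over $Q$ yields the inequality.

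The only real piece of care is the dummy-index bookkeeping: the identity for $M_Q$ naturally produces the factor $Q(v,w')\,Q(v',w)$, whereas the OT definitions use $Q(v,w)\,Q(v',w')$. Once the relabelling $w\leftrightarrow w'$ is carried out, both inequalities reduce to the triangle inequality applied to the identity above. I expect no deeper obstacle: the whole argument is essentially an exercise in pushing the marginal constraints of $Q$ into the linear expression $A_GQ-QA_H$ so as to produce the quadratic expression in $Q$ that appears on the optimal-transport side.
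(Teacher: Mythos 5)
Your identity for $M_Q=\frac1m A_GQ-\frac1n QA_H$ is correct, and your argument for \eqref{eq:8} is essentially the paper's own proof: the paper starts from an optimal coupling $Q$, splits the quadruple sum along one outer pair of indices using the marginal conditions $\sum_{w}Q(v,w)=\frac1m$ and $\sum_{v'}Q(v',w')=\frac1n$, and applies $\sum_i|x_i|\ge\big|\sum_i x_i\big|$ to arrive at $\big\|\frac1m A_GQ-\frac1n QA_H\big\|_{(1)}\ge\delta^*_{(1)}(G,H)$; your version is the same computation read in the opposite direction, with the dummy indices named differently.

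The sketch for \eqref{eq:8a}, however, has a genuine gap, and it is exactly where the content of the second inequality lies. The objective defining $\delta_\square^{\textup{OT}}$ is a \emph{single} absolute value around a signed sum, not a sum of absolute values. So if you restrict your identity to $v\in S$, $w\in T$ and then ``apply the triangle inequality'', you obtain $\sum_{v\in S,w\in T}\sum_{v',w'}\big|A_G(v,v')-A_H(w',w)\big|Q(v,w')Q(v',w)$, which after your relabelling is bounded by the $\delta_{(1)}^{\textup{OT}}$ objective; this only proves $\delta^*_\square\le\delta^{\textup{OT}}_{(1)}$, not \eqref{eq:8a}. If instead you keep the signs, you would need $\big|\sum_{v\in S,w\in T}M_Q(v,w)\big|\le\big|\sum_{v\in V,w\in W}M_Q(v,w)\big|$, which is false for matrices with entries of mixed sign---that is precisely why the cut norm carries a maximum over $S,T$. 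What the cut case actually requires is to recognise the restricted sum as a \emph{cut in the coupled index set}: by your identity,
\[
\sum_{v\in S,\,w\in T}M_Q(v,w)\;=\;\sum_{(v,w')\in S\times W}\ \sum_{(v',w)\in V\times T}\big(A_G(v,v')-A_H(w',w)\big)\,Q(v,w')\,Q(v',w),
\]
i.e.\ a cut of the array $\big(A_G(v,v')-A_H(w,w')\big)Q(v,w)Q(v',w')$, indexed by pairs in $V\times W$, along the product sets $S\times W$ and $V\times T$. The inequality \eqref{eq:8a} is then obtained by dominating such cuts by a cut-norm-style maximum over subsets of $V\times W$ on the optimal-transport side. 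Be aware that the plain full-sum quantity (with no maximum over subsets) is too weak for this purpose: the uniform coupling already reduces it to $\big|\tfrac{2|E_G|}{m^2}-\tfrac{2|E_H|}{n^2}\big|$, which vanishes whenever the edge densities agree, while $\delta^*_\square$ need not. So for the cut distance the crucial step is this identification of rectangles with cuts in the coupling space; the triangle inequality is the wrong tool there, and ``the same argument as for $\delta_{(1)}$'' does not go through verbatim.
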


\begin{proof}
  We only prove
\eqref{eq:8}, the proof of \eqref{eq:8a} is similar.

Let $G=(V,E_G)$, $H=(W,E_H)$ be graphs and
$m\coloneqq|V|,n\coloneqq|W|$ their orders. 
Moreover, let
$Q\in\DS^*(V,W)$ be such that
\[
  \delta_{(1)}^{\textup{OT}}\big(G,H)\big)
  =\sum_{v,v'\in V,w,w'\in
    W}\hspace{-3mm}\big|A_G(v,v')-A_H(w,w')\big|Q(v,w)Q(v',w').
\]
Observe
that for all $v\in V,w'\in W$ we have
\begin{align*}
  \sum_{v'\in V,w\in W}A_G(v,v')Q(v,w)Q(v',w')&=\sum_{v'\in
                                                V}A_G(v,v')Q(v',w')\underbrace{\sum_{w\in W}Q(v,w)}_{=\frac{1}{m}}\\
  &=\frac{1}{m}(A_GQ)(v,w').
\end{align*}
Similarly,
\[
  \sum_{v'\in V,w\in W}A_H(w,w')Q(v,w)Q(v',w')=\frac{1}{n}(QA_H)(v,w').
\]
Then
\begin{align*}
  \delta_{(1)}^{\textup{OT}}\big(G,H)\big)
  &=\sum_{v,v'\in V,w,w'\in
    W}\hspace{-3mm}\big|A_G(v,v')-A_H(w,w')\big|Q(v,w)Q(v',w')\\
  &=\sum_{v\in V,w'\in W}\sum_{v'\in V,w\in
    W}\big|A_G(v,v')Q(v,w)Q(v',w')-A_H(w,w')
    Q(v,w)Q(v',w')\big|\\
  &\ge \sum_{v\in V,w'\in W}\left|\sum_{v'\in V,w\in
    W}\big(A_G(v,v')Q(v,w)Q(v',w')-A_H(w,w')
    Q(v,w)Q(v',w')\big)\right|\\
  &=\sum_{v\in V,w'\in
    W}\left|\frac{1}{m}(A_GQ)(v,w')-\frac{1}{n}(QA_H)(v,w')\right|\\
  &=\left\|\frac{1}{m}A_GQ-\frac{1}{n}QA_h\right\|_{(1)}\\
  &\ge\delta^*_{(1)}(G,H).
\end{align*}
\end{proof}

Conversely, it is not possible to (multiplicatively) bound
$\delta_{(1)}^{\textup{OT}}$ or $\delta_{\square}^{\textup{OT}}$ in
terms of the respective fractional metrics $\delta_{(1)}^*$, $\delta_{\square}^*$, as the
following example shows.

\begin{example}
  Let $G=\big([6],\{12,23,34,45,56,61\}\big)$ be a cycle of length $6$
  and $H=\big([6],\{12,23,31,45,56,64\}\big)$ the union of two
  triangles.

  Then $\delta_{(1)}(G,H), \delta_{\square}(G,H)>0$ and hence $\delta_{(1)}^\odot(G,H), \delta_{\square}^\odot(G,H)>0$ by
  Theorem~\ref{thm:dot}.

  However, for the doubly stochastic matrix $Q\in\Real^{[6]\times[6]}$
  with all entries $Q(v,v')=\frac{1}{6}$ for all $v,v'\in[6]$ it holds
  that $A_GQ=QA_H$. Hence $\delta_{(1)}^*(G,H)=\delta_{\square}^*(G,H)=0$.\uend
\end{example}

Less obviously, there is an exact correspondence between
the optimal transport and the blow-up versions of our metrics.

\begin{theorem}\label{thm:dotot}
  For all graphs
  $G,H$ we have
  \begin{align}
    \label{eq:81}
    \delta_{(1)}^\odot(G,H)&=\delta_{(1)}^{\textup{OT}}(G,H),\\
    \label{eq:81a}
    \delta_{\square}^\odot(G,H)&=\delta_{\square}^{\textup{OT}}(G,H).
  \end{align}
\end{theorem}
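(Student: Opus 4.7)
The plan is to prove both equalities via a direct correspondence between bijections on blow-ups and rational couplings, and then handle arbitrary couplings by density and continuity. I describe the argument for $\delta_{(1)}$ in detail; the $\delta_\square$ case is parallel with one additional observation.

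Write $m\coloneqq|V_G|$, $n\coloneqq|V_H|$, and $N\coloneqq\ell mn$, so that $G^{\odot\ell n}$ and $H^{\odot\ell m}$ both have order $N$. Given a bijection $\pi\colon V_G\times[\ell n]\to V_H\times[\ell m]$, write $\pi_1$ for its $V_H$-coordinate and define
\[
  Q_\pi(v,w)\coloneqq\frac{1}{N}\,\bigl|\{i\in[\ell n]:\pi_1(v,i)=w\}\bigr|.
\]
A direct count shows $Q_\pi$ has row sums $1/m$ and column sums $1/n$, so $Q_\pi\in\Cp(p_G,p_H)$ for the uniform distributions. Conversely, every rational $Q\in\Cp(p_G,p_H)$ whose entries have a common denominator dividing $N$ arises as some $Q_\pi$: for each $v$, partition $\{v\}\times[\ell n]$ into blocks of sizes $(NQ(v,w))_{w\in V_H}$ and, dually, for each $w$, partition $\{w\}\times[\ell m]$ into blocks of sizes $(NQ(v,w))_{v\in V_G}$, and match corresponding blocks arbitrarily. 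Grouping index pairs on the blow-up by their image under $\pi$ gives the central identity
\[
  \frac{1}{N^2}\bigl\|A^{\pi}_{G^{\odot\ell n}}-A_{H^{\odot\ell m}}\bigr\|_{(1)}=\!\!\sum_{v,v'\in V_G,\,w,w'\in V_H}\!\!\bigl|A_G(v,v')-A_H(w,w')\bigr|\,Q_\pi(v,w)\,Q_\pi(v',w').
\]

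For $\delta_{(1)}^\odot(G,H)\ge\delta_{(1)}^{\textup{OT}}(G,H)$, choose $\pi$ realising $\widehat\delta_{(1)}(G^{\odot\ell n},H^{\odot\ell m})$; since $Q_\pi\in\Cp(p_G,p_H)$, the identity yields $\widehat\delta_{(1)}(G^{\odot\ell n},H^{\odot\ell m})\ge\delta_{(1)}^{\textup{OT}}(G,H)$, and $\ell\to\infty$ via Theorem~\ref{thm:limit} concludes. For the reverse inequality, fix $\epsilon>0$. The OT objective is a nonnegative quadratic polynomial in $Q$ and hence continuous on the compact polytope $\Cp(p_G,p_H)$; since rational couplings are dense there, pick a rational $Q^*\in\Cp(p_G,p_H)$ with OT cost at most $\delta_{(1)}^{\textup{OT}}(G,H)+\epsilon$. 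Let $D$ be a common denominator of the entries of $Q^*$ and restrict $\ell$ to positive multiples of $D$; then $NQ^*$ is integer-valued, so $Q^*=Q_\pi$ for some bijection $\pi$, and the identity delivers $\widehat\delta_{(1)}(G^{\odot\ell n},H^{\odot\ell m})\le\delta_{(1)}^{\textup{OT}}(G,H)+\epsilon$. Since the limit in~\eqref{eq:34} exists, it agrees with its value on this subsequence, so $\delta_{(1)}^\odot(G,H)\le\delta_{(1)}^{\textup{OT}}(G,H)+\epsilon$; letting $\epsilon\to 0$ proves~\eqref{eq:81}.

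For~\eqref{eq:81a} the same construction works, with one extra observation needed for the upper bound. The bijection $\pi$ built from a rational $Q$ endows $A^\pi_{G^{\odot\ell n}}-A_{H^{\odot\ell m}}$ with a coarse block structure: on each rectangle $J_{v,w}\times J_{v',w'}$ coming from the partition, the entry is the constant $A_G(v,v')-A_H(w,w')$. An argument in the style of the proof of~\eqref{eq:14} in Lemma~\ref{lem:limit} --- treating the contribution of each block as a linear function of its intersection size with $S$ (respectively $T$) and pushing to the extreme --- shows that any optimal $(S,T)$ in the max defining the cut norm can be taken to be a union of such blocks. At that point the cut sum collapses to the coupling expression for $Q$, and the rest of the plan proceeds exactly as for $\delta_{(1)}$. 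The one substantive obstacle in the whole proof is this realisation step --- turning a possibly irrational optimal coupling into an honest bijection between finite blow-ups --- and it is handled cleanly by rational approximation together with continuity of the quadratic OT objective on the compact polytope $\Cp(p_G,p_H)$.
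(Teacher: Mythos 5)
For \eqref{eq:81} your argument is correct and essentially the paper's own proof: the same matrix $Q_\pi$ extracted from a bijection between the blow-ups, the same exact identity obtained by grouping pairs of blown-up vertices according to their images, the same converse construction of a bijection from a rational coupling via partitions into blocks of size $NQ(v,w)$, and the same use of Theorem~\ref{thm:limit} (equivalently, the monotonicity from Lemma~\ref{lem:limit}) to pass to the limit. The only cosmetic difference is that the paper proves the density of rational couplings in $\DS^*(V,W)$ explicitly, via a rational solution of a system of linear inequalities, where you invoke it as standard.

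For \eqref{eq:81a}, which the paper does not prove but defers to \cite{Lovasz12}, your final step does not go through as stated. The cut norm is a maximum over pairs $(S,T)$, whereas the expression defining $\delta^{\textup{OT}}_\square$ as displayed in the paper contains no maximum over subsets: it is the absolute value of a single total sum. Even after your (correct) reduction of an optimal $(S,T)$ to unions of blocks, the maximum over such block unions does not ``collapse'' to that single sum; it is in general strictly larger. In fact that displayed expression cannot be the right target at all: for every coupling $Q$ the sum factorises, $\sum_{v,v',w,w'}\big(A_G(v,v')-A_H(w,w')\big)Q(v,w)Q(v',w')=\frac{2|E_G|}{m^2}-\frac{2|E_H|}{n^2}$, because the row and column sums of $Q$ are constant, so this version of $\delta^{\textup{OT}}_\square$ is just the difference of edge densities, and \eqref{eq:81a} would then fail, e.g.\ for the $6$-cycle versus two disjoint triangles, where the paper itself notes $\delta^\odot_\square>0$. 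The identity one actually has to prove carries an additional $\max_{S,T\subseteq V\times W}$ inside the minimum (Lovász's fractional-overlay form of the cut distance). With that definition your block reduction is exactly the right move for the upper bound, since block unions correspond bijectively to subsets of $V\times W$; but for the lower bound you also need the converse translation, pulling arbitrary $S',T'\subseteq V\times W$ back to the sets $\bigcup_{(v,w)\in S'}\big(\{w\}\times[\ell m]\big)\cap\pi\big(\{v\}\times[\ell n]\big)$ for an arbitrary, not block-structured, optimal $\pi$ --- a step your sketch does not address.
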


We only prove \eqref{eq:81} here. The proof of \eqref{eq:81a} is similar (see \cite{Lovasz12}).
We need the following simple lemma.

\begin{lemma}
  Let $I,J$ be finite sets and $Q\in\DS^*(I,J)$. Then for every
  $\epsilon>0$ there is a matrix $Q'\in\DS^*(I,J)$ with rational
  entries such that $\|Q-Q'\|_{(\infty)}\le\epsilon$.
\end{lemma}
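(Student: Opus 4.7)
The plan is to exploit the fact that $\DS^*(I,J)\subseteq\Real^{I\times J}$ is a \emph{rational polytope}: it is the intersection of the nonnegative orthant with the affine subspace cut out by the $|I|+|J|$ rational equations $\sum_{j\in J}Q(i,j)=1/m$ and $\sum_{i\in I}Q(i,j)=1/n$. Since all defining (in)equalities have rational coefficients, a standard result from polyhedral theory (each vertex is the unique solution of some maximal subsystem of equalities, hence rational by Cramer's rule) yields that every vertex of $\DS^*(I,J)$ has entries in $\Rat$. Let $V_1,\ldots,V_K$ enumerate these finitely many vertices and put $M\coloneqq\max_{k}\|V_k\|_{(\infty)}$.

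Since $\DS^*(I,J)$ is compact and convex, Minkowski's theorem (or, equivalently, the fact that a polytope is the convex hull of its vertices) supplies nonnegative reals $\alpha_1,\ldots,\alpha_K$ with $\sum_k\alpha_k=1$ and
\[
Q=\sum_{k=1}^K\alpha_k V_k.
\]
The strategy is then to replace the (possibly irrational) weights $\alpha_k$ by rational weights $\alpha_k'\ge 0$ with $\sum_k\alpha_k'=1$ that approximate the $\alpha_k$ uniformly, and to set $Q'\coloneqq\sum_k\alpha_k'V_k$. Convexity of $\DS^*(I,J)$ forces $Q'\in\DS^*(I,J)$, and rationality of the $V_k$ together with that of the $\alpha_k'$ gives $Q'\in\Rat^{I\times J}$.

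To construct the $\alpha_k'$, I would pick a large integer $N$ and set $\alpha_k'\coloneqq\lfloor N\alpha_k\rfloor/N$ for $k<K$ and $\alpha_K'\coloneqq 1-\sum_{k<K}\alpha_k'$. For $N$ sufficiently large, all $\alpha_k'$ are nonnegative, they sum to $1$, and $\sum_k|\alpha_k-\alpha_k'|\le 2K/N$. Consequently,
\[
\|Q-Q'\|_{(\infty)}\le\sum_{k=1}^K|\alpha_k-\alpha_k'|\cdot\|V_k\|_{(\infty)}\le\frac{2KM}{N}\le\epsilon
\]
whenever $N\ge 2KM/\epsilon$, which finishes the argument.

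The only non-mechanical ingredient is the rationality of the vertices of $\DS^*(I,J)$, which as noted is an entirely standard consequence of the defining (in)equalities being rational. Every other step is an elementary manipulation of a convex combination on a finite-dimensional simplex, so I do not anticipate any serious obstacle.
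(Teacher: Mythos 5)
Your proof is correct, but it takes a different route from the paper. The paper perturbs $Q$ directly: it fixes rationals $a_{ij}\le Q(i,j)\le b_{ij}$ within $\epsilon$ of each entry, writes down the system consisting of these box constraints together with nonnegativity and the row- and column-sum equations, observes that $Q$ itself witnesses solvability, and then invokes the standard fact that a solvable system of linear inequalities with rational coefficients has a rational solution; that solution is the desired $Q'$. You instead use the polyhedral geometry of $\DS^*(I,J)$: its vertices are rational (same underlying phenomenon — rationality of basic solutions of rational systems), you write $Q$ as a convex combination of vertices via Minkowski's theorem, and you round the barycentric weights to rationals, which keeps you inside the polytope by convexity and gives an explicit quantitative bound $\|Q-Q'\|_{(\infty)}\le 2KM/N$. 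The paper's argument is shorter because it delegates everything to one black-box fact and needs no vertex enumeration; yours is more constructive in spirit (explicit choice of $N$) and makes transparent where rationality enters, at the cost of invoking two standard results (rational vertices plus the vertex decomposition) and carrying the vertex count $K$ through the estimate, which is harmless here since no effectivity is claimed. Two minor remarks: your $\alpha_k'$ are in fact nonnegative for every $N\ge 1$ (since $\lfloor N\alpha_k\rfloor/N\le\alpha_k$ for $k<K$ forces $\alpha_K'\ge\alpha_K\ge 0$), so the phrase ``for $N$ sufficiently large'' is only needed for the error bound; and one can note $M\le\min\{1/|I|,1/|J|\}\le 1$, so the constant $M$ could be dropped altogether.
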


\begin{proof}
   Let $m\coloneqq|I|,n\coloneqq|J|$, and $\epsilon>0$. For all $(i,j)\in I\times J$, let
   $a_{ij},b_{ij}\in\Rat$ such that $Q(i,j)-\epsilon\le a_{ij}\le
   Q(i,j)\le b_{ij}\le Q(i,j)+\epsilon$. Consider the following system
  $L$ of linear inequalities in the variables $x_{ij}$, for $i\in
  I,j\in J$:
  \begin{align*}
    X_{ij}&\ge a_{ij}&\text{for all }i\in I,j\in J,\\ 
    X_{ij}&\le b_{ij}&\text{for all }i\in I,j\in J,\\ 
    X_{ij}&\ge 0&\text{for all }i\in I,j\in J,\\
    \sum_{j=1}^nX_{ij}&=\frac{1}{m}&\text{for all }i\in I,\\
    \sum_{i=1}^mX_{ij}&=\frac{1}{n}&\text{for all }j\in J.
  \end{align*}
  Then setting $X_{ij}$ to $Q(i,j)$ for all $i,j$ yields as solution
  to this system, and every solution corresponds to a matrix
  $Q'\in\DS^*(I,J)$ with $\|Q-Q'\|_{(\infty)}\le\epsilon$. Since a solvable system of
  linear inequalities with rational coefficients always has a rational
  solution, the assertion of the lemma follows.
\end{proof}

\begin{corollary}
  For all graphs $G=(V,E_G)$ and $H\coloneqq(W,E_H)$ and all
  $\epsilon>0$ there is a
  rational matrix $Q\in\DS^*(V,W)$ such that
  \[
    \sum_{v,v'\in V,w,w'\in
      W}\hspace{-3mm}\big|A_G(v,v')-A_H(w,w')\big|Q(v,w)Q(v',w')\le\delta_{(1)}^{\textup{OT}}(G,H)+\epsilon.
  \]
\end{corollary}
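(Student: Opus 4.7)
The plan is to fix an optimal real-valued coupling, perturb it to a rational one by the preceding lemma, and show that the Gromov--Wasserstein--style objective changes by at most $\epsilon$ under this perturbation.

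First I would observe that $\DS^*(V,W)$ is a compact convex subset of $\Real^{V\times W}$ and the map
\[
  F(Q)\;\coloneqq\;\sum_{v,v'\in V,\,w,w'\in W}\bigl|A_G(v,v')-A_H(w,w')\bigr|\,Q(v,w)\,Q(v',w')
\]
is a polynomial in the entries of $Q$, hence continuous. So a minimiser $Q^{*}\in\DS^*(V,W)$ exists, with $F(Q^{*})=\delta_{(1)}^{\textup{OT}}(G,H)$ — this is the same compactness argument alluded to earlier when $\delta_{(1)}^{\textup{OT}}$ was defined.

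Given $\epsilon>0$, choose an auxiliary parameter $\epsilon'>0$ (to be fixed at the end) and apply the preceding lemma to $Q^{*}$ to obtain a rational $Q'\in\DS^{*}(V,W)$ with $\|Q^{*}-Q'\|_{(\infty)}\le\epsilon'$. Writing $E\coloneqq Q'-Q^{*}$ and expanding the product $Q'(v,w)Q'(v',w')$, the difference $F(Q')-F(Q^{*})$ splits into a sum of terms linear in $E$ (each multiplied by an entry of $Q^{*}$) and a sum of terms quadratic in $E$. Using $|A_G-A_H|\le 1$, together with $\sum_{v,w}Q^{*}(v,w)=1$ and $|E(v,w)|\le\epsilon'$, a one-line estimate produces a bound of the shape $|F(Q')-F(Q^{*})|\le 2\|E\|_{(1)}+\|E\|_{(1)}^{2}$, which with $m\coloneqq|V|$ and $n\coloneqq|W|$ is at most $2mn\,\epsilon'+m^{2}n^{2}\,{\epsilon'}^{2}$. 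Choosing $\epsilon'$ small enough as a function of $m,n,\epsilon$ makes this at most $\epsilon$, and then $F(Q')\le F(Q^{*})+\epsilon=\delta_{(1)}^{\textup{OT}}(G,H)+\epsilon$, as required.

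There is no serious obstacle: the only conceptual step is existence of the minimiser, and the rest is a polynomial continuity estimate. A point worth noting is that the preceding lemma already delivers the rational approximant inside $\DS^{*}(V,W)$, so the row- and column-sum constraints of the coupling are preserved for free and we do not need to patch them up after the approximation.
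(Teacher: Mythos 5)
Your proof is correct and is exactly the argument the paper intends: the corollary is stated without proof as an immediate consequence of the preceding lemma, namely take a minimiser $Q^{*}\in\DS^{*}(V,W)$ (which exists by the compactness remark made when $\delta^{\textup{OT}}_{(1)}$ was defined), replace it by a rational $Q'\in\DS^{*}(V,W)$ with $\|Q^{*}-Q'\|_{(\infty)}\le\epsilon'$, and absorb the change in the objective via a continuity estimate. Your explicit bound $|F(Q')-F(Q^{*})|\le 2\|E\|_{(1)}+\|E\|_{(1)}^{2}\le 2mn\epsilon'+m^{2}n^{2}\epsilon'^{2}$, using $|A_G-A_H|\le 1$ and $\sum_{v,w}Q^{*}(v,w)=1$, is valid and completes the routine details.
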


\begin{proof}[Proof of \eqref{eq:81} (Theorem~\ref{thm:dotot})]
Let $G=(V,E_G)$ and $H=(W,E_H)$ be graphs of order
  $m\coloneqq|V|,n\coloneqq|W|$. 
  We first prove that
  \[
    \delta_{(1)}^\odot(G,H)\ge\delta_{(1)}^{\textup{OT}}(G,H).
  \]
  Let
  ${\ell}\in\Nat$ and $G'\coloneqq G^{\odot{\ell}n},H'\coloneqq H^{\odot{\ell}m}$, and let
  $N\coloneqq {\ell}mn=|G'|=|H'|$. We shall
  prove that
  \begin{equation}
    \label{eq:15}
    \frac{1}{N^2}\delta_{(1)}(G',H')\ge \delta_{(1)}^{\textup{OT}}(G,H).
  \end{equation}
  Recall that the vertex set of $G'$ is $V\times[{\ell}n]$ and the vertex
  set of $H'$ is $W\times[{\ell}m]$.
  Let $\pi\in\Bij(V\times[{\ell}n],W\times [{\ell}m])$ such that
  $\delta_{(1)}(G',H')=\|A_{G'}^\pi-A_{H'}\|_{(1)}$. We define a matrix
  $Q\in\Real^{V\times W}$ by
  \[
    Q(v,w)\coloneqq \frac{1}{N}\big|\big\{ i\in[{\ell}n]\bigmid
    \pi\big((v,i)\big)\in\{w\}\times[{\ell}m]\big\}\big|.
  \]
  Then for every $v\in V$ we have
  \[
    \sum_{w\in W}Q(v,w)=\frac{{\ell}n}{N}=\frac{1}{m}.
  \]
  Furthermore, for every $w\in W$ we have
  \[
    \sum_{v\in V}Q(v,w)=\frac{{\ell}m}{N}=\frac{1}{n}.
  \]
  Thus $Q\in\DS^*(V,W)$. For all $v\in V,w\in W$, let $I(v,w)$ be the
  set of all $i\in[\ell n]$ such that
  $\pi\big((v,i)\big)\in\{w\}\times[\ell m]$. Then
  \begin{align*}
    \delta^{\textup{OT}}_{(1)}(G,H)&\le\sum_{v,v'\in V,w,w'\in
                       W}|A_G(v,v')-A_H(w,w')|Q(v,w)Q(v',w')\\
    &=\frac{1}{N^2}\sum_{v,v',w,w'
                      }\sum_{\substack{i\in I(v,w),\\i'\in I(v',w')}}\Big|A_{G'}\big((v,i),(v',i')\big)-A_{H'}\big(\pi(v,i),\pi(v',i')\big)\Big|\\
     &=\frac{1}{N^2}\sum_{(v,i),(v',i')\in
       V\times[{\ell}n]}\Big|A_{G'}\big((v,i),(v',i')\big)-A_{H'}\big(\pi(v,i),\pi(v',i')\big)\Big|\\
    &=\frac{1}{N^2}\delta_{(1)}(G',H').
  \end{align*}
  To prove the converse inequality, let $\epsilon>0$. We shall prove
  that
  \begin{equation}
    \label{eq:16}
    \delta_{(1)}^\odot(G,H)\le\delta_{(1)}^{\textup{OT}}(G,H)
    +\epsilon.
  \end{equation}
  Let
  $Q\in\DS^*(V,W)$ be a matrix with rational entries such that
  \[
    \sum_{v,v'\in V,w,w'\in
      W}\hspace{-3mm}\big|A_G(v,v')-A_H(w,w')\big|Q(v,w)Q(v',w')\le\delta_{(1)}^{\textup{OT}}(G,H)+\epsilon.
  \]
  Let $\ell\in\Nat$ be a common denominator for the entries of $Q$,
  and $q(v,w)$ such that $Q(v,w)=\frac{q(v,w)}{\ell}$. Note that
  $\sum_{w\in W}q(v,w)=\frac{\ell}{m}$ for all $v\in V$ and
  $\sum_{v\in V}q(v,w)=\frac{\ell}{n}$ for all $w\in W$. For all
  $v\in V$ we choose a partition $\big(I(v,w)\big)_{w\in W}$ of
  $[\ell n]$ such that $|I(v,w)|=mnq(v,w)$, and for all
  $w\in W$ we choose a partition $\big(J(v,w)\big)_{v\in V}$ of
  $[\ell m]$ such that $|J(v,w)|=mnq(v,w)$. For all $v,w$, we let
  $\pi_{vw}$ be a bijection from $I(v,w)$ to $J(v,w)$. 

  Let $G'\coloneqq G^{\odot\ell n}$ and $H'\coloneqq H^{\odot\ell m}$,
  and let $N\coloneqq\ell mn=|G'|=|H'|$. We let $\pi$ be the bijection
  from $V\times[\ell n]$ to $W\times[\ell m]$ defined by
  $\pi\big((v,i)\big)=\big(w,\pi_{vw}(i)\big)$ for all $v\in V,w\in
  W$, and $i\in I(v,w)$. Then, by essentially the same argument as in the proof
  of the other direction, we have
  \begin{align*}
    \frac{1}{N^2}\delta_{(1)}(G',H')
    &\le\frac{1}{N^2}\sum_{(v,i),(v',i')\in
      V\times[{\ell}n]}\Big|A_{G'}\big((v,i),(v',i')\big)-A_{H'}\big(\pi(v,i),\pi(v',i')\big)\Big|\\
    &=\frac{1}{N^2}\sum_{v,v'\in V,w,w'\in W
                      }\sum_{\substack{i\in I(v,w),\\i'\in
    I(v',w')}}\Big|A_{G'}\big((v,i),(v',i')\big)-A_{H'}\big(\pi(v,i),\pi(v',i')\big)\Big|\\
    &=      
    \sum_{v,v'\in V,w,w'\in
                       W}\big|A_G(v,v')-A_{H}(w,w')\big|\cdot\frac{|I(v,w)|}{N}\cdot\frac{|I(v',w')|}{N}\\
    &=      
    \sum_{v,v'\in V,w,w'\in
      W}|A_G(v,v')-A_H(w,w')|Q(v,w)Q(v',w')\\
    &\le\delta_{(1)}^{\textup{OT}}(G,H)+\epsilon.
\end{align*}             
This completes the proof of \eqref{eq:81}.
\end{proof}

We close this section by proving the inequality \eqref{eq:180} of
Theorem~\ref{thm:dot}. Our proof follows \cite{Pikhurko10}.
We need the
following lemma, which is a variant of Birkhoff's Theorem.

\begin{lemma}\label{lem:birkhoff}
  Let $A\in\Nat^{I\times I}$ such that there is an $m\in\Nat$ such
  that all row and column sums in $A$ are $m$. Then $A$ can be written
  as the sum of $m$ permutation matrices.
\end{lemma}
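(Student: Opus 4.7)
The plan is to proceed by induction on $m$. The base case $m = 0$ is trivial, since the hypothesis forces $A = 0$, which is the empty sum of permutation matrices. For the inductive step with $m \ge 1$, I would extract a single permutation matrix $P$ satisfying $P(i,j) \le A(i,j)$ for every $(i,j)$, and then apply the inductive hypothesis to $A - P \in \Nat^{I \times I}$, whose row and column sums are all equal to $m - 1$.

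The heart of the argument is producing such a $P$, i.e.\ exhibiting a bijection $\pi \in \Sym(I)$ with $A(i, \pi(i)) \ge 1$ for every $i \in I$. I would do this via Hall's marriage theorem applied to the bipartite graph $B$ on vertex classes $I$ and $I$ in which a left vertex $i$ is joined to a right vertex $j$ exactly when $A(i,j) > 0$. To verify Hall's condition, fix $S \subseteq I$ on the left and let $T \coloneqq N_B(S)$ on the right. By the definition of $T$, the entries $A(i,j)$ with $i \in S$ and $j \notin T$ all vanish, so
\[
  m \cdot |S| \;=\; \sum_{i \in S} \sum_{j \in I} A(i,j) \;=\; \sum_{i \in S} \sum_{j \in T} A(i,j) \;\le\; \sum_{i \in I} \sum_{j \in T} A(i,j) \;=\; m \cdot |T|,
\]
where the last equality uses that every column sum equals $m$. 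Dividing by $m \ge 1$ gives $|T| \ge |S|$, which is Hall's condition. Hence $B$ admits a perfect matching, and this matching is precisely the graph of a bijection $\pi \in \Sym(I)$ with $A(i, \pi(i)) \ge 1$; the associated permutation matrix $P$ then satisfies $P \le A$ entrywise.

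Once $P$ is extracted, $A - P$ is again a matrix in $\Nat^{I \times I}$, and all its row and column sums equal $m - 1$. By the inductive hypothesis $A - P$ is a sum of $m - 1$ permutation matrices, and adjoining $P$ yields the required decomposition of $A$ into $m$ permutation matrices. The only nontrivial step is the Hall-condition double count above; everything else is routine bookkeeping, so I expect no genuine obstacle.
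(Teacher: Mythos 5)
Your proof is correct and is essentially the paper's argument: both rest on Hall's marriage theorem together with the same double count showing that equal positive row and column sums force Hall's condition (the paper phrases this contrapositively, deriving the contradiction $k(m-\ell)\le(k-1)(m-\ell)$ from a maximal family of permutation matrices, whereas you verify the condition directly and peel off one permutation matrix per induction step). The two presentations are interchangeable, so no further comment is needed.
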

    
\begin{proof}
  Let $\ell\le m$ be maximum such that there are
  permutation matrices $P_1,\ldots,P_\ell$ such that
  \[
    \sum_{i=1}^\ell P_i\le A.
  \]
  (Inequalities between matrices are entrywise.)
  Then $B\coloneqq A-\sum_{i=1}^\ell P_i$ is a non-negative integer matrix with row and
  columns sums $m-\ell$. By the maximality of $\ell$, there is no
  permutation matrix $P$ such that $P\le B$. Thus by Hall's Marriage
  Theorem, there is a set $K\subseteq I$ such that less than $k\coloneqq|K|$
  columns $j$ have an entry $B(i,j)\ge 1$ with $i\in K$. Let $L$ be
  the set of all $j\in I$ such that $B(i,j)\ge 1$ for some $i\in
  K$. Then $|L|\le k-1$ and $B(i,j)=0$ for all $i\in K,j\not\in L$. Thus
  \begin{align*}
    k(m-\ell)=\sum_{i\in K,j\in I}B(i,j)=\sum_{i\in K,j\in L}B(i,j)\le
    \sum_{i\in I,j\in L}B(i,j)\le(k-1)(m-\ell).
  \end{align*}
  It follows that $(m-\ell)=0$ and thus $A=\sum_{i=1}^m P_i$.
\end{proof}

\begin{lemma}\label{lem:3to1}
  Let $\|\cdot\|$ be a permutation-invariant matrix norm.
  For finite sets $I,J$
  of the same cardinality, let
  $A\in\Real^{I\times I},B\in\Real^{J\times J}$ be symmetric matrices. Then for
  $\pi_1,\pi_2,\pi_3\in\Bij(I,J)$ we have
  \[
    \|A^{\pi_1,\pi_2}-B\|+\|A^{\pi_3,\pi_2}-B\|+\|A^{\pi_3,\pi_1}-B\|\ge
    \|A^{\pi_1}-B\|.
  \]
\end{lemma}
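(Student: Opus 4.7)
The plan is to reduce the inequality to a single application of the triangle inequality combined with the permutation and transpose invariance of the norm. Write $A^{\pi_1}=A^{\pi_1,\pi_1}$. First I would apply the triangle inequality in the form
\[
  \|A^{\pi_1,\pi_1}-B\|\;\le\;\|A^{\pi_1,\pi_1}-A^{\pi_1,\pi_2}\|\;+\;\|A^{\pi_1,\pi_2}-B\|,
\]
which already accounts for the summand $\|A^{\pi_1,\pi_2}-B\|$ on the right-hand side. What remains is to bound the ``change-of-column'' term $\|A^{\pi_1,\pi_1}-A^{\pi_1,\pi_2}\|$ by $\|A^{\pi_2,\pi_3}-B\|+\|A^{\pi_3,\pi_1}-B\|$.

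Next I would exploit permutation invariance to rewrite this middle term. Using the identity $(C^{\mu,\nu})^{\rho,\sigma}=C^{\rho\mu,\sigma\nu}$, applying $(\pi_3\pi_1^{-1},\mathrm{id})$ sends $A^{\pi_1,\pi_1}$ to $A^{\pi_3,\pi_1}$ and $A^{\pi_1,\pi_2}$ to $A^{\pi_3,\pi_2}$, hence
\[
  \|A^{\pi_1,\pi_1}-A^{\pi_1,\pi_2}\|\;=\;\|A^{\pi_3,\pi_1}-A^{\pi_3,\pi_2}\|.
\]
A second triangle inequality, inserting $B$, gives
\[
  \|A^{\pi_3,\pi_1}-A^{\pi_3,\pi_2}\|\;\le\;\|A^{\pi_3,\pi_1}-B\|+\|A^{\pi_3,\pi_2}-B\|.
\]

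Finally I would use the transpose-invariance of $\|\cdot\|$ together with the symmetry of $A$ and $B$: since $(A^{\pi_3,\pi_2}-B)^\top=A^{\pi_2,\pi_3}-B$, we obtain $\|A^{\pi_3,\pi_2}-B\|=\|A^{\pi_2,\pi_3}-B\|$. Chaining these three steps yields the claim.

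The only non-trivial ingredient is spotting the right auxiliary permutation in the second step; once one realises that permuting only the \emph{row} index of both matrices $A^{\pi_1,\pi_1}$ and $A^{\pi_1,\pi_2}$ (sending $\pi_1\mapsto\pi_3$ in the first coordinate) produces the pair $A^{\pi_3,\pi_1}$, $A^{\pi_3,\pi_2}$ whose components can be matched to the remaining two summands of the right-hand side (after a harmless transpose), the rest is routine. Thus essentially the entire proof is two triangle inequalities plus one permutation and one transpose.
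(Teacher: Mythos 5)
Your proof is correct and is essentially the paper's argument: both rest on the triangle inequality combined with permutation invariance and transpose invariance (via the symmetry of $A$ and $B$) to re-index the three summands. The only difference is bookkeeping --- the paper rewrites the given terms into a single telescoping chain $A^{\pi_1}\to B^{\iota,\pi_2^{-1}\pi_1}\to A^{\pi_3,\pi_1}\to B$, whereas you apply two triangle inequalities around the intermediate matrices $A^{\pi_1,\pi_2}$ and $A^{\pi_3,\pi_2}$ and then re-index, which amounts to the same computation up to relabelling.
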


\begin{proof}
  Let $\iota$ denote the identity permutation (on either $I$ und
  $J$). Then by the permutation invariance of $\|\cdot\|$:
    \begin{align*}
      \|A^{\pi_1,\pi_2}-B\|&=\|A^{\pi_1}-B^{\iota,\pi_2^{-1}\pi_1}\|,\\
      \|A^{\pi_3,\pi_2}-B\|&=\|A^{\pi_3,\pi_1}-B^{\iota,\pi_2^{-1}\pi_1}\|
                             =\|B^{\iota,\pi_2^{-1}\pi_1}-A^{\pi_3,\pi_1}\|.
    \end{align*}
    Thus
    \begin{align*}
      &\|A^{\pi_1,\pi_2}-B\|+\|A^{\pi_3,\pi_2}-B\|+\|A^{\pi_3,\pi_1}-B\|\\
      &=\|A^{\pi_1}-B^{\iota,\pi_2^{-1}\pi_1}\|
        +\|B^{\iota,\pi_2^{-1}\pi_1}-A^{\pi_3,\pi_1}\|
        + \|A^{\pi_3,\pi_1}-B\|\\
      &\ge \|A^{\pi_1}-B^{\iota,\pi_2^{-1}\pi_1}
        +B^{\iota,\pi_2^{-1}\pi_1}-A^{\pi_3,\pi_1}
        +A^{\pi_3,\pi_1}-B\|\\
      &=\|A^{\pi_1}-B\|.
    \end{align*}
\end{proof}

\bigskip
\begin{proof}[Proof of \eqref{eq:180} (Theorem~\ref{thm:dot})]
  Let $G=(V,E_G)$ and $H=(W,E_H)$ be graphs of the same order
  $n\coloneqq|V|=|W|$, and let $\epsilon>0$. We shall prove that
  \begin{equation}
    \label{eq:9}
        \frac{1}{n^2}\delta_{(1)}(G,H)\le 3\delta_{(1)}^{\textup{OT}}(G,H)+\epsilon. 
  \end{equation}
  Then \eqref{eq:180} follows by \eqref{eq:81}.
  
Let
  $Q\in\DS^*(V,W)$ be a matrix with rational entries such that
  \[
    \sum_{v,v'\in V,w,w'\in
      W}\hspace{-3mm}\big|A_G(v,v')-A_H(w,w')\big|Q(v,w)Q(v',w')\le\delta_{(1)}^{\textup{OT}}(G,H)+\frac{\epsilon}{3}.
  \]
  Let $\ell\in\Nat$ be a common denominator for the entries of $Q$,
  and $q(v,w)$ such that $Q(v,w)=\frac{q(v,w)}{\ell}$. Observe that
  $\ell Q$ is a nonnegative integer matrix with row and column sums
  $m\coloneqq\frac{\ell}{n}$, and hence by Lemma~\ref{lem:birkhoff} there are permutation matrices
  $P_1,\ldots,P_m$ such that
  \[
    Q=\frac{1}{\ell}\sum_{i=1}^mP_i.
  \]
  For every $i\in[m]$, let $\pi_i\in\Bij(V,W)$ be the bijection
  corresponding to $P_i$, that is, $P_i(v,w)=1\iff\pi_i(v)=w$.
  Hence
  \begin{align*}
    &\sum_{v,v'\in V,w,w'\in
    W}\hspace{-3mm}\big|A_G(v,v')-A_H(w,w')\big|Q(v,w)Q(v',w')\\
    &=\frac{1}{\ell^2}\sum_{i,j\in[m]}\sum_{v,v'\in V,w,w'\in
    W}\hspace{-3mm}\big|A_G(v,v')-A_H(w,w')\big|P_i(v,w)P_j(v',w')\\
    &=\frac{1}{\ell^2}\sum_{i,j\in[m]}\sum_{v,v'\in
      V}\Big|A_G(v,v')-A_H\big(\pi_i(v),\pi_j(v')\big)\Big|\\
    &=\frac{1}{\ell^2}\sum_{i\in[m]}\sum_{v,v'\in
      V}\Big|A_G(v,v')-A_H\big(\pi_i(v),\pi_i(v')\big)\Big|\\
    &\hspace{4cm}+\frac{1}{\ell^2}\sum_{\substack{i,j\in[m],\\i\neq j}}\sum_{v,v'\in
    V}\Big|A_G(v,v')-A_H\big(\pi_i(v),\pi_j(v')\big)\Big|\\
    &=\frac{1}{\ell^2}\sum_{i=1}^m\|A_G^{\pi_i}-A_H\|_{(1)}+\frac{1}{\ell^2}\sum_{\substack{i,j\in[m],\\i\neq
      j}}\|A_G^{\pi_i,\pi_j}-A_H\|_{(1)}.
  \end{align*}
  For $i,j\in[m]$, we let
  $\Delta(i,j)\coloneqq \|A_G^{\pi_i,\pi_j}-A_H\|_{(1)}$. Then
  $\Delta(i,j)=\Delta(j,i)$ and, by Lemma~\ref{lem:3to1},
  $\Delta(i,j)+\Delta(j,k)+\Delta(k,i)\ge \Delta(i,i)$ for all
  $k$. Furthermore,
  $\Delta(i,i)=\|A_G^{\pi_i}-A_H\|_{(1)}\ge\delta_{(1)}(G,H)$.
  Thus
  \[
    \binom{m}{3}\delta_{(1)}(G,H)\le\sum_{\{i,j,k\}\in\binom{[m]}{3}}
    \big(\Delta(i,j)+\Delta(j,k)+\Delta(k,i)\big)
    =(m-2)\sum_{\{i,j\}\in\binom{[m]}{2}}\Delta(i,j),
  \]
  where the last equality holds because every 2-element subset of
  $[m]$ is contained in exactly $(m-2)$ 3-element subsets. It follows
  that
  \[
    \sum_{\substack{i,j\in[m]\\ i\neq
        j}}\Delta(i,j)=2\sum_{\{i,j\}\in\binom{[m]}{2}}\Delta(i,j)\ge\frac{m(m-1)}{3}\delta_{(1)}(G,H).
  \]
  Overall,
  \begin{align*}
    \delta_{(1)}^{\textup{OT}}(G,H)+\frac{\epsilon}{3}&\ge \frac{1}{\ell^2}\sum_{i=1}^m\Delta(i,i)+\frac{1}{\ell^2}\sum_{\substack{i,j\in[m],\\i\neq
    j}}\Delta(i,j)\\
    &\ge
      \frac{m}{\ell^2}\delta_{(1)}(G,H)+\frac{m(m-1)}{3\ell^2}\delta_{(1)}(G,H)\\
    &\ge\frac{m^2}{3\ell^2}\delta_{(1)}(G,H)\\
    &=\frac{1}{3n^2}\delta_{(1)}(G,H).
  \end{align*}
  This proves \eqref{eq:9}.
\end{proof}

\section{Declarative Distances}
Recall that the declarative view on graph similarity is that two
graphs are similar if they have similar properties. This idea leads to
the definition of graph metrics that are quite different from the
``operational'' metrics considered so far.

\subsection{Vector Embeddings}
A \emph{(graph) vector embedding} is an isomorphism invariant mapping
$\eta:\CG\to\mathbb V$ from graphs to some normed vector space
$\mathbb V$ over the reals,\footnote{We restrict our attention to real
  vector spaces here, though in principle we could also admit complex
  vector spaces.} whose norm we denote by $\|\cdot\|_{\mathbb V}$. \emph{Isomorphism invariant} means
that for isomorphic graphs $G,H$ we have $\eta(G)=\eta(H)$.  To be
able to develop a meaningful theory, we will usually assume that the
normed vector space $(\mathbb V,\|\cdot\|_{\mathbb V})$ satisfies
additional conditions such as being a Banach or Hilbert space, but
there is no need to worry about this here. Each
vector embedding $\eta:\CG\to\mathbb V$ induces a graph metric
$\delta_\eta$ defined by
\[
  \delta_\eta(G,H)\coloneqq\|\eta(G)-\eta(H)\|_{\mathbb V}.
\]
While this looks similar to the graph metrics defined by matrix norms,
note that the mapping $G\mapsto A_G$ is \emph{not} a vector embedding
in the sense defined here because the codomain $\Real^{V_G\times V_G}$
depends on the input graph $G$.

Note that, as opposed to the ``operational distances'' considered in
the previous section, the definition of $\delta_\eta$ does not
involve an alignment between the two graphs.

\begin{example}
  Consider the mapping $\eta:\CG\to\Real^4$ defined by
  \[
    \eta(G)\coloneqq
    \begin{pmatrix}
      |G|\\
      \text{number of edges of $G$}\\
      \text{number of triangles in $G$}\\
      \text{number of 4-cycles in $G$}
    \end{pmatrix}.
  \]
  (We equip $\Real^4$ with
  the usual Euclidean norm to turn it into a normed vector space.)
  If we want to use the induced metric $\delta_\eta$ to compare
  graphs of different orders, we may consider normalising the entries,
  for example, by defining $\tilde\eta: \CG\to\Real^4$ as
  \[
    \tilde\eta(G)\coloneqq
    \begin{pmatrix}
      1\\
      \frac{1}{|G|^2}\text{number of edges of $G$}\\
      \frac{1}{|G|^3}\text{number of triangles in $G$}\\
      \frac{1}{|G|^4}\text{number of 4-cycles in $G$}
    \end{pmatrix}.
  \]
  Of course here we can omit the first component and instead define
  $\tilde\eta$ as a mapping into the space $\Real^3$.
\end{example}

\begin{example}
  As a second example, consider the mapping $\CG\to\Real^k$
  mapping a graph to the $k$ largest eigenvalues of its adjacency
  matrix in decreasing order.
\end{example}

Generalising the approach taken in the previous examples, we can try
to collect features that we think are characteristic of the
properties of graphs that we are interested in for a specific application
scenario (think of the molecular graphs for synthetic fuels mentioned
as an example in the introduction) and define a vector embedding based on these features, which
gives us a tailored graph metric. We can also learn such a vector
embedding from data, for example using graph neural networks, which
leads to the area of \emph{graph representation learning} (see \cite{Hamilton20}).

\subsection{Graph Kernels}
For our theoretical considerations, nothing prevents us  from embedding
graphs into an infinite dimensional vector space $\mathbb
V$. Remarkably, this can even be useful in practice, where we are
mainly interested in the metric and often do not need the explicit
vector embedding. This leads to the idea of graph kernels.

A \emph{kernel function} for a set $\CX$ of objects is a binary
function $K:\CX\times\CX\to\Real$ that is symmetric, that is,
$K(x,y)=K(y,x)$ for all $x,y\in\CX$, and \emph{positive semidefinite},
that is, for all $n\ge 1$ and all $x_1,\ldots,x_n\in\CX$ the matrix
$A\in\Real^{n\times n}$ with entries $A(i,j):=K(x_i,x_j)$ is positive
semidefinite. Recall that a symmetric matrix $A\in\Real^{n\times n}$
is positive semidefinite if $\vec x^\top A\vec x\ge 0$ for all
$\vec x\in\Real^n$. It turns out that kernel functions are always
associated with vector embeddings into \emph{inner product spaces}. An
\emph{inner product space} is a (possibly infinite-dimensional) real
vector space $\mathbb I$ with a symmetric bilinear form
$\angles{\cdot,\cdot}_{\mathbb I}:\mathbb I\times \mathbb I\to\Real$
satisfying $\angles{\vec x,\vec x}_{\mathbb I}>0$ for all non-zero
$\vec x\in\mathbb I$.\footnote{If $\mathbb I$ is complete with respect
  to the metric defined by
  $\delta_{\mathbb I}(\vec x,\vec y)\coloneqq\angles{\vec x-\vec
    y,\vec x-\vec y}_{\mathbb I}$, that is, every Cauchy sequence
  converges, it is a \emph{Hilbert space}.} It is not
very hard to prove that a symmetric function $K:\CX\times\CX\to\Real$ is a kernel
function if and only if there is a vector embedding
$\eta:\CX\to\mathbb I$ of $\CX$ into some inner product space
$\mathbb I$, which
we can even take to be a Hilbert space, such
that $K$ is the mapping induced by $\angles{\cdot,\cdot}_{\mathbb I}$,
that is, $K(x,y)=\angles{\eta(x),\eta(y)}_{\mathbb I}$ for all $x,y\in\CX$ (see
\cite[Lemma~16.2]{ShalevB14} for a proof). For background on kernels
and specifically graph kernels, see \cite{KriegeJM20,ScholkopfS02}.

To give a meaningful and practically relevant example of a \emph{graph
  kernel}, that is, a kernel function on the class of graphs, this is
a good place to introduce the \emph{Weisfeiler-Leman algorithm}
\cite{Morgan65,WeisfeilerL68} (also see \cite{GroheKMS21,Kiefer20}), leading to \emph{Weisfeiler-Leman graph kernels}
\cite{ShervashidzeSLMB11}. The \emph{(1-dimensional) Weisfeiler-Leman algorithm}
(a.k.a.~\emph{colour refinement} or \emph{naive vertex
  classification}) iteratively colours the vertices of a graph $G$ as
follows. Initially, all vertices get the same colour. Then in each
iteration, two vertices $v,w$ get different colours if there is some
colour $c$ in the current colouring such that $v$ and $w$ have a
different number of neighbours of colour $c$. After at most $|G|-1$
iterations, the colouring becomes \emph{stable}, which means that any
two nodes $v,w$ of the same colour $c$ have the same number of
neighbours of any colour $d$. The WL algorithm \emph{distinguishes}
two graphs $G,H$ if there is a colour $c$ such that $G$ and $H$ have a
different number of vertices of colour $c$.

It is useful to think of the colours of 1-WL as rooted trees: the initial
colour of all vertices is just the 1-vertex tree, and the colour a
vertex $v$ receives in iteration $i+1$ is the tree obtained by
attaching the colours of all neighbours of $v$ in iteration $i$, which
are trees of height $i$, to a new root. This is illustrated in
Figure~\ref{fig:wl}.

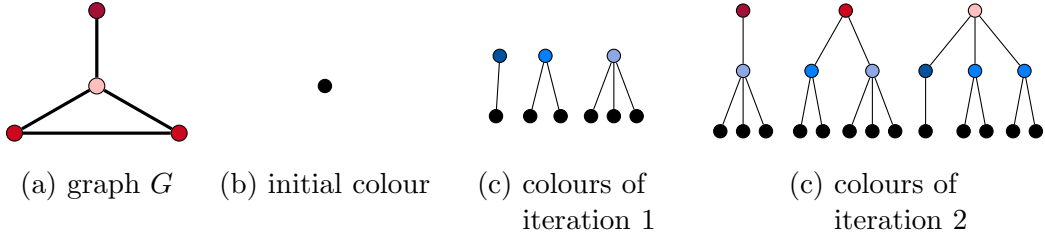
\begin{figure}
  \centering
  \begin{tikzpicture}[
    vertex/.style={circle,draw, minimum size=5pt,inner sep=0pt,fill=black},
    ]
    \begin{scope}
      \node[vertex,minimum size=6pt,fill=pink]
          (a) at (0,0) {};
          \node[vertex,minimum size=6pt,fill=bordeaux]
          (b) at (90:1cm) {};
          \node[vertex,minimum size=6pt,fill=rot]
          (c) at (210:1.25cm) {};
          \node[vertex,minimum size=6pt,fill=rot]
          (d) at (330:1.25cm) {};
          \draw[very thick] (a) edge (b) edge (c) edge (d) (c)
          edge (d);
          \path (0,-1) node[anchor=north] {(a) graph $G$};
        \end{scope}

        \begin{scope}[xshift=3cm]
          \node[vertex] (b) at (0,0) {};

          \path (0,-1) node[anchor=north] {(b) initial colour};
        \end{scope}

        \begin{scope}[xshift=6.5cm]
         \node[vertex,fill=blau] (c1) at (-1.2,0.4) {};
         \node[vertex] (c2) at (-1.25,-0.4) {};

         \draw (c1) edge (c2);

         \node[vertex,fill=mittelblau] (c3) at (-0.6,0.4) {};
         \node[vertex] (c4) at (-0.8,-0.4) {}; 
         \node[vertex] (c5) at (-0.4,-0.4) {}; 

         \draw (c3) edge (c4) edge (c5);
   
         \node[vertex,fill=hellblau] (c6) at (0.3,0.4) {};
         \node[vertex] (c7) at (0,-0.4) {}; 
         \node[vertex] (c8) at (0.3,-0.4) {}; 
         \node[vertex] (c9) at (0.6,-0.4) {};

         \draw (c6) edge (c7) edge (c8) edge (c9);
         
          \path (-0.2,-1) node[anchor=north] {(c)~\parbox[t]{2cm}{colours
              of\\iteration $1$}};
        \end{scope}

        \begin{scope}[xshift=10.5cm]
         \node[vertex,fill=bordeaux] (d1) at (-2,1) {};
         \node[vertex,fill=hellblau] (d2) at (-2,0.2) {};
         \node[vertex] (d3) at (-2.3,-0.6) {}; 
         \node[vertex] (d4) at (-2,-0.6) {}; 
         \node[vertex] (d5) at (-1.7,-0.6) {};

         \draw (d2) edge (d1) edge (d3) edge (d4) edge (d5);

         \node[vertex,fill=rot] (d6) at (-0.65,1) {};
         \node[vertex,fill=mittelblau] (d7) at (-1.1,0.2) {};
         \node[vertex] (d8) at (-1.25,-0.6) {}; 
         \node[vertex] (d9) at (-0.95,-0.6) {}; 
         \node[vertex,fill=hellblau] (d10) at (-0.3,0.2) {};
         \node[vertex] (d11) at (-0.6,-0.6) {}; 
         \node[vertex] (d12) at (-0.3,-0.6) {}; 
         \node[vertex] (d13) at (0,-0.6) {}; 

         \draw (d7) edge (d6) edge (d8) edge (d9) (d10) edge (d6) edge
         (d11) edge (d12) edge (d13);

          \node[vertex,fill=pink] (d14) at (1.05,1) {};
          \node[vertex,fill=blau] (d15) at (0.4,0.2) {};
          \node[vertex] (d16) at (0.4,-0.6) {}; 
          \node[vertex,fill=mittelblau] (d17) at (1.05,0.2) {};
          \node[vertex] (d18) at (0.9,-0.6) {}; 
          \node[vertex] (d19) at (1.2,-0.6) {}; 
          \node[vertex,fill=mittelblau] (d20) at (1.7,0.2) {};
          \node[vertex] (d21) at (1.55,-0.6) {}; 
          \node[vertex] (d22) at (1.85,-0.6) {}; 

          \draw (d15) edge (d14) edge (d16) (d17) edge (d14) edge
          (d18) edge (d19) (d20) edge (d14) edge (d21) edge (d22);

          \path (-0.1,-1) node[anchor=north] {(c)~\parbox[t]{2cm}{colours
              of\\iteration $2$}};
        \end{scope}

  \end{tikzpicture}
  \caption{Weisfeiler-Leman colours viewed as trees.}
  \label{fig:wl}
\end{figure}

For every $i\in\Nat$, let $\CC_i$ be the set of all colours that can
be obtained by running the WL-algorithm on some graph for
$i$ iterations. If we view colours as trees, $\CC_i$ is the set of all
rooted trees where all leaves have distance exactly $i$ from the
root. Furthermore, let $\CC\coloneqq\bigcup_{i\in\Nat}\CC_i$ be the
set of all colours. With every graph $G$ we associate a vector
$\wl G\in\Real^{\CC}$, where for each $c\in\CC$ the entry
$\wl G(c)$ is the number of vertices of $G$ of colour $c$. Note
that the WL algorithm distinguishes graphs $G$ and $H$ if and only if
$\wl G\neq\wl H$.

Observe that for every graph $G$ the vector $\wl G$ belongs to the
subspace $\VWL$ of $\Real^{\CC}$ consisting of all $\vec
w$ such that for some $n\in\Nat$ 
we have $\sum_{c\in\CC_i}\vec w(c)\le n$ for all $i\in\Nat$. If $\vec
w=\wl G$ then we take $n\coloneqq|G|$.
We can define an inner product $\anglesWL{\cdot,\cdot}$ on
$\VWL$ by
\begin{equation}
  \label{eq:17}
    \anglesWL{\vec w,\vec
    w'}\coloneqq\sum_{i\in\Nat}\frac{1}{2^i}\sum_{c\in\CC_i}\vec
  w(c)\vec w'(c).
\end{equation}
Note that the normalisation factor $1/2^i$ is quite arbitrary and only
introduced to make the series converge.
We define the \emph{Weisfeiler-Leman graph kernel}
$\KWL:\CG\times\CG\to\Real$ by
\begin{equation}
  \label{eq:20}
  \KWL(G,H)\coloneqq\anglesWL{\wl G,\wl H}. 
\end{equation}
Here, the beauty of the kernel idea becomes clear: to compute
$\KWL(G,H)$ we never have to worry about the infinite vectors
$\wl G,\wl H$, but we only have to run the WL algorithm on the
disjoint union of the graphs $G$ and $H$ and compute the stable
colouring. This can be done very efficiently; the stable colouring of
a graph of order $n$ with $m$ edges can be computed in time
$O(m\log n)$ \cite{CardonC82}. In practice, it turns out to be best to only
look at the first few iterations of the WL algorithm. Then we can also
omit the factor $1/2^i$ in the definition of the inner product and
define an inner product by
$\sum_{i\le\ell}\sum_{c\in\CC_i}\vec w(c)\vec w'(c)$, where $\ell$ is
the number of iterations. Empirically, $\ell=5$ iterations seem to
work well \cite{ShervashidzeSLMB11}.

The kernel $\KWL$ can be viewed as a graph similarity measure. It
can be turned into a graph metric $\deltaWL$ defined by
\begin{equation}
  \label{eq:19}
  \deltaWL(G,H)\coloneqq \sqrt{\KWL(G,G)-2\KWL(G,H)+\KWL(H,H)}; 
\end{equation}
note that this is exactly $\|\wl G-\wl H\|_{\textup{WL}}=\sqrt{\angles{\wl G-\wl H, \wl G-\wl H}_{\textup{WL}}}$.

Besides the Weisfeiler-Leman graph kernel, there are many other graph
kernels, for example, based on comparing random walks in the graphs \cite{garflawro03} or
based on counting small subgraphs \cite{shevispet+09}. The
homomorphism embeddings we shall discuss in the following section can be seen
as a variant of the latter.

\subsection{Homomorphism Counts and Densities}
A \emph{homomorphism} from a graph $F=(U,E_F)$ to a graph $G=(V,E_G)$
is a mapping $h:U\to V$ such that $uu'\in E_F$ implies
$h(u)h(u')\in E_G$. We denote the number of homomorphisms from $F$ to
$G$ by $\hom(F,G)$. Homomorphism numbers give us useful information
about a graph. For example, for every graph $G=(V,E)$ we have $\hom(\tikz{\node[fill,circle,inner sep=0pt,minimum
  size=1.5mm] {};},G)=|V|$ and
$\hom(\tikz{
  \node[fill,circle,inner sep=0pt,minimum size=1.5mm] (v1) {};
  \node[fill,circle,inner sep=0pt,minimum size=1.5mm] (v2) at (0.5,0) {};
\draw[thick] (v1) edge (v2);
},G)=2|E|$. Thus it makes sense to use homomorphism counts as features
for vector embeddings.

However, to define such vector embeddings, we prefer to work with
normalised homomorphism counts. Let
$k\coloneqq|F|$ and $n\coloneqq|G|$, and assume that $n\ge 1$. The \emph{homomorphism density} of $F$ in
$G$ is defined to be
\[
  \hd(F,G)\coloneqq\frac{1}{n^k}\hom(F,G)=\Pr_{h\in V^U}(h\text{ is a
    homomorphism from $F$ to $G$})
\]
where the probability ranges over $h$ picked uniformly at random from
the set of all mappings from $U$ to $V$. Now for any class $\CF$ of
graphs we can define a vector embedding $\eta_{\CF}:\CG\to\Real^{\CF}$ by
$\eta_{\CF}(G)(F)\coloneqq\hd(F,G)$. Note that $\eta_{\CF}(G)$
is in the subspace $\VF\subseteq\Real^{\CF}$ consisting of all bounded
vectors, that is, all $\vec v\in\Real^{\CF}$ such that for some
$k\in\Nat$ it holds that $|\vec v(F)|\le k$ for all $F\in\CF$. We
define an inner product on $\VF$ by
\[
  \anglesF{\vec v,\vec
    w}\coloneqq\sum_{k\in\Nat}\frac{1}{2^k|\CF_k|}\sum_{F\in\CF_k}\vec
  v(F)\vec w(F).
\]
This inner product gives us a graph kernel
$K_\CF(G,H)\coloneqq\anglesF{\eta_{\CF}(G),\eta_{\CF}(H)}$, and from
$K_\CF$ we can define a metric $\deltaF$ as in \eqref{eq:19}.
We obtain
\begin{equation}
  \label{eq:32}
    \deltaF(G,H)=\sqrt{\sum_{k\in\Nat}\frac{1}{2^k|\CF_k|}\sum_{F\in\CF_k}\big(\hd(F,G)-\hd(F,H)\big)^2}.
\end{equation}
With the interpretation of $\hd(F,G)$
as the probability that a randomly chosen mapping from the vertex set
of $F$ to the vertex set of $G$ is a homomorphism, it is easy to see that
$\hd(F,G)=\hd(F,G^{\odot k})$ for all $k$. Hence
$\deltaF(G,G^{\odot k})=0$ for all $\CF,G,k$.

Before studying the metrics $\deltaF$, we can try
to understand the equivalence relation induced by the classes of
graphs of mutual distance $0$. Two graphs $G,H$ are \emph{homomorphism
  indistinguishable over $\CF$} if $\hom(F,G)=\hom(F,H)$ for all
$F\in\CF$. To relate this to the metric $\deltaF$, we note that for
all graphs $G,H$ and all classes $\CF$ of graphs we have:
\begin{equation}
  \label{eq:21}
  \deltaF(G,H)=0\iff\parbox[t]{8cm}{there are $k,\ell\in\PNat$ such that
    $G^{\odot k}$ and $H^{\odot\ell}$ are homomorphism
    indistinguishable over $\CF$.}
\end{equation}
The backward direction of this equivalence follows from the fact that
$\deltaF(G,G^{\odot k})=0$ for all $G,k$.

Let us prove the forward direction. Assume $\deltaF(G,H)=0$.  If
$|G|=|H|$ then for all graphs $F$ it holds that
$\hom(F,G)=\hom(F,H)\iff\hd(F,G)=\hd(F,H)$. Thus if $\deltaF(G,H)=0$
then $G,H$ are homomorphism indistinguishable over $\CF$.

Otherwise, suppose $m\coloneqq |G|\neq|H|\eqqcolon n$. Observe that
\[
    0\le\deltaF(G^{\odot n},H^{\odot m})\le\deltaF(G,G^{\odot
      n})+\deltaF(G,H)+\deltaF(H,H^{\odot m})=0
  \]
  Thus $\deltaF(G^{\odot n},H^{\odot m})=0$. As $|G^{\odot
    n}|=|H^{\odot m}|$, this implies that $G^{\odot
    n}$ and $H^{\odot m}$ are homomorphism indistinguishable over $\CF$.

\begin{theorem}[Lovász~\cite{Lovasz67}]
  Two graphs $G,H$ are isomorphic if and only if they are homomorphism
  indistinguishable over the class $\CG$ of all graphs.
\end{theorem}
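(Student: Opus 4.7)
The forward implication is immediate: an isomorphism $\pi:V_G\to V_H$ conjugates homomorphisms, giving a bijection between $\{h:F\to G\}$ and $\{h:F\to H\}$ for every $F$, hence $\hom(F,G)=\hom(F,H)$.

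For the backward direction, assume $\hom(F,G)=\hom(F,H)$ for every $F\in\CG$. My plan is to first upgrade the assumption from arbitrary homomorphism counts to injective homomorphism counts $\operatorname{inj}(F,G)=\operatorname{inj}(F,H)$, and then extract an isomorphism by evaluating this identity at $F=G$. The key observation is that every homomorphism $h:F\to G$ factors uniquely through its fibre partition $P_h$ of $V_F$ as an injective homomorphism $F/P_h\hookrightarrow G$. Since $h$ cannot collapse the endpoints of an edge, $P_h$ must lie in the set $\Pi(F)$ of partitions of $V_F$ whose blocks are edge-free in $F$; and for any $P\in\Pi(F)$ the quotient $F/P$ is a well-defined simple graph and the homomorphisms $F\to G$ with fibre partition equal to $P$ correspond bijectively to injective homomorphisms $F/P\to G$. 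Grouping by partition yields the Möbius-type identity
\[
  \hom(F,G) \;=\; \sum_{P\in\Pi(F)} \operatorname{inj}(F/P,G).
\]

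From this identity I will prove $\operatorname{inj}(F,G)=\operatorname{inj}(F,H)$ by induction on $|V_F|$. Isolating the discrete partition $P_0$ into singletons (for which $F/P_0=F$) gives
\[
  \operatorname{inj}(F,G) \;=\; \hom(F,G) \;-\; \sum_{P\in\Pi(F)\setminus\{P_0\}} \operatorname{inj}(F/P,G),
\]
and every quotient $F/P$ on the right-hand side has strictly fewer vertices than $F$. Combined with the induction hypothesis and with $\hom(F,G)=\hom(F,H)$, this yields $\operatorname{inj}(F,G)=\operatorname{inj}(F,H)$ for every $F$.

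To finish, I apply this at $F=G$. Since the identity map on $V_G$ is an injective homomorphism $G\to G$, we have $\operatorname{inj}(G,G)\ge 1$, and hence $\operatorname{inj}(G,H)\ge 1$. Taking $F$ to be a single vertex and a single edge in the original hypothesis yields $|V_G|=|V_H|$ and $|E_G|=|E_H|$. An injective homomorphism $\pi:G\to H$ between vertex sets of equal size is a bijection that maps the $|E_G|$ edges of $G$ injectively into the $|E_H|=|E_G|$ edges of $H$, hence onto $E_H$; so $\pi$ is an isomorphism.

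The main obstacle is establishing the Möbius-type identity relating $\hom$ and $\operatorname{inj}$, which is purely combinatorial bookkeeping but requires some care: one has to restrict to partitions that are edge-free so that $F/P$ is loop-free, and read $F/P$ as a simple graph with parallel edges identified. Everything after that is a clean induction on $|V_F|$ and an elementary counting argument.
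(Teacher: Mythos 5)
Your proof is correct, and it is essentially the classical Lov\'asz argument: the paper itself gives no proof (it only cites Lov\'asz), but your key identity expressing $\hom(F,\cdot)$ as a sum of injective homomorphism counts of quotients $F/P$ over partitions into independent sets is exactly the relation the paper records later as \eqref{eq:24}, and the downward induction on $|V_F|$ plus the evaluation at $F=G$ together with vertex- and edge-count matching is the standard way to finish. The two delicate points --- that only independent-set partitions contribute because $G$ is loopless, and that an injective homomorphism between graphs with equally many vertices and edges is forced to be an isomorphism --- are both handled correctly.
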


Then from \eqref{eq:21}, \eqref{eq:18}, \eqref{eq:19} and the fact
that $\delta_{(1)}^{\odot}(G,H)\le \delta_{(1)}(G,H)$,
$\delta_{\square}^{\odot}(G,H)\le \delta_{\square}(G,H)$ for all graph
$G,H$ of the same order we obtain the following.

\begin{corollary}
  For all graph $G,H$ we have
  \[
    \delta_{\CG}(G,H)=0\iff\delta_{(1)}^{\odot}(G,H)=0\iff\delta_{\square}^{\odot}(G,H)=0.
  \]
\end{corollary}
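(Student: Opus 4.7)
The plan is to show that each of the three vanishing conditions is equivalent to $G^{\odot n} \cong H^{\odot m}$, where $m \coloneqq |G|$ and $n \coloneqq |H|$, from which the three-way equivalence follows immediately. The key observation making this work is that all three pseudo-metrics are blow-up invariant.

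First, I would reduce to graphs of the same order. For each $\delta \in \{\delta_\CG,\, \delta_{(1)}^\odot,\, \delta_\square^\odot\}$ we have $\delta(G, G^{\odot k}) = 0$ for every $k \ge 1$: for $\delta_\CG$ this follows from $\hd(F, G) = \hd(F, G^{\odot k})$ together with \eqref{eq:32}, while for $\delta_{(1)}^\odot, \delta_\square^\odot$ it is the observation made just before Example~\ref{exa:limit}. Hence, by the triangle inequality, $\delta(G, H) = 0 \iff \delta(G^{\odot n}, H^{\odot m}) = 0$, and the graphs $G^{\odot n}$ and $H^{\odot m}$ share the common order $mn$. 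So it suffices to prove each of the three conditions, for graphs $G', H'$ of the same order, is equivalent to $G' \cong H'$.

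For graphs of the same order, the argument then splits into three parallel pieces. For $\delta_\CG$: by \eqref{eq:32}, $\delta_\CG(G', H') = 0$ iff $\hd(F, G') = \hd(F, H')$ for every $F \in \CG$; since $|G'| = |H'|$, the normalizing factor $1/|G'|^{|F|}$ is the same on both sides, so this is iff $\hom(F, G') = \hom(F, H')$ for every $F$, which by Lovász's theorem is iff $G' \cong H'$. For $\delta_{(1)}^\odot$: one direction uses $\delta_{(1)}^\odot(G', H') \le \delta_{(1)}(G', H')$ (so $G' \cong H'$ forces $\delta_{(1)}^\odot(G', H') = 0$), and the converse uses \eqref{eq:180}, which gives $\widehat\delta_{(1)}(G', H') \le 3\,\delta_{(1)}^\odot(G', H') = 0$, hence $\delta_{(1)}(G', H') = 0$, hence $G' \cong H'$. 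The argument for $\delta_\square^\odot$ is strictly analogous, with $\delta_\square^\odot \le \delta_\square$ in one direction and \eqref{eq:18} in the other.

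Putting the pieces together, each of $\delta_\CG(G, H)$, $\delta_{(1)}^\odot(G, H)$, $\delta_\square^\odot(G, H)$ vanishes precisely when $G^{\odot n} \cong H^{\odot m}$, which yields the three-way equivalence. There is no serious obstacle here; the proof is a bookkeeping exercise combining the blow-up invariance of the three distances with the sandwich inequalities \eqref{eq:180} and \eqref{eq:18} and Lovász's homomorphism theorem. The one conceptual point worth highlighting is the use of blow-ups to reduce to a common order, since otherwise the cited inequalities (which assume equal orders) cannot be applied directly.
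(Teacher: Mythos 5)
Your proof is correct and takes essentially the same route as the paper: blow-up invariance plus the triangle inequality reduce everything to graphs of a common order, where Lovász's theorem handles $\delta_{\CG}$ (this is exactly the content of \eqref{eq:21}) and Theorem~\ref{thm:dot} together with $\delta^\odot\le\widehat\delta$ handles $\delta_{(1)}^\odot$ and $\delta_\square^\odot$. One tiny remark: $\delta^\odot(G,G^{\odot k})=0$ is not literally the observation before Example~\ref{exa:limit} (which concerns graphs of the same order), but it follows at once from the definition \eqref{eq:34}, since $(G^{\odot k})^{\odot \ell|G|}\cong G^{\odot k\ell|G|}$ makes every term of the defining sequence zero.
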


Many other natural equivalence relations on graphs (besides
isomorphism) can be characterised in terms of homomorphism
indistinguishability \cite{Dvorak10,DellGR18,MancinskaR20,gro20a,GroheRS21}. For example, it is not difficult to
see that two graphs are homomorphism indistinguishable over the class
of cycles if and only if they are co-spectral, that is, their adjacency
matrices have the same eigenvalues with the same multiplicities.
The following theorem relates homomorphism indistinguishability over the
class $\CT$ of all trees to the fractional relaxations of our graph metrics. 

\begin{theorem}[Dvorák~\cite{Dvorak10}, Tinhofer~\cite{Tinhofer91}]
 For all graphs $G,H$ the
  following are equivalent.
  \begin{enumerate}
  \item $G,H$ are homomorphism indistinguishable over the class of all
    trees.
  \item $G,H$ are \emph{fractionally isomorphic}, that is, there is a
    doubly stochastic matrix $Q$ such that $A_GQ=Q A_H$.
  \item The WL-algorithm does not distinguish $G,H$ and thus $\deltaWL(G,H)=0$. 
  \end{enumerate}
\end{theorem}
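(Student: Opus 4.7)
The plan is to prove the cycle $(3) \Rightarrow (2) \Rightarrow (3) \Rightarrow (1) \Rightarrow (3)$ with WL equivalence as the pivot. A direct proof of $(2) \Rightarrow (1)$ by induction on rooted trees succeeds for paths but breaks down at branching vertices, because a doubly stochastic $Q$ does not commute with the entrywise products of subtree hom-vectors arising in the recursion $\hom((T,r),(G,v)) = \prod_i \sum_{u \in N_G(v)} \hom((T_i,r_i),(G,u))$. Routing through WL avoids this.

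For $(3) \Rightarrow (2)$, let $\chi$ be the common stable WL colouring on $V_G \sqcup V_H$, and for each colour $c$ let $V_c \subseteq V_G$ and $W_c \subseteq V_H$ be the colour classes; by hypothesis $|V_c| = |W_c|$. Define the block-stochastic matrix $Q \in \Real^{V_G \times V_H}$ with entries $1/|V_c|$ on each block $V_c \times W_c$ and $0$ elsewhere. Then $Q$ is doubly stochastic, and $A_G Q = Q A_H$ follows by evaluating both sides at $(v,w')$ with $\chi(v) = c_0$, $\chi(w') = c_1$: stability collapses them to $d(c_0,c_1)/|V_{c_1}|$ and $d(c_1,c_0)/|V_{c_0}|$ respectively, which coincide by the double-counting identity $d(c_0,c_1)|V_{c_0}| = d(c_1,c_0)|V_{c_1}|$ for edges between colour classes.

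For $(3) \Rightarrow (1)$, I would induct on the depth of a rooted tree $(T,r)$: using the recursion above, $\hom((T,r),(G,v))$ depends only on the WL colour of $v$ after $\operatorname{depth}(T)$ iterations. Summing over $v$ expresses $\hom(T,G)$ as a linear function $\sum_c n_c(G)\,\beta_{T,c}$ of the iteration-$\operatorname{depth}(T)$ colour histogram, which agrees with that of $H$ by assumption.

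The hard converses are $(1) \Rightarrow (3)$ and $(2) \Rightarrow (3)$. For $(1) \Rightarrow (3)$ I would argue by contraposition: let $i$ be the first iteration at which the histograms differ; each iteration-$i$ colour $c$ unfolds canonically into a rooted colour tree $(T_c, r_c)$ of depth $i$, and the tree induction above yields $\hom((T_c, r_c),(G,v)) = \alpha_{c,\chi^i(v)}$. The matrix $(\alpha_{c,c'})$ of coefficients can be shown invertible, e.g.\ by a triangularity argument on the natural refinement order of colour trees, so a discrepancy in iteration-$i$ class sizes forces a discrepancy in $\hom(T_c,\cdot)$. For $(2) \Rightarrow (3)$ I would induct on the refinement step: assuming the iteration-$i$ histograms agree, apply $A_G Q = Q A_H$ to indicator vectors of the iteration-$(i+1)$ classes and exploit integrality of class sizes to force the refined classes to have matching sizes as well. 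This step is the main obstacle of the proof, since one must reconcile the continuous, convex-combination nature of $Q$ with the discrete refinement process of WL, even though the entries of $Q$ need not be concentrated on same-colour pairs.
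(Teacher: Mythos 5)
First, a remark on the comparison: the paper does not prove this theorem at all; it is quoted with citations to Dvo\v{r}\'ak and Tinhofer, so there is no in-paper argument to measure you against. Judged on its own terms, your proposal is correct and complete for the two easy implications: the construction of the block matrix $Q$ with entries $1/|V_c|$ on the blocks $V_c\times W_c$ does give a doubly stochastic matrix with $A_GQ=QA_H$ (the double-counting identity $d(c_0,c_1)|V_{c_0}|=d(c_1,c_0)|V_{c_1}|$ is exactly what is needed), and the induction showing that $\hom((T,r),(G,v))$ depends only on the iteration-$\mathrm{depth}(T)$ colour of $v$ settles $(3)\Rightarrow(1)$.

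The two hard converses, however, contain genuine gaps. For $(1)\Rightarrow(3)$, everything hinges on the invertibility of the matrix $\alpha_{c,c'}=\hom\big((T_c,r_c),(G,v')\big)$ (for $v'$ of iteration-$i$ colour $c'$), and the justification you suggest cannot work: this matrix is in general not triangular under any ordering, since its entries are typically all positive. Already at iteration $1$ it is the generalised Vandermonde matrix $\big((d')^{d}\big)$ indexed by the realised degrees, whose invertibility follows from Schur-positivity, not triangularity; for higher iterations no such elementary formula is available, and the standard way to close this gap is different: show by induction on $i$ that the root-homomorphism vectors of trees of depth at most $i$ span, inside $\Real^{V}$, exactly the functions constant on iteration-$i$ colour classes (the family is closed under gluing trees at the root, i.e.\ pointwise products, so its span is an algebra separating the classes, and a finite interpolation argument finishes), taking care to produce coefficients valid simultaneously for $G$ and $H$, e.g.\ by interpolating over the finite range of values occurring in the two graphs. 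For $(2)\Rightarrow(3)$, which you yourself flag as the main obstacle, the missing idea is Tinhofer's support lemma: if $A_GQ=QA_H$ with $Q$ doubly stochastic, then $Q(v,w)>0$ forces $v$ and $w$ to receive the same colour at every iteration. One proves it by induction: assuming the claim at iteration $i$, one gets $Q\mathbf{1}_{D_c}=\mathbf{1}_{C_c}$ and $\mathbf{1}_{C_c}^{\top}Q=\mathbf{1}_{D_c}^{\top}$ for each iteration-$i$ class, whence $(A_G\mathbf{1}_{C_c})(v)$ is a convex combination of the values $(A_H\mathbf{1}_{D_c})(w)$ over $w$ in the support of row $v$, and symmetrically; a maximum-propagation argument on the connected components of the support graph of $Q$ shows these neighbour-count functions are constant on each component, which is exactly the matching of iteration-$(i+1)$ colours, and the histogram equality then follows from double stochasticity. ``Integrality of class sizes'', which you propose to exploit, plays no role in this mechanism, so as it stands your plan for $(2)\Rightarrow(3)$ does not yet constitute a proof.
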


\begin{corollary}
  For all graph $G,H$ we have
  \[
    \delta_{\CT}(G,H)=0\iff\delta_{(1)}^*(G,H)=0\iff\delta_{\square}^*(G,H)=0.
  \]
\end{corollary}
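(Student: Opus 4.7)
The plan is to reduce all three conditions to the single statement that $G^{\odot n}$ and $H^{\odot m}$ are homomorphism indistinguishable over trees (where $m\coloneqq|G|$ and $n\coloneqq|H|$), and then apply the preceding Dvor\'ak--Tinhofer theorem to the equal-order pair $G^{\odot n},H^{\odot m}$.

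First I would show that $\delta^*_{(1)}(G,H)=0$ and $\delta^*_\square(G,H)=0$ are both equivalent to the existence of some $Q\in\DS^*(V,W)$ satisfying $\frac{1}{m}A_GQ=\frac{1}{n}QA_H$. For the entrywise $1$-norm this is immediate from the definition of $\delta^*_{(1)}$; for the cut norm it is enough to note that $\|M\|_\square=0$ forces $M=0$ (take $S=\{i\}$, $T=\{j\}$ in the definition of the cut norm).

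Next I would link such a $Q$ to ordinary fractional isomorphism of the blow-ups $G^{\odot n}$ and $H^{\odot m}$, which have common order $mn$. Given $Q\in\DS^*(V,W)$, the ``constant on copies'' matrix $Q'\big((v,i),(w,j)\big)\coloneqq Q(v,w)$ is genuinely doubly stochastic on the blow-up vertex sets, and a direct computation shows $A_{G^{\odot n}}Q'=Q'A_{H^{\odot m}}$ iff $\frac{1}{m}A_GQ=\frac{1}{n}QA_H$. Conversely, any doubly stochastic $Q'$ on the blow-ups averages down to such a $Q$: setting $Q(v,w)\coloneqq\frac{1}{mn}\sum_{i,j}Q'\big((v,i),(w,j)\big)$ produces a matrix in $\DS^*(V,W)$, and summing the identity $A_{G^{\odot n}}Q'=Q'A_{H^{\odot m}}$ over the blow-up indices yields the fractional-isomorphism equation for $G$ and $H$. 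Hence $\delta^*_{(1)}(G,H)=0$ iff $G^{\odot n}$ and $H^{\odot m}$ are fractionally isomorphic.

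Finally I would observe that $\delta_\CT(G,H)=0$ is equivalent to $G^{\odot n}$ and $H^{\odot m}$ being homomorphism indistinguishable over trees: this is the case $\CF=\CT$ of the equivalence \eqref{eq:21} already established in the text, applied with the specific choice $k=n$, $\ell=m$ (the backward direction uses $\delta_\CT(G,G^{\odot n})=0$ together with the triangle inequality). Combining the three reductions with Dvor\'ak--Tinhofer (applied to $G^{\odot n},H^{\odot m}$) closes the chain of equivalences. The main bookkeeping hazard is keeping the normalisation factors $\frac{1}{m},\frac{1}{n},\frac{1}{mn}$ straight when passing between $Q$ and $Q'$; once that is controlled, nothing else is delicate.
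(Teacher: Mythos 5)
Your proof is correct, and since the paper states this corollary without any proof (as an immediate consequence of the Dvor\'ak--Tinhofer theorem), your write-up supplies exactly the details that are left implicit; reducing all three conditions to the equal-order pair $G^{\odot n},H^{\odot m}$ is the intended route. The genuinely substantive step you add is the correspondence between matrices $Q\in\DS^*(V,W)$ satisfying $\frac{1}{m}A_GQ=\frac{1}{n}QA_H$ and doubly stochastic matrices $Q'$ on the blow-ups satisfying $A_{G^{\odot n}}Q'=Q'A_{H^{\odot m}}$, via the constant-on-copies lift and the averaging-down map; your normalisation bookkeeping is right (the lifted identity amounts to $nA_GQ=mQA_H$, which is the scaled equation, and the row/column sums of $Q'$ and of the averaged $Q$ come out to $1$ and to $1/m$, $1/n$ respectively). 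Two minor points: \eqref{eq:21} is stated existentially in $k,\ell$, so to get the specific pair $k=n$, $\ell=m$ you should invoke the triangle-inequality argument in its proof, which in fact works verbatim also when $|G|=|H|$ (alternatively, the equal-order case of the corollary follows directly from item (2) of the theorem after rescaling $Q$ by $1/n$, so the blow-up machinery is only needed when the orders differ); and your observation that $\|M\|_\square=0$ forces $M=0$ (take singletons $S,T$) is indeed all that is needed to handle $\delta^*_\square$ alongside $\delta^*_{(1)}$.
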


The previous two theorems are very interesting because they connect
operational distances like $\delta_{\square}^{(*)}$ and $\delta_{(1)}^{(*)}$ to
declarative distances like $\delta_{\CF}$ and $\deltaWL$. However, the
connection is weak, as it only connects the equivalence relations induced
by the respective metrics being $0$. Ideally, we would like to have
tight quantitive
bounds between the different metrics. A qualitative connection between
two metrics that goes significantly beyond just establishing that they
define the same equivalence relation is to prove that they define the
same topology. We say that two pseudo-metrics $\delta,\delta'$ on a
space $X$ 
\emph{define the same topology} if and only if for every $\epsilon>0$
there is an $\epsilon'>0$ such that
$\delta(x,y)\le\epsilon'\implies\delta'(x,y)\le\epsilon$ and
$\delta'(x,y)\le\epsilon'\implies\delta(x,y)\le\epsilon$. In other
words, the metrics define the same topology if the identity mapping is
a continuous mapping between the two metric spaces in both directions.

\begin{theorem}[Borgs et al.~\cite{BorgsCLSV08}, Böker~\cite{Boker21}] 
  \begin{enumerate}
  \item
    $\delta^{\odot}_\square$ and $\delta_{\CG}$ define the same
    topology.
  \item $\delta^{*}_\square$ and $\delta_{\CT}$ define the same
    topology.
  \end{enumerate}
\end{theorem}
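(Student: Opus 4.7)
The plan is to prove each part by combining a counting lemma with its converse. Because both $\delta_{\CG}$ and $\delta_{\CT}$ are $\ell_2$-series in $(\hd(F,G)-\hd(F,H))^2$ with geometrically decaying weights and summands bounded by $1$, truncating the series to graphs $F$ of size at most some $k_0$ contributes a tail at most $2^{-k_0/2}$, independently of $G,H$. Hence in each part the substantive content reduces to controlling $|\hd(F,G)-\hd(F,H)|$ for $F$ of bounded size, uniformly over such $F$.

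For the forward direction in part~(1), I would prove a counting lemma of the form: for every graph $F$ with $e$ edges,
\[
|\hd(F,G)-\hd(F,H)|\le C\cdot e\cdot\delta^{\odot}_{\square}(G,H)
\]
for some absolute constant $C$. Using Theorem~\ref{thm:dotot} to pass from $\delta^{\odot}_\square$ to $\delta^{\textup{OT}}_\square$ produces a near-optimal coupling $Q\in\DS^*(V_G,V_H)$. One then expands $\hd(F,G)-\hd(F,H)$ as a telescoping sum over the edges of $F$, where at each step one factor $A_G$ is replaced by a factor $A_H$, the replacement mediated by $Q$; each term is bounded by the cut-norm expression defining $\delta^{\textup{OT}}_\square$. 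Combining with the geometric tail yields $\delta_{\CG}(G,H)\le f(\delta^{\odot}_\square(G,H))$ for some continuous $f$ with $f(0)=0$.

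The converse is the main obstacle and requires the inverse counting lemma from graph limit theory. I would argue by compactness. Suppose for contradiction there are sequences $(G_n),(H_n)$ with $\delta_{\CG}(G_n,H_n)\to 0$ but $\delta^{\odot}_\square(G_n,H_n)\ge\epsilon>0$. By Lovász--Szegedy compactness of the graphon space under the cut metric (of which $\delta^{\odot}_\square$ is the appropriate finite-graph analogue), pass to a subsequence along which $G_n\to W_G$ and $H_n\to W_H$ in cut distance. Continuity of $t(F,\cdot)$ along cut-convergent sequences forces $t(F,W_G)=t(F,W_H)$ for every $F$, and the graphon uniqueness theorem then implies that $W_G,W_H$ are weakly isomorphic, hence at cut distance $0$, contradicting the lower bound on $\delta^{\odot}_\square(G_n,H_n)$. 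A direct quantitative version of this inverse counting lemma can alternatively be obtained via Frieze--Kannan weak regularity.

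Part~(2) follows the same two-step schema, with trees in place of arbitrary graphs and $\delta^{*}_\square$ in place of $\delta^{\odot}_\square$. The forward counting lemma for trees is easier, because $\hd(T,G)$ can be computed by iterated matrix-vector products involving $A_G$: for any $Q\in\DS^*(V_G,V_H)$, the quantity $\hd(T,G)-\hd(T,H)$ may be written as a telescoping sum whose terms feature the matrix $\tfrac{1}{m}A_GQ-\tfrac{1}{n}QA_H$, which is exactly what $\delta^{*}_\square$ controls. The inverse direction, due to Böker~\cite{Boker21}, again uses a compactness argument, now in the coarser quotient where graphs are identified when they are fractionally isomorphic; by the Dvořák--Tinhofer theorem cited above, the limit objects of this quotient are parametrised exactly by the iterated degree sequences (equivalently, Weisfeiler--Leman or tree-homomorphism invariants), which makes the compactness argument go through in this restricted setting. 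In both parts the forward counting lemma is routine once the right coupling $Q$ is in hand; the genuine work lies in the inverse counting lemma.
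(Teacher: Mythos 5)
The paper does not prove this theorem; it only cites \cite{BorgsCLSV08} and \cite{Boker21}, so your sketch has to be measured against the arguments in those sources. For part (1) your route is exactly the standard one: a counting lemma giving $|\hd(F,G)-\hd(F,H)|\le C\,e(F)\,\delta^{\odot}_\square(G,H)$ together with the geometric tail of the series defining $\delta_\CG$, and, for the converse, Lov\'asz--Szegedy compactness plus the uniqueness (weak isomorphism) theorem. Granting those two substantial cited theorems, this is fine; the one point to make explicit is that the telescoping needs the cut-type quantity on the \emph{coupled} space, i.e.\ a supremum over subsets $S,T\subseteq V\times W$ weighted by $Q\otimes Q$ (both kernels $A_G(v,v')$ and $A_H(w,w')$ then live on one probability space, so the usual edge-by-edge argument applies verbatim); the single absolute value in the paper's displayed formula for $\delta^{\textup{OT}}_\square$ is not enough for this.

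For part (2) there is a genuine gap in the forward direction, and you have misjudged where the work lies. Your telescoping does work for paths: converting one edge at a time (leaves upward), every error term has the form $h^\top\big(\tfrac1m A_GQ-\tfrac1n QA_H\big)\psi$ with $h\in[0,1]^V$, $\psi\in[0,1]^W$, and the box maximum of this bilinear form is the cut norm, hence at most $\delta^{*}_\square(G,H)$ for the optimal $Q$. But at a branch vertex the scheme breaks: after converting the subtrees hanging off a vertex one at a time, each converted subtree is transported back to the $G$-side through its \emph{own} application of the kernel $mQ$, so the hybrid carries $\prod_i (mQ\chi_i)(v)$, whereas the true $H$-side value requires $\big(mQ\prod_i\chi_i\big)(v)$; exchanging the pointwise product with $mQ$ is an extra step that is not of the form $h^\top(\tfrac1m A_GQ-\tfrac1n QA_H)\psi$ and is not bounded by $\delta^{*}_\square$ term by term. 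Already for $K_{1,3}$ your telescoping leaves you comparing $\Ex_v\big[(mQd_H)(v)^3\big]$ with $\Ex_w\big[d_H(w)^3\big]$ (where $d_H(w)=\deg_H(w)/n$), and Jensen only gives one inequality, with a gap that no single cut evaluation controls. It can be repaired: rectangles $S\times W$ and $V\times T$ show that $\|\tfrac1m A_GQ-\tfrac1n QA_H\|_\square\le\epsilon$ forces $d_G\approx mQd_H$ and $d_H\approx nQ^\top d_G$ in $L^1$, and a two-sided Jensen/Cauchy--Schwarz argument then yields $\Ex_{(v,w)\sim Q}|d_G(v)-d_H(w)|=O(\sqrt\epsilon)$, which must then be iterated over the levels of the tree. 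That, however, is precisely the nontrivial content of B\"oker's tree counting lemma (iterated degree measures in the sense of Greb\'ik--Rocha), it comes with quantitative losses, and it is not the routine edge telescoping you describe. Relatedly, for the inverse direction of (2) the finite Dvo\v{r}\'ak--Tinhofer theorem is not sufficient: the compactness argument needs its graphon analogue --- limit objects classified by iterated degree measures, and the fact that equal tree densities of graphons yield a coupling with vanishing intertwining cut norm --- again Greb\'ik--Rocha and \cite{Boker21}. So in part (2) both directions require real work, not just the inverse one.
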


Assertion (2) does not extend to the metric $\deltaWL$, but Böker et
al.\ \cite{BokerLHVM23} introduced a more refined metric based on
the Weisfeiler-Leman algorithm and optimal transport and showed that
it also defines the same topology as $\delta_\CT$. There is also
extension of the theorem from trees to graphs of bounded tree width
\cite{Boker23}.

\subsection{Subgraph Densities and Sampling Distance}
Most of the graph metrics considered so far suffer from the fact that
they are hard to compute (see Section~\ref{sec:complexity}). At least
intuitively, an easy way of comparing to graphs $G,H$ is by sampling
small subgraphs, say of order $k\le|G|,|H|$ and comparing the
distributions on graphs of order $k$ obtained this way when sampling
in $G$ and $H$, respectively. This is formalised in the \emph{sampling
 distance} we consider in this section. All results of this section
can be found, at least implicitly, in \cite{Lovasz12}.

Let $F=(U,E_F)$ and $G=(V,E_G)$ be graphs of orders $k\coloneqq|F|$, $n\coloneqq|G|$.
An \emph{embedding} of $F$ into $G$
is an injective homomorphism from $F$ to $G$. An embedding $h$ is
\emph{strong} if for all $u,u'\in U$ it holds that
$uu'\in E_F\iff h(u)h(u')\in E_G$. Observe that there is an embedding
of $F$ into $G$ if and only if $G$ has a subgraph isomorphism to $F$,
and there is a strong embedding of $F$ into $G$ if and only if $G$ has
an induced subgraph isomorphic to $F$. Similarly to the homomorphism
numbers and homomorphism densities, we define (induced) subgraph
numbers and densities: 
\begin{align*}
  \emb(F,G)&\coloneqq\big|\big\{h:U\to V\bigmid h\text{ is an
             embedding from $F$ into $G$}\big\}\big|,\\
  \semb(F,G)&\coloneqq\big|\big\{h:U\to V\bigmid h\text{ is a strong
             embedding from $F$ into $G$}\big\}\big|,\\
  \ed(F,G)&\coloneqq \Pr_{h\in\Inj(U,V)}(h\text{ is an embedding of $F$ into $G$}),\\
  \sd(F,G)&\coloneqq \Pr_{h\in\Inj(U,V)}(h\text{ is a strong embedding of $F$ into $G$}).
\end{align*}
Note that  $\ed(F,G)$ and $\sd(F,G)$ are only defined if $k\le
n$. We set $\ed(F,G)\coloneqq\sd(F,G)\coloneqq0$ if $k>n$.

Clearly,
\[
  \sd(F,G)\le\ed(F,G).
\]
If we want to compare $\ed(F,G)$ and $\sd(F,G)$ with $\hd(F,G)$, we
have to deal with non-injective mappings. Observe that 
\[
  \Pr_{h\in V^U}(h\text{ is not
    injective})\le\binom{k}{2}\frac{1}{n}.
\]
which is below $\epsilon>0$ for $n\ge\frac{k(k-1)}{2\epsilon}$.
Thus for $\epsilon>0$ and $n\ge\frac{k(k-1)}{2\epsilon}$ we have
\begin{equation}
  \label{eq:27}
  \sd(F,G)\le\ed(F,G)\le\hd(F,G)+\Pr_{h\in V^U}(h\text{ is not
    injective})=\hd(F,G)+\epsilon.
\end{equation}

\begin{example}
  Assume $2\le k\le n$.
  \begin{enumerate}
  \item
  Let $F=K_k$ and
  $G=K_n$ be complete graphs. Then
  \[
    \hd(F,G)=\Pr_{h\in [n]^{[k]}}(h\text{ is
      injective})<1=\ed(F,G)=\sd(F,G).
  \]
  \item Let $F=([k],\emptyset)$ be an edgeless graph and $G$ a
    complete graph. Then $\hd(F,G)=\ed(F,G)=1$ and $\sd(F,G)=0$.
  \end{enumerate}
\end{example}

The previous example shows that we cannot bound $\ed(F,G)$ or $\hd(F,G)$ in terms of
$\sd(F,G)$.
However, we can express both homomorphism numbers and embedding
numbers in terms of strong embedding numbers (see, e.g., \cite[Section~5.2.3]{Lovasz12}). Consider graphs
$F=(U,E_F)$ and $G=(V,E_G)$. Then 
\begin{equation}
  \label{eq:23}
    \emb(F,G)=\sum_{E'\subseteq\binom{U}{2}\text{ with }E_F\subseteq
    E'}\semb\big((U,E'),G\big).
\end{equation}
To express homomorphism counts, for a graph $F$ and a partition $P$ of $V_F$ we define the
\emph{quotient graph} $F/P$ to be the graph with vertex set $P$ and
edges $pq$ for all $p,q\in P$ such that there is an edge $vw\in E_F$
with $v\in p, w\in q$. The graph $F/P$ is not necessarily a simple
graph; it may have loops.
We have
\begin{align}
  \label{eq:24}
  \hom(F,G)&=\sum_{P\text{ partition of }U}\emb(F/P,G)\\
  \label{eq:25}
  &=\sum_{P\text{ partition of }U}\sum_{E'\subseteq\binom{P}{2}\text{
    with }E_{F/P}\subseteq E'}\semb\big((P,E'),G\big).
\end{align}
Note that if $F/P$ has loops, then $\emb(F/P,G)=0$. This means that we
can restrict the sum to partitions $P$ into independent sets of $F$,
that is, for all parts
$p\in P$ and all $v,w\in p$ it holds that $vw\not\in E_F$.

Conversely, we can express $\emb$ and $\semb$ in terms of $\hom$:
\begin{equation}
  \label{eq:30}
  \emb(F,G)=\sum_{P\text{ partition of }U}\mu_P\hom(F/P,G),
\end{equation}
where $\mu_P=(-1)^{n-|P|}\prod_{p\in P}\big(|p|-1\big)!$, and
\begin{equation}
  \label{eq:31}
  \semb(F,G)=\sum_{E'\subseteq\binom{U}{2}\text{ with }E_F\subseteq
    E'}(-1)^{|E'|-|E_F|}\emb\big((U,E'),G\big).
\end{equation}

\begin{lemma}\label{lem:sd-ed-hd}
  Let $G,H$ be graphs and $\epsilon\ge 0$, $k\in\Nat$.
  \begin{enumerate}
  \item If for all  $F\in\CG_k$ it
  holds that $\big|\sd(F,G)-\sd(F,H)\big|\le\epsilon$, then for all $F\in\CG_k$ it
  holds that $\big|\ed(F,G)-\ed(F,H)\big|\le 2^{O(k^2)}\epsilon$. 
  \item If for all  $F\in\CG_{\le k}$ it
  holds that $\big|\sd(F,G)-\sd(F,H)\big|\le\epsilon$, then for all $F\in\CG_k$ it
  holds that $\big|\hd(F,G)-\hd(F,H)\big|\le 2^{O(k^2)}\epsilon$. 
  \item If for all  $F\in\CG_{\le k}$ it
  holds that $\big|\hd(F,G)-\hd(F,H)\big|\le\epsilon$, then for all $F\in\CG_k$ it
  holds that $\big|\ed(F,G)-\ed(F,H)\big|\le 2^{O(k^2)}\epsilon$ and $\big|\sd(F,G)-\sd(F,H)\big|\le 2^{O(k^2)}\epsilon$. 
  \end{enumerate}
\end{lemma}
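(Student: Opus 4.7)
The plan is to exploit the count identities \eqref{eq:23}--\eqref{eq:31}, each of which expresses one of $\hom$, $\emb$, $\semb$ as a signed sum of the others with at most $2^{O(k^2)}$ terms and coefficients of absolute value at most $2^{O(k^2)}$. Dividing by the appropriate normalisation turns each into a linear relation among the densities $\sd$, $\ed$, $\hd$, so pointwise bounds on the differences propagate by the triangle inequality.

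For (1), I divide both sides of \eqref{eq:23} by $|G|(|G|-1)\cdots(|G|-k+1)$ and analogously for $H$ to obtain
\[
  \ed(F,G) \;=\; \sum_{E'\subseteq\binom{U}{2},\,E_F\subseteq E'} \sd\big((U,E'),G\big).
\]
All summands lie in $\CG_k$, there are at most $2^{\binom{k}{2}}$ of them, and the triangle inequality combined with the hypothesis yields the claim.

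For (2), I first invoke (1) to get $|\ed(F',G) - \ed(F',H)| \le 2^{O(k^2)}\epsilon$ for every $F' \in \CG_{\le k}$. Dividing \eqref{eq:24} by $|G|^k$ gives
\[
  \hd(F,G) \;=\; \sum_{P}\alpha_P(|G|)\,\ed(F/P,G),
\]
where $P$ ranges over partitions of $V_F$ and $\alpha_P(n) = n(n-1)\cdots(n-|P|+1)/n^k \in [0,1]$ is the probability that a uniform $h\in V^U$ has fibres exactly $P$ (summands for which $F/P$ carries a loop contribute $0$). The number of partitions of a $k$-element set is the Bell number $B_k \le k^k = 2^{O(k\log k)}$, so the triangle inequality closes the argument.

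For (3), dividing \eqref{eq:30} by $|G|(|G|-1)\cdots(|G|-k+1)$ expresses
\[
  \ed(F,G) \;=\; \sum_P \mu_P \,\frac{|G|^{|P|}}{|G|(|G|-1)\cdots(|G|-k+1)}\, \hd(F/P,G),
\]
valid for $|G|\ge k$ (and for $|G|<k$ both sides vanish). Under $|G|\ge k$ every coefficient is bounded in absolute value by $k!\cdot k^k/k! = k^k = 2^{O(k\log k)}$, so hypothesis and triangle inequality yield the $\ed$-bound; the $\sd$-bound follows by chaining with the density form of \eqref{eq:31},
\[
  \sd(F,G) \;=\; \sum_{E'\subseteq\binom{U}{2},\,E_F\subseteq E'}(-1)^{|E'|-|E_F|}\,\ed\big((U,E'),G\big),
\]
and plugging in the $\ed$-estimate. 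The main technical obstacle is keeping this coefficient bookkeeping tight across the chained applications so that the bound stays $2^{O(k^2)}\epsilon$; a secondary subtlety in (2) and (3) when $|G|\neq|H|$ is that the coefficients $\alpha_P(n)$ and $n^{|P|}/(n(n-1)\cdots(n-k+1))$ depend on $n$, so subtraction produces an extra term of the form $(\alpha_P(|G|)-\alpha_P(|H|))\,\ed(F/P,H)$ reflecting the $O(k^2/n)$ slack already exposed by \eqref{eq:27}, which is most cleanly absorbed when $|G|$ and $|H|$ are of comparable order.
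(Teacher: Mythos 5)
Your proposal is correct and takes essentially the same route as the paper: the paper's proof likewise derives (1) and (2) from \eqref{eq:23} and \eqref{eq:24}/\eqref{eq:25} and (3) from \eqref{eq:30} and \eqref{eq:31}, using only that $|\CG_{\le k}|$, the number of partitions of a $k$-element set, and the coefficients $\mu_P$ are all $2^{O(k^2)}$; you merely make the passage from counts to densities explicit. The order-dependence of the normalising coefficients that you flag for $|G|\neq|H|$ is a genuine subtlety, but the paper's own one-line argument glosses over it in exactly the same way (it only matters when $|G|$ or $|H|$ is small relative to $1/\epsilon$, a regime excluded where the lemma is applied).
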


\begin{proof}
  Assertions (1) and (2) follow immediately from \eqref{eq:23} and
  \eqref{eq:25}, because $|\CG_{\le k}|=2^{O(k^2)}$ and the number of
  partitions of a $k$-element set is also $2^{O(k^2)}$. For assertion
  (3), we apply \eqref{eq:30} and \eqref{eq:31}, additionally
  observing that the absolute value of the coefficients $\mu_P$ in
  \eqref{eq:30} is bounded by $2^{O(k^2)}$.
\end{proof}

Lovász~\cite{Lovasz12} defines the \emph{sampling distance} between
two graphs based on the total variation distance of the distributions
of induced subgraphs sampled from the two graphs.

The \emph{total variation distance} between two probability
distributions $p,q$ on the same space $\Omega$, which for simplicity we
assume to be finite here, is
\[
  \dist_{\textup{TV}}(p,q)\coloneqq\frac{1}{2}\sum_{\omega\in\Omega}|p(\omega)-q(\omega)|.
\]
Note that 
  $\dist_{\textup{TV}}(p,q)=\frac{1}{2}\|p-q\|_1$ if we regard $p,q$ as
  vectors in $\Real^\Omega$. It is also not hard to see
(cf.~\cite[Section~12.1]{MitzenmacherU17}) that
\[
  \dist_{\textup{TV}}(p,q)=\sup_{A\subseteq\Omega}|p(A)-q(A)|,
\]
where $p(A)\coloneqq\sum_{\omega\in A}p(A)$ and similarly for $q$.

Let $k\in\Nat$. Then
with every graph $G$ of order $|G|\ge k$
we associate a probability distribution $p_{G,k}$ on $\CG_k$ defined by
\[
  p_{G,k}(F)\coloneqq\sd(F,G)=\Pr_{h\in\Inj(U,V)}\big(h:F\cong G[h(U)]\big).
\]
Thus $p_{G,k}(F)$ is the probability that the induced subgraph on
a sample of $k$ vertices without repetitions is $F$.
To avoid cumbersome case distinctions, for graphs $G$ of order
$|G|<k$ we define $p_{G,k}$ by letting
$p_{G,k}\big(([k],\emptyset)\big)\coloneqq1$
and $p_{G,k}\big(F\big)\coloneqq0$ for all graphs with at least one
edge.

Then we define the
\emph{sampling distance} between graphs $G$ and $H$ to be
\[
  \dsamp(G,H)\coloneqq\sum_{k\in\PNat}\frac{1}{2^k}\dist_{\textup{TV}}(p_{G,k},p_{H,k})=\sum_{k\in\PNat}\frac{1}{2^{k+1}}\sum_{F\in\CG_k}\big|p_{G,k}(F)-p_{H,k}(F)\big|.
\]
Even though the normalising factor $\frac{1}{2^k}$ is not necessary to
guarantee convergence, because $\delta_{TV}(p_{G,k},p_{H,k})=0$ for
all $k>\max\{|G|,|H|\}$, it is useful to make sure that the sum is
dominated by the values at small $k$.

\begin{example}
  Let $G\coloneqq K_m$ and $H\coloneqq K_n$, where $m<n$. Observe
  that for $m<k\le n$ we have $p_{G,k}(K_k)=0$ and $p_{H,k}(K_k)=1$
  and, in fact, $\delta_{\textup{TV}}(p_{G,k},p_{H,k})=1$. However, we
  think of the complete graphs $K_m$ and $K_N$ to be similar for large
  $m,n$, even if $m,n$ are far apart.
\end{example}

\begin{theorem}
  $\delta_\CG$ and $\dsamp$ define the same topology.
\end{theorem}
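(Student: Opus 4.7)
The plan is to use Lemma~\ref{lem:sd-ed-hd} as the bridge between differences of homomorphism densities $\hd$ and differences of strong-embedding densities $\sd$ at each fixed order $k$, and to combine this with the geometric weights $2^{-k}$ in both $\delta_\CG$ and $\dsamp$ via a cutoff argument. Note that $|\CG_k|$ and the factor from Lemma~\ref{lem:sd-ed-hd} are both $2^{O(k^2)}$; the point is that after fixing a cutoff these are just constants, while the tails of the two series decay geometrically.

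For the implication $\dsamp(G,H)\le\epsilon'\implies\delta_\CG(G,H)\le\epsilon$, I would first extract, from non-negativity of the summands in $\dsamp$, that $|\sd(F,G)-\sd(F,H)|\le 2^{k+1}\epsilon'$ for every $k$ and every $F\in\CG_k$. Lemma~\ref{lem:sd-ed-hd}(2) then gives $|\hd(F,G)-\hd(F,H)|\le 2^{O(k^2)}\epsilon'$ for every $F\in\CG_k$. Splitting
\[
\delta_\CG(G,H)^2=\sum_{k}\frac{1}{2^k|\CG_k|}\sum_{F\in\CG_k}\bigl(\hd(F,G)-\hd(F,H)\bigr)^2
\]
at some cutoff $K$, using the $2^{O(k^2)}$ bound for $k\le K$ and the trivial bound $|\hd(F,G)-\hd(F,H)|\le 1$ for $k>K$, yields $\delta_\CG(G,H)^2\le 2^{O(K^2)}\epsilon'^2+O(2^{-K})$. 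Given $\epsilon$, first choose $K$ so the tail is at most $\epsilon^2/2$, then $\epsilon'$ so the head is at most $\epsilon^2/2$.

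The converse $\delta_\CG(G,H)\le\epsilon'\implies\dsamp(G,H)\le\epsilon$ is symmetric. Since every term in the series defining $\delta_\CG(G,H)^2$ is non-negative, $|\hd(F,G)-\hd(F,H)|\le \sqrt{2^k|\CG_k|}\,\epsilon'=2^{O(k^2)}\epsilon'$ for every $F\in\CG_k$; applying Lemma~\ref{lem:sd-ed-hd}(3) at each level $k$ gives $|\sd(F,G)-\sd(F,H)|\le 2^{O(k^2)}\epsilon'$ for every $F\in\CG_k$. Splitting $\dsamp(G,H)$ at a cutoff $K$, the head contributes $\sum_{k\le K}2^{-(k+1)}|\CG_k|\cdot 2^{O(K^2)}\epsilon'\le 2^{O(K^2)}\epsilon'$, while for the tail I use that $p_{G,k},p_{H,k}$ are probability distributions on $\CG_k$, so $\sum_{F\in\CG_k}|p_{G,k}(F)-p_{H,k}(F)|\le 2$ by the triangle inequality, making the tail bounded by $\sum_{k>K}2^{-k}=2^{-K}$. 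Choosing $K$ and then $\epsilon'$ as before closes the argument. The main point to watch is the order of quantifiers: the factor $2^{O(K^2)}$ is harmless only because $\epsilon'$ is picked after $K$, so it can be made as tiny as needed to defeat any fixed constant depending on $K$.
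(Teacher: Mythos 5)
Your proposal is correct and takes essentially the same route as the paper: transfer bounds between $\sd$ and $\hd$ levelwise via Lemma~\ref{lem:sd-ed-hd}, cut both geometrically weighted series at a threshold depending on $\epsilon$, and choose $\epsilon'$ only after fixing the cutoff (the paper fixes the cutoff $k_\epsilon=\ceil{\log\frac{2}{\epsilon}}$ explicitly, argues one direction in detail and sketches the converse, which you spell out symmetrically). The one caveat, which your write-up shares with the paper's, is the identification $p_{G,k}(F)=\sd(F,G)$, valid only for $k\le|G|$; like the paper, your argument implicitly assumes both graphs have order exceeding the cutoff.
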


\begin{proof}
  Let $G=(V,E_G)$, $H=(W,E_H)$ be graphs and $m\coloneqq|G|$,
  $n\coloneqq|H|$. Let $0<\epsilon\le 1$, and let
  $k_\epsilon\coloneqq\ceil{\log\frac{2}{\epsilon}}$. Without loss
  of generality, we may assume that $m,n>k_\epsilon$.

  Assume first that
  \begin{equation}
    \label{eq:29}
    \delta_{\CG}(G,H)\le\sqrt{\frac{\epsilon}{2^{ck_\epsilon^2}}}
  \end{equation}
  for a sufficiently large constant $c$ that we will determine later. We
  shall prove that $\dsamp(G,H)\le\epsilon$. Since
  $\dist_{\textup{TV}}(p_{G,k},p_{H,k})\le 1$ for all $k$, we have
  \[
    \sum_{k>k_\epsilon}\frac{1}{2^k}\dist_{\textup{TV}}(p_{G,k},p_{H,k})\le
    \sum_{k>k_\epsilon}\frac{1}{2^k}=\frac{1}{2^{k_\epsilon}}\le\frac{\epsilon}{2}.
  \]
  Thus it suffices to prove that for every $k\le k_\epsilon$ and every
  $F\in\CG_k$ we have
  \begin{equation}
    \label{eq:28}
    \big|\sd(F,G)-\sd(F,H)\big|=\big|p_{G,k}(F)-p_{H,k}(F)\big|\le\frac{\epsilon}{|\CG_k|}.
  \end{equation}
  Indeed, \eqref{eq:28} implies
  $\dist_{\textup{TV}}(p_{G,k},p_{H,k})\le\frac{\epsilon}{2}$ and thus
  \[
    \sum_{1\le k\le k_\epsilon}\frac{1}{2^k}\dist_{\textup{TV}}(p_{G,k},p_{H,k})\le
    \frac{\epsilon}{2}\sum_{k\ge 1}\frac{1}{2^k}=
    \frac{\epsilon}{2}
  \]
  So let $1\le k\le k_{\epsilon}$ and $F\in\CG_k$. Without loss of
  generality, we assume that $\sd(F,G)\ge\sd(F,H)$. Since $m\ge
  \frac{1}{\epsilon}2^{k_\epsilon^2}\ge
  \frac{k(k-1)|\CG_k|}{\epsilon}$ we have
  \[
    \Pr_{h\in V^{[k]}}\big(h\text{ is not
      injective}\big)\le\frac{\epsilon}{2|\CG_k|}
  \]
  and thus
  \[
    \sd(F,G)\le \hd(F,G)+\frac{\epsilon}{2|\CG_k|}.
  \]
  By Lemma~\ref{lem:sd-ed-hd}(3), to prove \eqref{eq:28} it suffices
  to prove that for all $k'\le k$ and all $F'\in\CG_{k'}$ it holds
  that
  \[
    \big|\hd(F',G)-\hd(F',H)\big|\le \frac{\epsilon}{2^{c'k^2}}
  \]
  for a suitable constant
  $c'$. Suppose for contradiction that for some $k'\le
  k$ and $F'\in\CG_{k'}$ we have
  \[
    \big|\hd(F',G)-\hd(F',H)\big|> \frac{\epsilon}{2^{c'k^2}}
  \]
  Then
  \begin{align*}
    \delta_{\CG}(G,H)^2&=\sum_{k\in\Nat}\frac{1}{2^k|\CG_k|}\sum_{F\in\CG_k}\big(\hd(F,G)-\hd(F,H)\big)^2\\
    &\ge \frac{1}{2^{(k')^2}}\big|\hd(F',G)-\hd(F',H)\big|> \frac{\epsilon}{2^{(c'+1)k^2}},
  \end{align*}
  which contradicts \eqref{eq:29} for $c=c'+1$.

  \medskip
  The proof of the converse direction is very similar. We assume that
  $
    \dsamp(G,H)\le\frac{\epsilon}{2^{ck_\epsilon^2}}
  $
  for a sufficiently large $c$. Then we show that
  $\delta_{\CG}(G,H)\le\epsilon$ using Lemma~\ref{lem:sd-ed-hd}(2).
\end{proof}

\subsection{Logic and Games}
Another natural way of defining declarative similarities between
graphs is through logical equivalence. Here, we briefly discuss some
observations about this approach linking it to what we discussed
before. Related work can, for example, be found in \cite{bacbaclarmar19,BauerBKR08,larfahthr11,YingW00}.

For a
logic $\LL$,\footnote{For our purposes, it is sufficient to think of a
  logic $\LL$ as a set of \emph{sentences} together with a binary
  \emph{satisfaction relation} $\models\subseteq\CG\times\LL$ that is
  isomorphism invariant in the sense that for isomorphic graphs $G,H$
  and $\phi\in\LL$
  we have $G\models\phi\iff H\models\phi$. If $G\models\phi$ we say
  that $G$ \emph{satisfies} $\phi$ (see \cite{ebb85} for background on
  abstract logics).} we define the \emph{$\LL$-equivalence relation} $\equiv_\LL$,
where two graphs are equivalent if and only if they satisfy the same
sentences of the logic $\LL$.  For first-order logic $\textsf{FO}$,
this just yields the isomorphism relation, but for other logics, we
obtain interesting new relations such as \emph{bisimilarity} for
modal logic $\textsf{ML}$ or indistinguishability by the
Weisfeiler-Leman algorithm for the 2-variable fragment $\LC^2$ of
first-order logic with counting.

We can use the equivalence relation $\equiv_\LL$ to define a trivial
graph metric $\delta_\LL$ by letting $\delta_\LL(G,H)\coloneqq0$ if
$G\equiv_{\LL} H$ and $\delta_\LL(G,H)\coloneqq1$ otherwise. Note that
$\delta_{\FO}$ is just the isomorphism distance
$\delta_{\cong}$. To obtain ``real'' quantitative metrics from logical
equivalence, we can stratify our logic $\LL$ into a family
$(\LL_k)_{k\ge 1}$ such that $\LL_k\subseteq\LL_{k+1}$ for all $k$
and $\bigcup_k\LL_k=\LL$. Then, we can define a metric
$\delta_{(\LL_k)}$ by $\delta_{(\LL_k)}(G,H)\coloneqq 0$ if
$G\equiv_{\LL} H$ and $\delta_{(\LL_k)}(G,H)\coloneqq\frac{1}{k}$ for
the least $k$ such that $G\not\equiv_{\LL_k}H$.
For example, for first-order logic, we can
consider the stratifications by quantifier rank or by the number of
variables. In fact, the metric $\delta_{(\LL_k)}$ is an
\emph{ultrametric} with a stronger form triangle equation:
$\delta_{(\LL_k)}(F,H)\le\max\big\{\delta_{(\LL_k)}(F,G),
\delta_{(\LL_k)}(G,H)\big\}$ for all graphs $F,G,H$.

Often, equivalence in a logic can be characterised in terms of
so-called Ehrenfeucht-\Fraisse\ games, with parameters like ``number of
rounds'' or ``number of pebbles'' naturally corresponding to
stratifications of the logic by quantifier rank or number of
variables. This way, the games naturally align with the metrics. In fact, since the games establish some kind of
correspondence between the two graphs, we may view them as
operational counterparts of our declarative logical distances.

Another generic way of defining a graph metric based on a logic $\LL$
is by using the embedding approach: we define a graph embedding
$\eta_{\LL}:\CG\to\Real^{\LL}$ by $\eta_{\LL}(G)(\phi)\coloneqq 1$ if
$G$ satisfies $\phi\in\LL$ and $\eta_{\LL}(G)(\phi)\coloneqq 0$
otherwise. Then we can define an inner product on the subspace
$\mathbb V_\LL$ of all bounded vectors in $\Real^{\LL}$, obtain a
graph kernel, and use it to define a metric.

It is an interesting research project to explore such logically
defined metrics and understand how they relate to the other metrics we defined.

\section{Complexity}
\label{sec:complexity}
One may view similarity as ``approximate isomorphism''
(cf.~\cite{ArvindKKV12}), which at first sight may suggest that it is
an easier problem than the notorious graph isomorphism problem. But
while graph isomorphism is in quasi-polynomial time \cite{Babai16} and
for various other reasons unlikely to be NP-complete, similarity
with respect to most metrics turns out to be a much harder
problem.

For the operational distances, intuitively, we have to optimise over
the unwieldy set of all bijections to find a good alignment
between two graphs, while losing the group structure that gives us a
handle on the isomorphism problem.
To illustrate the hardness, let us argue that the problem of deciding
whether the edit distance between graphs $G,H$ is at most $k$ is
NP-complete. We reduce from the Hamiltonian-Cycle problem: a graph $G$
with $n$ vertices and $m$ edges has a Hamiltonian cycle if and only if
the edit distance between $G$ and a cycle of length $n$ is at most
$m-n$. Using more refined arguments, it can be shown that edit
distance is even hard if both input graphs are trees \cite{GroheRW18},
and it is also hard to approximate~\cite{ArvindKKV12}. Similar results
hold for most other graph metrics based on matrix
norms~\cite{GervensG22}.  Not much is known about the complexity of
the graph metrics based on optimal transport (e.g.\
$\delta_{(1)}^{\textup{OT}}$) or the blow-operation (e.g.\
$\delta_\square^{\odot}$), though I conjecture that they are also hard
to compute. Only the convex relaxations (e.g.\ $\delta_{(1)}^*$) can be computed
efficiently using gradient descent methods. There is an interesting
connection between the Weisfeiler-Leman algorithm and a standard
gradient descent algorithm, the Franke-Wolfe algorithm
\cite{KerstingMGG14}.

Declarative distances tend to be easier to compute or at least
approximate. Graph kernels are usually designed to
be computed efficiently, the Weisfeiler-Leman kernels being a case in
point. Distances based on homomorphism densities or homomorphism
counting can at least be approximated by sampling. Computing them
exactly may be hard, though, even impossible. Just consider
homomorphism indistinguishability, that is, the problem of deciding
whether two graphs $G,H$ have distance $\delta_{\CF}(G,H)=0$ with
respect to some class $\CF$. The complexity of this problem varies
wildly from polynomial time to undecidable \cite{BokerCGR19}. For
example, for the class of all paths, the class of all cycles, the class of
all trees, or the classes of all graphs of tree width $k$, it is in
polynomial time \cite{Dvorak10,DellGR18} (also see
\cite{Seppelt24a}). For the class of all complete graphs, it is hard
for the complexity class C$_=$P \cite{BokerCGR19}, and for the class
of all planar graphs, it is undecidable \cite{MancinskaR20}. And of
course, for the class of all graphs, it is just the graph isomorphism
problem, which is in quasi-polynomial time \cite{Babai16}.

\section{Concluding Remarks}
In this paper, we discussed various approaches to measuring the
distance, or similarity, between graphs. Our focus was on the question
of how the different approaches relate to one another. Many interesting
questions remain open, both ``expressiveness'' questions about the
relation between different methods and algorithmic questions on
how to compute or approximate the metrics efficiently.

\printbibliography

\end{document}